\newtheorem{mylemma}[]{Lemma}
\newtheorem{theorem}{Theorem}
\newtheorem{lemma}{Lemma}
\newtheorem{corollary}{Corollary}
\theoremstyle{definition}
\newtheorem{definition}{Definition}
\newenvironment{claim}[1]{\par\noindent\underline{Claim:}\space#1}{}
\def\BState{\State\hskip-\ALG@thistlm}
\journal{}
\begin{document}

\begin{frontmatter}



\title{A Fast Distributed Solver for Symmetric Diagonally Dominant Linear Equations} 


\author{Rasul Tutunov$^{\dagger}$, Haitham Bou Ammar$^{\dagger}$, Ali Jadbabaie$^{\nmid}$}

\address{\{tutunov, haithamb, jadbabai\} @seas.upenn.edu\\ 
$\dagger:$ University of Pennsylvania, Computer and Information Science Dept.\\
Levine Hall 
3330 Walnut Street 
Philadelphia, PA 19104-6309 \\
$\nmid:$ University of Pennsylvania, Dept. of Electrical and System Engineering\\ Levine Hall 
3330 Walnut Street 
Philadelphia, PA 19104-6309 \\
}

\begin{abstract}
In this paper, we propose a fast distributed solver for linear equations given by symmetric diagonally dominant M-Matrices.  Our approach is based on a distributed implementation of the parallel solver of Spielman and Peng by considering a specific approximated inverse chain which can be computed efficiently in  a distributed fashion. 
Representing the system of equations by a graph $\mathbb{G}$, the proposed distributed algorithm is capable of attaining $\epsilon$-close solutions (for arbitrary $\epsilon$) in time propotional to $n^{3}$ (number of nodes in $\mathbb{G}$), $\bm{\alpha}$ (upper bound on the size of the R-Hop neighborhood), and $\frac{\bm{W}_{\text{max}}}{\bm{W}_{\text{min}}}$ (maximum and minimum weights of edges in $\mathbb{G}$). 
\end{abstract}



\end{frontmatter}


\newpage


\section{Introduction}
Solving systems of linear equations in symmetric diagonally matrices (SDD) is of interest to researchers in a variety of fields including but not limited to, solutions to partial differential equations~\cite{ExamplesOne}, computations of maximum flows in graphs~\cite{ExamplesTwo}, machine learning~\cite{Zhu03semi-supervisedlearning}, and   as basis for various algorithms~\cite{ExamplesThree}. 

Much interest has been devoted to determining fast algorithms for solving SDD systems. Spielman and Teng~\cite{DaanS} proposed a nearly linear-time algorithm for solving SDD systems, which benefited from the multi-level framework of~\cite{Reif199837,joshi}, preconditioners~\cite{citeulike}, and spectral graph sparsifiers~\cite{Batson,Spiel}. Further exploiting these ingredients, Koutis \textit{et. al}~\cite{Koutis, Koutis2} developed an even faster algorithm for acquiring $\epsilon$-close solutions to SDD linear systems. Further improvements have been discovered by Kelner \textit{et. al}~\cite{Kelner}, where their algorithm relied on only spanning-trees and eliminated the need for graph sparsifiers and the multi-level framework.  

On the parallel side, much progress has been made on developing  such solvers. Koutis and Miller~\cite{KoutisMiller07} proposed an algorithm requiring nearly-linear work and $m^{\sfrac{1}{6}}$ depth for planar graphs. This was then extended to general graphs by~\cite{Blelloch} leading to depth close to $m^{\sfrac{1}{3}}$. Peng and Spielman~\cite{Spielman} have proposed an efficient parallel solver requiring nearly-linear work and poly-logarithmic depth without the need for low-stretch spanning trees. Their algorithm, which we distribute in this paper, requires sparse approximate inverse chains~\cite{Spielman} which facilitates the solution of the SDD system. 

Less progress, on the other hand, has been made on the distributed version of these solvers. Current methods, e.g., Jacobi iteration~\cite{Axelsson,Bertsekas:1989:PDC:59912}, can be used for distributed solutions but require substantial complexity. In~\cite{liu2013asynchronous}, the authors propose a gossiping framework which can be used for a distributed solution of the above linear system. 
Recent work~\cite{Asuman} considers a local and asynchronous solution for solving systems of linear equations, where they acquire a bound on the number of needed multiplication proportional to the  degree and condition number for one component of the solution vector.

\textbf{Contributions:} In this paper, we propose a fast distributed solver for linear equations given by symmetric diagonally dominant M-Matrices. Our approach distributes the parallel solver in~\cite{Spielman} by considering a specific approximated inverse chain which can be computed efficiently in a distributed fashion. Our algorithm's computational complexity is given by  $\mathcal{O}\left(n^{3}\frac{\bm{\alpha}}{R}\frac{\bm{W}_{\text{max}}}{\bm{W}_{\text{min}}}\log\left(\frac{1}{\epsilon}\right)\right)$, with $n$ being the number of nodes in graph $\mathbb{G}$, $\bm{W}_{\text{max}}$ and $\bm{W}_{\text{min}}$ denoting the largest and smaller weights of the edges in $\mathbb{G}$, respectively, $\bm{\alpha}=\min\left\{n,\frac{d_{\text{max}}^{R+1}-1}{d_{\text{max}}-1}\right\}$ representing the upper bound on the size of the R-Hop neighborhood $\forall \bm{v} \in \mathbb{V}$, and $\epsilon \in (0,\frac{1}{2}]$ being the precision parameter. Our approach improves current linear methods by a factor of $\log n$ and by a factor of the degree  compared to~ \cite{Asuman} for each component of the solution vector.



\section{Problem Definition \& Notation}\label{Sec:Pre}
We consider the following system of linear equations:
\begin{equation}\label{lin_sys}
\bm{M}_{0}\bm{x} = \bm{b}_{0}
\end{equation}
where $\bm{M}_{0}$ is a Symmetric Diagonally Dominant M-Matrix (SDDM). Namely, $\bm{M}_{0}$ is symmetric positive definite with non-positive off diagonal elements, such that for all $i=1,2,\ldots, n$:
\begin{equation*}
\left[\bm{M}_{0}\right]_{ii} \ge -\sum_{j=1, j\ne i}^{n}\left[\bm{M}_{0}\right]_{ij}
\end{equation*}
The system of Equations in~\ref{lin_sys} can be interpreted as representing an undirected weighted graph, $\mathbb{G}$, with $\bm{M}_{0}$ being its Laplacian. Namely, $\mathbb{G} = \left(\mathbb{V},\mathbb{E},\bm{W}\right)$, with $\mathbb{V}$ representing the set of nodes, $\mathbb{E}$ denoting the edges, and $\bm{W}$ representing the weighted graph adjacency. Nodes $\bm{v}_i$ and $\bm{v}_j$ are connected with an edge $\bm{e}=\left(i,j\right)$ iff $\bm{W}_{ij}> 0$, where: 
\begin{equation*}
\bm{W}_{ij} = \left[\bm{M}_{0}\right]_{ii} \ \ \ \text{(if $i = j$)}, \ \ \ \ \text{or} \ \ \ \  \bm{W}_{ij} = -\left[\bm{M}_{0}\right]_{ij}, \ \ \ \text{otherwise}. 
\end{equation*}
Following~\cite{Spielman}, we seek $\epsilon$-approximate solutions to $\bm{x}^{\star}$, being the exact solution of $\bm{M}_{0}\bm{x}=\bm{b}_{0}$, defined as:
\begin{definition}
Let $\bm{x}^{\star}\in \mathbb{R}^{n}$ be the solution of $\bm{M}\bm{x}=\bm{b}_{0}$. A vector $\tilde{\bm{x}}\in \mathbb{R}^{n}$ is called an $\epsilon-$ approximate solution, if:
\begin{equation}
\left|\left|\bm{x}^{\star} - \tilde{\bm{x}}\right|\right|_{\bm{M}_{0}} \le \epsilon\left|\left|\bm{x}^{\star}\right|\right|_{\bm{M}_{0}}, \ \ \ \text{where $\left|\left|\bm{u}\right|\right|^{2}_{\bm{M}_0} = \bm{u}^{\mathsf{T}}\bm{M}_{0}\bm{u}$.
}
\end{equation}
\end{definition}

The R-hop neighbourhood of node $\bm{v}_{k}$ is defined as $\mathbb{N}_{r}\left(\bm{v}_{k}\right) = \{\bm{v}\in \mathbb{V}: \text{dist}\left(\bm{v}_{k}, \bm{v}\right)\le r\}$. We also make use of the diameter of a graph, $\mathbb{G}$, defined as $\text{diam}\left(\mathbb{G}\right) = \max_{\bm{v}_{i},\bm{v}_{j}\in \mathbb{V}}\text{dist}\left(\bm{v}_i,\bm{v}_j\right)$. 

\begin{definition}
We say that a matrix $\bm{A} \in \mathbb{R}^{n\times n}$ has a sparsity pattern corresponding to the R-hop neighborhood if $\bm{A}_{ij} = 0$ for all $i = 1,\ldots, n$ and for all $j$ such that $\bm{v}_j\notin \mathbb{N}_{r}\left(\bm{v}_i\right)$. 
\end{definition}

We will denote the spectral radius of a matrix $\bm{A}$ by $\rho\left(\bm{A}\right) = \max{\left|\bm{\lambda}_i\right|}$, where $\bm{\lambda}_i$ represents an eigenvalue of the matrix $\bm{A}$. Furthermore, we will make use of the condition number\footnote{Please note that in the case of the graph Laplacian,  the condition number is defined as the ratio of the largest to the smallest nonzero eigenvalues.}, $\kappa\left(\bm{A}\right)$ of a matrix $\bm{A}$ defined as $\kappa=\left|\frac{\bm{\lambda}_{\text{max}}\left(\bm{A}\right)}{\bm{\lambda}_{\text{min}}\left(\bm{A}\right)}\right|$.  In~\cite{DaanS} it is shown that the condition number of the graph Laplacian is at most $\mathcal{O}\left(n^{3}\frac{\bm{W}_{\text{max}}}{\bm{W}_{\text{min}}}\right)$, where $\bm{W}_{\text{max}}$ and $\bm{W}_{\text{min}}$ represent the largest and the smallest edge weights in $\mathbb{G}$. Finally, the condition number of a sub-matrix of the Laplacian is at most $\mathcal{O}\left(n^4\frac{\bm{W}_{\text{max}}}{\bm{W}_{\text{min}}}\right)$, see~\cite{Spielman}.

\subsection{Problem Definition}\label{Sec:ProbDef}
We assume that each node $\bm{v}_k\in \mathbb{V}$ has information about the weights of adjacent edges. Further, each node $\bm{v}_{k}$ has the capabilities of storing the value of the $i^{th}$ component of $\bm{b}_{0}$, which is denoted as $\left[\bm{b}_{0}\right]_i$. 
At each time step, nodes can exchange information with their neighboors. Each node is responsible for determining the corresponding component, $\bm{x}_{i}$, of the solution vector $\bm{x} \in \mathbb{R}^{n}$. We also assume a synchronized model whereby time complexity is measured by a global clock. The \textbf{goal} is to find $\epsilon$-approximate solution for $\bm{M}_{0}\bm{x}=\bm{b}_{0}$ in a \emph{distributed fashion}, while being restricted to R-hop communication between the nodes. 

\section{Background}
\subsection{Standard Splittings \& Approximations}\label{Sec:Standard}
Following the setup in~\cite{Spielman}, we provide standard definitions required in the remainder of the paper:
\begin{definition}
The standard splitting of a symmetric matrix $\bm{M}_{0}$ is: 
\begin{equation}
\bm{M}_{0} = \bm{D}_{0} - \bm{A}_{0}
\end{equation}
here $\bm{D}_0$ is a diagonal matrix such that $\left[\bm{D}_{0}\right]_{ii} = \left[\bm{M}_{0}\right]_{ii}$ for $i = 1,2,\ldots,n$, and $\bm{A}_0$ representing a non-negative symmetric matrix such that $\left[\bm{A}_0\right]_{ij} = -\left[\bm{M}_{0}\right]_{ij}$ if $i\ne j$, and $\left[\bm{A}_{0}\right]_{ii} = 0$.
\end{definition}
We also define the Loewner ordering: 
\begin{definition}
Let $\mathcal{\bm{S}}_{(n)}$ be the space of $n \times n$-symmetric matrices. The Loewner ordering $\preceq$ is a partial order on $\mathcal{\bm{S}}_{(n)}$ such that $\bm{Y}\preceq \bm{X}$ if and only if $\bm{X} - \bm{Y}$ is positive semidefinite.
\end{definition}

Finally, we define the ``$\approx_{\alpha}$'' operation used in the sequel to come as:
\begin{definition}
Let $\bm{X}$ and $\bm{Y}$ be positive semidefinite symmetric matrices. Then $\bm{X}\approx_{\alpha} \bm{Y}$ if and only iff
\begin{equation}
e^{-\alpha}\bm{X} \preceq \bm{Y} \preceq e^{\alpha}\bm{X}
\end{equation}
with $\bm{A}\preceq \bm{B}$ meaning $\bm{B} - \bm{A}$ is positive semidefinite.
\end{definition}

Based on the above definitions, the following lemma represents the basic characteristics of the $\approx_{\alpha}$ operator:
\begin{lemma}~\cite{Spielman}\label{approx_lemma_facts}
Let $\bm{X},\bm{Y},\bm{Z}$ and, $\bm{Q}$ be symmetric positive semi definite matrices. Then
\begin{enumerate}
\item[] (1) If $\bm{X}\approx_{\alpha} \bm{Y}$, then $\bm{X} + \bm{Z} \approx_{\alpha} \bm{Y} + \bm{Z}$, (2) If $\bm{X}\approx_{\alpha} \bm{Y}$ and $\bm{Z}\approx_{\alpha} \bm{Q}$, then $\bm{X} + \bm{Z} \approx_{\alpha} \bm{Y} + \bm{Q}$
\item[] (3) If $\bm{X}\approx_{\alpha} \bm{Y}$ and $\bm{Z}\approx_{\alpha} \bm{Q}$, then $\bm{X} + \bm{Z} \approx_{\alpha} \bm{Y} + \bm{Q}$, (4) If $\bm{X}\approx_{\alpha_1} \bm{Y}$ and $\bm{Y} \approx_{\alpha_2} \bm{Z}$, then $\bm{X} \approx_{\alpha_1 + \alpha_2} \bm{Z}$
\item[] (5) If $\bm{X}$, and $\bm{Y}$ are non singular and $\bm{X}\approx_{\alpha} \bm{Y}$, then $\bm{X}^{-1}\approx_{\alpha} \bm{Y}^{-1}$, (6) If $\bm{X}\approx_{\alpha} \bm{Y}$ and $\bm{V}$ is a matrix, then $\bm{V}^{\mathsf{T}}\bm{X}\bm{V}\approx_{\alpha}\bm{V}^{\mathsf{T}}\bm{Y}\bm{V}$
\end{enumerate}
\end{lemma}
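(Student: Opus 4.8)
The plan is to unfold $\bm{X}\approx_\alpha\bm{Y}$ into the pair of Loewner inequalities $e^{-\alpha}\bm{X}\preceq\bm{Y}$ and $\bm{Y}\preceq e^{\alpha}\bm{X}$, and then reduce every one of the six claims to three elementary monotonicity properties of $\preceq$: (i) preservation under adding a common matrix and under scaling by a nonnegative scalar, (ii) preservation under congruence $\bm{A}\mapsto\bm{V}^{\mathsf{T}}\bm{A}\bm{V}$, and (iii) reversal under inversion on the positive-definite cone, i.e. $0\prec\bm{A}\preceq\bm{B}$ implies $\bm{B}^{-1}\preceq\bm{A}^{-1}$. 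Properties (i) and (ii) are immediate from the defining equivalence $\bm{A}\preceq\bm{B}\Leftrightarrow\bm{B}-\bm{A}\succeq0$, since $\bm{V}^{\mathsf{T}}(\bm{B}-\bm{A})\bm{V}\succeq0$ whenever $\bm{B}-\bm{A}\succeq0$. Throughout I use that $\alpha\ge0$, so that $e^{\alpha}\ge1\ge e^{-\alpha}$.

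I would first dispatch the additive facts. Claims (2) and (3) coincide, and follow by adding the chains $e^{-\alpha}\bm{X}\preceq\bm{Y}\preceq e^{\alpha}\bm{X}$ and $e^{-\alpha}\bm{Z}\preceq\bm{Q}\preceq e^{\alpha}\bm{Z}$ term by term via (i), yielding $e^{-\alpha}(\bm{X}+\bm{Z})\preceq\bm{Y}+\bm{Q}\preceq e^{\alpha}(\bm{X}+\bm{Z})$. Claim (1) is then the special case $\bm{Q}=\bm{Z}$, once I observe that $\bm{Z}\approx_\alpha\bm{Z}$ holds for every PSD $\bm{Z}$ because $e^{-\alpha}\bm{Z}\preceq\bm{Z}\preceq e^{\alpha}\bm{Z}$. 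For the transitivity claim (4), I chain $\bm{Z}\preceq e^{\alpha_2}\bm{Y}\preceq e^{\alpha_2}e^{\alpha_1}\bm{X}=e^{\alpha_1+\alpha_2}\bm{X}$, where the middle step scales $\bm{Y}\preceq e^{\alpha_1}\bm{X}$ by the positive scalar $e^{\alpha_2}$; the matching lower bound is symmetric, giving $\bm{X}\approx_{\alpha_1+\alpha_2}\bm{Z}$. For the congruence claim (6), I apply (ii) to both inequalities defining $\bm{X}\approx_\alpha\bm{Y}$, obtaining $e^{-\alpha}\bm{V}^{\mathsf{T}}\bm{X}\bm{V}\preceq\bm{V}^{\mathsf{T}}\bm{Y}\bm{V}\preceq e^{\alpha}\bm{V}^{\mathsf{T}}\bm{X}\bm{V}$.

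The only step that is more than bookkeeping is the inversion claim (5), which rests on property (iii); I expect (iii) itself to be the main obstacle, since it is the one fact not following by a one-line manipulation. Granting it, from $\bm{Y}\preceq e^{\alpha}\bm{X}$ with $\bm{X},\bm{Y}$ positive definite (they are PSD and nonsingular) inversion gives $e^{-\alpha}\bm{X}^{-1}\preceq\bm{Y}^{-1}$, and from $e^{-\alpha}\bm{X}\preceq\bm{Y}$ it gives $\bm{Y}^{-1}\preceq e^{\alpha}\bm{X}^{-1}$, i.e. $\bm{X}^{-1}\approx_\alpha\bm{Y}^{-1}$. To establish (iii) I would reduce to the identity by congruence: writing $\bm{A}^{1/2}$ for the PSD square root, $\bm{A}\preceq\bm{B}$ is equivalent to $\bm{I}\preceq\bm{A}^{-1/2}\bm{B}\bm{A}^{-1/2}$; the latter matrix then has all eigenvalues at least $1$, so its inverse $\bm{A}^{1/2}\bm{B}^{-1}\bm{A}^{1/2}$ has all eigenvalues at most $1$, i.e. $\bm{A}^{1/2}\bm{B}^{-1}\bm{A}^{1/2}\preceq\bm{I}$, and conjugating back by $\bm{A}^{-1/2}$ yields $\bm{B}^{-1}\preceq\bm{A}^{-1}$. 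Alternatively, (iii) can simply be cited as the antitonicity of matrix inversion on the positive-definite cone.
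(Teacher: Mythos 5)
Your proof is correct. Note that the paper does not actually prove this lemma --- it imports it verbatim from the cited Peng--Spielman work --- so there is no in-paper argument to compare against; your derivation is the standard one and is complete. All six parts reduce correctly to the three monotonicity facts you isolate (stability of $\preceq$ under addition, nonnegative scaling, and congruence, plus antitonicity of inversion on the positive-definite cone), your reduction of (1) to (2) via $\bm{Z}\approx_{\alpha}\bm{Z}$ correctly uses $\alpha\ge 0$ together with $\bm{Z}\succeq 0$, and your square-root argument for $0\prec\bm{A}\preceq\bm{B}\Rightarrow\bm{B}^{-1}\preceq\bm{A}^{-1}$ is sound.
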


The next lemma shows that good approximations of $\bm{M}^{-1}_0$ guarantee good approximated solutions of $\bm{M}_{0}\bm{x}=\bm{b}_{0}$.


\begin{lemma}\label{lemma_approx_matrix_inverse}
Let $\bm{Z}_0\approx_{\epsilon}\bm{M}^{-1}_0$, and $\tilde{\bm{x}} = \bm{Z}_0\bm{b}_0$. Then $\tilde{\bm{x}}$ is $\sqrt{2^{\epsilon}(e^{\epsilon} - 1)}$ approximate solution of $\bm{M}_{0}\bm{x}=\bm{b}_{0}$.
\end{lemma}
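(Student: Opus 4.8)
The plan is to reduce the weighted error norm to a Rayleigh quotient of a matrix congruent to $\bm{Z}_0$, and then control that quotient through the spectrum guaranteed by $\bm{Z}_0\approx_\epsilon\bm{M}_0^{-1}$. First I would write the exact solution as $\bm{x}^\star=\bm{M}_0^{-1}\bm{b}_0$ and the approximate one as $\tilde{\bm{x}}=\bm{Z}_0\bm{b}_0$, so that the error is $\bm{x}^\star-\tilde{\bm{x}}=(\bm{M}_0^{-1}-\bm{Z}_0)\bm{b}_0$. Since $\bm{M}_0$ and $\bm{Z}_0$ are symmetric, expanding the definition of $\|\cdot\|_{\bm{M}_0}$ gives $\|\bm{x}^\star-\tilde{\bm{x}}\|_{\bm{M}_0}^2=\bm{b}_0^\mathsf{T}(\bm{M}_0^{-1}-\bm{Z}_0)\bm{M}_0(\bm{M}_0^{-1}-\bm{Z}_0)\bm{b}_0$, while $\|\bm{x}^\star\|_{\bm{M}_0}^2=\bm{b}_0^\mathsf{T}\bm{M}_0^{-1}\bm{b}_0$; the goal is to bound the ratio of these two quadratic forms.

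Next I would diagonalize the problem using the symmetric positive-definite square root $\bm{M}_0^{1/2}$. Setting $\bm{N}=\bm{M}_0^{1/2}\bm{Z}_0\bm{M}_0^{1/2}$ and applying Lemma~\ref{approx_lemma_facts}(6) with $\bm{V}=\bm{M}_0^{1/2}$ to the hypothesis $\bm{Z}_0\approx_\epsilon\bm{M}_0^{-1}$ yields $\bm{N}\approx_\epsilon\bm{M}_0^{1/2}\bm{M}_0^{-1}\bm{M}_0^{1/2}=\bm{I}$, so every eigenvalue of $\bm{N}$ lies in $[e^{-\epsilon},e^\epsilon]$. Writing $\bm{c}=\bm{M}_0^{-1/2}\bm{b}_0$ and using $\bm{M}_0^{-1}-\bm{Z}_0=\bm{M}_0^{-1/2}(\bm{I}-\bm{N})\bm{M}_0^{-1/2}$, the numerator becomes $\bm{c}^\mathsf{T}(\bm{I}-\bm{N})^2\bm{c}$ and the denominator becomes $\bm{c}^\mathsf{T}\bm{c}$, so the ratio equals the Rayleigh quotient of $(\bm{I}-\bm{N})^2$.

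I would then bound this quotient spectrally: $\bm{I}-\bm{N}$ is symmetric with eigenvalues in $[1-e^\epsilon,1-e^{-\epsilon}]$, so its spectral norm is $\max\{e^\epsilon-1,\,1-e^{-\epsilon}\}=e^\epsilon-1$, the maximum following from $(e^{\epsilon/2}-e^{-\epsilon/2})^2\ge0$. Hence $(\bm{I}-\bm{N})^2\preceq(e^\epsilon-1)^2\bm{I}$ and $\|\bm{x}^\star-\tilde{\bm{x}}\|_{\bm{M}_0}\le(e^\epsilon-1)\|\bm{x}^\star\|_{\bm{M}_0}$. To recover the constant in the statement I would finish with the elementary estimate $e^\epsilon-1\le 2^\epsilon$, which holds on $\epsilon\in(0,\tfrac12]$, giving $(e^\epsilon-1)^2\le 2^\epsilon(e^\epsilon-1)$ and therefore the claimed factor $\sqrt{2^\epsilon(e^\epsilon-1)}$.

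The step I expect to be most delicate is not any single calculation but choosing a tight enough route: a direct Loewner-order expansion of the numerator into $\bm{M}_0^{-1}-2\bm{Z}_0+\bm{Z}_0\bm{M}_0\bm{Z}_0$ and bounding each term separately is too lossy (it produces a constant of order $\epsilon$ rather than $\epsilon^2$ for small $\epsilon$) and fails to imply the stated bound. The essential idea is therefore to pass to the congruent matrix $\bm{N}$ and bound $(\bm{I}-\bm{N})^2$ as a whole through the spectrum of $\bm{N}$; the only other point requiring care is verifying that $e^\epsilon-1$, not $1-e^{-\epsilon}$, is the dominant eigenvalue magnitude, which is exactly why the restriction $\epsilon\le\tfrac12$ is comfortable for the final comparison.
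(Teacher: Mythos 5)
Your proof is correct, and it takes a genuinely different route from the paper's. The paper argues exactly the way you warn against: it expands $\|\bm{x}^{\star}-\tilde{\bm{x}}\|_{\bm{M}_0}^{2}$ into the three quadratic forms $\bm{b}_0^{\mathsf{T}}\bm{M}_0^{-1}\bm{b}_0$, $\bm{b}_0^{\mathsf{T}}\bm{Z}_0\bm{M}_0\bm{Z}_0\bm{b}_0$ and $-2\bm{b}_0^{\mathsf{T}}\bm{Z}_0\bm{b}_0$, bounds the first two by $e^{\epsilon}\bm{b}_0^{\mathsf{T}}\bm{Z}_0\bm{b}_0$ each, and then uses $\bm{b}_0^{\mathsf{T}}\bm{Z}_0\bm{b}_0\le e^{\epsilon}\|\bm{x}^{\star}\|_{\bm{M}_0}^{2}$ to arrive at $\|\bm{x}^{\star}-\tilde{\bm{x}}\|_{\bm{M}_0}^{2}\le 2e^{\epsilon}(e^{\epsilon}-1)\|\bm{x}^{\star}\|_{\bm{M}_0}^{2}$. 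As you anticipate, that constant is of order $\epsilon$ rather than $\epsilon^{2}$ for small $\epsilon$, and since $2e^{\epsilon}(e^{\epsilon}-1)>2^{\epsilon}(e^{\epsilon}-1)$ for $\epsilon>0$, the paper's own derivation actually supports the constant $\sqrt{2e^{\epsilon}(e^{\epsilon}-1)}$ rather than the $\sqrt{2^{\epsilon}(e^{\epsilon}-1)}$ printed in the statement. Your congruence argument via $\bm{N}=\bm{M}_0^{1/2}\bm{Z}_0\bm{M}_0^{1/2}$ is sound (the eigenvalue localization, the identity $\bm{M}_0^{-1}-\bm{Z}_0=\bm{M}_0^{-1/2}(\bm{I}-\bm{N})\bm{M}_0^{-1/2}$, and the comparison $e^{\epsilon}-1\ge 1-e^{-\epsilon}$ all check out) and yields the sharper $\|\bm{x}^{\star}-\tilde{\bm{x}}\|_{\bm{M}_0}\le(e^{\epsilon}-1)\|\bm{x}^{\star}\|_{\bm{M}_0}$, which is the only one of the two arguments that actually delivers the printed constant, via $e^{\epsilon}-1\le 2^{\epsilon}$. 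Note only that the lemma as stated carries no restriction on $\epsilon$, so you should record $\epsilon\le\tfrac12$ (or any bound up to roughly $\epsilon\approx 1.18$, where $e^{\epsilon}-1=2^{\epsilon}$) as an explicit hypothesis for that last step; within the paper's standing assumption $\epsilon\in(0,\tfrac12]$ this is harmless. What each approach buys: the paper's expansion avoids the matrix square root and uses only the Loewner inequalities from Lemma~\ref{approx_lemma_facts}, while your spectral-norm route is tighter by a factor of roughly $\sqrt{2/\epsilon}$ for small $\epsilon$ and repairs the constant in the statement.
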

\begin{proof}
The proof can be found in the appendix. 
\end{proof}

We next discuss the parallel SDDM solver introduced in~\cite{Spielman}. 
\subsection{The Parallel SDDM Solver}\label{Sec:ParrallelSolver}
The parallel SDDM solver proposed in~\cite{Spielman} is a parallelized technique for solving the problem of Section~\ref{Sec:ProbDef}. It makes use of inverse approximated chains (see Definition~\ref{Def:InvChain}) to determine $\tilde{\bm{x}}$ and can be split in two steps. In the first step, denoted as Algorithm~\ref{Algo:Inv}, a ``crude'' approximation, $\bm{x}_{0}$, of $\bm{\tilde{x}}$ is returned. $\bm{x}_{0}$ is driven to the $\epsilon$-close solution, $\tilde{\bm{x}}$, using Richardson Preconditioning in Algorithm~\ref{Algo:Inv2}. Before we proceed, we start with the following two Lemmas which enable the definition of inverse chain approximation. 

\begin{lemma}~\cite{Spielman}\label{SDDM_splitting_lemma}
If $\bm{M} = \bm{D} - \bm{A}$ is an SDDM matrix, with $\bm{D}$ being positive diagonal, and $\bm{A}$ denoting a non-negative symmetric matrix, then $\bm{D} - \bm{A}\bm{D}^{-1}\bm{A}$ is also SDDM.
\end{lemma}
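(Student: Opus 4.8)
The plan is to verify directly the three defining properties of an SDDM matrix for $\bm{N} := \bm{D} - \bm{A}\bm{D}^{-1}\bm{A}$: symmetry together with the M-matrix sign pattern (non-positive off-diagonals), diagonal dominance, and positive definiteness. Symmetry is immediate, since $\bm{D}$ and $\bm{D}^{-1}$ are diagonal and $\bm{A}$ is symmetric, so $\left(\bm{A}\bm{D}^{-1}\bm{A}\right)^{\mathsf{T}} = \bm{A}\bm{D}^{-1}\bm{A}$. For the sign pattern I would write $\left[\bm{A}\bm{D}^{-1}\bm{A}\right]_{ij} = \sum_k \left[\bm{A}\right]_{ik}\left[\bm{A}\right]_{kj}/\left[\bm{D}\right]_{kk}$, which is non-negative for every $i,j$ because $\bm{A}$ is entrywise non-negative and $\bm{D}$ is positive diagonal; hence for $i\ne j$ one has $\left[\bm{N}\right]_{ij} = -\left[\bm{A}\bm{D}^{-1}\bm{A}\right]_{ij} \le 0$.

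For diagonal dominance I would use that, given non-positive off-diagonals, dominance is equivalent to the row sums of $\bm{N}$ being non-negative, i.e. $\bm{N}\bm{1}\ge 0$ entrywise, where $\bm{1}$ denotes the all-ones vector. The diagonal dominance of $\bm{M}=\bm{D}-\bm{A}$ states that $\bm{D}\bm{1}\ge \bm{A}\bm{1}\ge 0$ entrywise; writing $\bm{a} := \bm{A}\bm{1}$ and $\bm{d}:=\bm{D}\bm{1}$, this reads $0\le \left[\bm{a}\right]_i \le \left[\bm{d}\right]_i$ for each $i$. Then $\left[\bm{D}^{-1}\bm{a}\right]_i = \left[\bm{a}\right]_i/\left[\bm{d}\right]_i \le 1$, and since $\bm{A}$ is non-negative, $\bm{A}\bm{D}^{-1}\bm{A}\bm{1} = \bm{A}\left(\bm{D}^{-1}\bm{a}\right) \le \bm{A}\bm{1} = \bm{a} \le \bm{d}$ entrywise. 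Consequently $\bm{N}\bm{1} = \bm{d} - \bm{A}\bm{D}^{-1}\bm{A}\bm{1} \ge \bm{d}-\bm{a} \ge 0$, which is exactly the claimed dominance.

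The remaining and most delicate step is positive definiteness. I would pass to the normalized matrix $\bm{P} := \bm{D}^{-1/2}\bm{A}\bm{D}^{-1/2}$, which is symmetric and entrywise non-negative, and record the congruences $\bm{D}^{-1/2}\bm{M}\bm{D}^{-1/2} = \bm{I}-\bm{P}$ and $\bm{D}^{-1/2}\bm{N}\bm{D}^{-1/2} = \bm{I}-\bm{P}^2$, the latter because $\bm{D}^{-1/2}\left(\bm{A}\bm{D}^{-1}\bm{A}\right)\bm{D}^{-1/2} = \bm{P}^2$. Since $\bm{M}\succ 0$, the congruence gives $\bm{I}-\bm{P}\succ 0$, which forces every eigenvalue of $\bm{P}$ to satisfy $\mu < 1$. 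The main obstacle is the matching lower bound $\mu > -1$, which does not follow from $\bm{M}\succ 0$ alone; here I would invoke the Perron--Frobenius theorem, which ensures that the spectral radius of the non-negative matrix $\bm{P}$ is attained by its largest eigenvalue, so that $\left|\mu\right|\le \rho\left(\bm{P}\right) = \mu_{\max} < 1$ for every eigenvalue $\mu$. Hence the eigenvalues $1-\mu^2$ of $\bm{I}-\bm{P}^2$ are all strictly positive, giving $\bm{N} = \bm{D}^{1/2}\left(\bm{I}-\bm{P}^2\right)\bm{D}^{1/2}\succ 0$. Combining the sign pattern, the dominance, and positive definiteness establishes that $\bm{N}$ is SDDM.
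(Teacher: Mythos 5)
The paper does not actually prove this lemma: it is imported verbatim from the Peng--Spielman reference, and no argument for it appears in the body or the appendix. So there is nothing in the paper to compare against, and your proof has to stand on its own --- which it does. All three required properties are verified correctly: symmetry and the non-positive off-diagonal sign pattern are immediate from the entrywise formula for $\bm{A}\bm{D}^{-1}\bm{A}$; the row-sum computation $\bm{A}\bm{D}^{-1}\bm{A}\bm{1} \le \bm{A}\bm{1} \le \bm{D}\bm{1}$ is exactly the right way to propagate diagonal dominance (and correctly handles the possibility that $\bm{A}$ has a nonzero diagonal, which does occur for the later matrices $\bm{A}_k$ in the paper's chain); and the normalization $\bm{P} = \bm{D}^{-1/2}\bm{A}\bm{D}^{-1/2}$ with the congruence $\bm{D}^{-1/2}\bm{N}\bm{D}^{-1/2} = \bm{I} - \bm{P}^2$ is the standard route to positive definiteness. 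You are also right that the only genuinely delicate point is ruling out an eigenvalue of $\bm{P}$ at or below $-1$, and the weak Perron--Frobenius theorem (spectral radius of a non-negative matrix is attained by an eigenvalue, hence for symmetric $\bm{P}$ one has $|\mu| \le \mu_{\max} < 1$) closes that gap cleanly and gives the strict inequality $1 - \mu^2 > 0$ that positive definiteness requires; the alternative of showing $\bm{D} + \bm{A} \succeq 0$ by diagonal dominance would only yield $\mu \ge -1$ and hence positive semidefiniteness. No gaps.
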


\begin{lemma}~\cite{Spielman}\label{approx_inverse_formulae_lemma}
Let $\bm{M} =\bm{D} - \bm{A}$ be an SDDM matrix, where $\bm{D}$ is positive diagonal and, $\bm{A}$ a symmetric matrix. Then 
\begin{align}\label{Inv_of_SDDM}
\left(\bm{D}-\bm{A}\right)^{-1} = \frac{1}{2}\Big[\bm{D}^{-1} + \left(\bm{I} + \bm{D}^{-1}\bm{A}\right)\left(\bm{D} - \bm{A}\bm{D}^{-1}\bm{A}\right)^{-1} \left(\bm{I} + \bm{A}\bm{D}^{-1}\right)\Big]
\end{align}
\end{lemma}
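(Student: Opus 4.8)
The plan is to verify this as a purely algebraic matrix identity. Since $\bm{D}$ is positive diagonal it is invertible, and since $\bm{M} = \bm{D} - \bm{A}$ is SDDM (hence positive definite) it is invertible as well; by Lemma~\ref{SDDM_splitting_lemma} the matrix $\bm{D} - \bm{A}\bm{D}^{-1}\bm{A}$ is SDDM and therefore also invertible, so every inverse appearing in~\eqref{Inv_of_SDDM} is well defined. The central observation I would exploit is that the ``middle'' term $\bm{D} - \bm{A}\bm{D}^{-1}\bm{A}$ admits a clean factorization in terms of $\bm{D} - \bm{A}$ and $\bm{D} + \bm{A}$, after which the whole right-hand side collapses. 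Throughout, the one thing to watch is that $\bm{A}$ and $\bm{D}$ do not commute, so all products must be kept in order.

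First I would establish the factorization. Inserting $\bm{I} = \bm{D}^{-1}\bm{D}$ and expanding, one checks directly that
\begin{equation*}
(\bm{D} - \bm{A})\bm{D}^{-1}(\bm{D} + \bm{A}) = \bm{D} - \bm{A}\bm{D}^{-1}\bm{A} = (\bm{D} + \bm{A})\bm{D}^{-1}(\bm{D} - \bm{A}),
\end{equation*}
where the cross terms $\pm \bm{A}\bm{D}^{-1}\bm{D} = \pm \bm{A}$ cancel against $\mp \bm{A}$. Inverting either factorization yields
\begin{equation*}
\left(\bm{D} - \bm{A}\bm{D}^{-1}\bm{A}\right)^{-1} = (\bm{D} + \bm{A})^{-1}\,\bm{D}\,(\bm{D} - \bm{A})^{-1}.
\end{equation*}

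Next I would rewrite the flanking factors using $\bm{I} + \bm{D}^{-1}\bm{A} = \bm{D}^{-1}(\bm{D} + \bm{A})$ and $\bm{I} + \bm{A}\bm{D}^{-1} = (\bm{D} + \bm{A})\bm{D}^{-1}$, and substitute the inverse above into the bracketed product. The factors $(\bm{D} + \bm{A})$ and $(\bm{D} + \bm{A})^{-1}$ then cancel, and the leading $\bm{D}^{-1}\bm{D}$ collapses to $\bm{I}$, leaving
\begin{equation*}
(\bm{I} + \bm{D}^{-1}\bm{A})\left(\bm{D} - \bm{A}\bm{D}^{-1}\bm{A}\right)^{-1}(\bm{I} + \bm{A}\bm{D}^{-1}) = (\bm{D} - \bm{A})^{-1}(\bm{I} + \bm{A}\bm{D}^{-1}) = (\bm{D} - \bm{A})^{-1} + (\bm{D} - \bm{A})^{-1}\bm{A}\bm{D}^{-1}.
\end{equation*}

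To finish, I would use the resolvent-type identity $(\bm{D} - \bm{A})^{-1} - \bm{D}^{-1} = (\bm{D} - \bm{A})^{-1}\big[\bm{I} - (\bm{D} - \bm{A})\bm{D}^{-1}\big] = (\bm{D} - \bm{A})^{-1}\bm{A}\bm{D}^{-1}$, which shows that $\bm{D}^{-1}$ together with the second term above sums exactly to $(\bm{D} - \bm{A})^{-1}$. Hence the full right-hand side of~\eqref{Inv_of_SDDM} equals $\tfrac{1}{2}\big[(\bm{D} - \bm{A})^{-1} + (\bm{D} - \bm{A})^{-1}\big] = (\bm{D} - \bm{A})^{-1}$, as claimed. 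I do not expect a genuine obstacle here — the argument is elementary once the factorization is spotted; the only real care required is the bookkeeping of the non-commuting factors $\bm{A}$ and $\bm{D}$, and making sure the factorization is applied on the correct side so that the cancellations line up.
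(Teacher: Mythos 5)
Your proof is correct. Note that the paper does not actually prove this lemma --- it is imported verbatim from~\cite{Spielman}, where the standard argument conjugates by $\bm{D}^{1/2}$, sets $\bm{X}=\bm{D}^{-1/2}\bm{A}\bm{D}^{-1/2}$, and reduces the identity to $(\bm{I}-\bm{X})^{-1}=\tfrac{1}{2}\big[\bm{I}+(\bm{I}+\bm{X})(\bm{I}-\bm{X}^{2})^{-1}(\bm{I}+\bm{X})\big]$, i.e.\ to the scalar identity $\tfrac{1}{1-x}=\tfrac{1}{2}\big(1+\tfrac{(1+x)^{2}}{1-x^{2}}\big)$ applied to commuting functions of $\bm{X}$. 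Your route via the factorization $(\bm{D}-\bm{A})\bm{D}^{-1}(\bm{D}+\bm{A})=\bm{D}-\bm{A}\bm{D}^{-1}\bm{A}=(\bm{D}+\bm{A})\bm{D}^{-1}(\bm{D}-\bm{A})$ is the unsymmetrized version of the same computation, and every step of your cancellation checks out. The one point you use but do not justify is the invertibility of $\bm{D}+\bm{A}$: it does not follow for an arbitrary symmetric $\bm{A}$ with $\bm{D}-\bm{A}\succ 0$, but here it is immediate from your own factorization, since $\bm{D}-\bm{A}\bm{D}^{-1}\bm{A}$ is SDDM (Lemma~\ref{SDDM_splitting_lemma}) and hence invertible, and $(\bm{D}-\bm{A})\bm{D}^{-1}$ is invertible, forcing the remaining factor $\bm{D}+\bm{A}$ to be invertible as well; one sentence to this effect would close the argument completely. (Alternatively, one can bypass $(\bm{D}+\bm{A})^{-1}$ altogether by multiplying the claimed right-hand side by $\bm{D}-\bm{A}$ and verifying the product is $\bm{I}$.)
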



Given the results in Lemmas \ref{SDDM_splitting_lemma} and \ref{approx_inverse_formulae_lemma}, we now can consider inverse approximated chains of $\bm{M}_0$:
\begin{definition}\label{Def:InvChain}
Let $\mathcal{C} = \{\bm{M}_0, \bm{M}_1, \ldots, \bm{M}_d\}$  be a collection of SDDM matrices such that $\bm{M}_i = \bm{D}_i - \bm{A}_i$, with $\bm{D}_i$ a positive diagonal matrix, and $\bm{A}_i$ denoting a non-negative symmetric matrix. Then $\mathcal{C}$ is an inverse approximated chain if there exists  positive real numbers $\epsilon_0, \epsilon_1, \ldots, \epsilon_d$ such that: (1) For $i = 1,\ldots, d$: $\bm{D}_i - \bm{A}_i \approx_{e_{i-1}} \bm{D}_{i-1} - \bm{A}_{i-1}\bm{D}^{-1}_{i-1}\bm{A}_{i-1}$, (2) For $i = 1,\ldots, d$: $\bm{D}_i \approx_{\epsilon_{i-1}}\bm{D}_{i-1}$, and (3) $\bm{D}_d\approx_{\epsilon_d} \bm{D}_d - \bm{A}_d$. 
\end{definition}
 

\begin{algorithm}
  \caption{$\text{ParallelRSolve}\left(\bm{M}_0,\bm{M}_1,\ldots, \bm{M}_d, \bm{b}_0\right)$}
  \label{Algo:Inv}
  \begin{algorithmic}[1]
	\State \textbf{Input}: Inverse approximated chain, $\{\bm{M}_0,\bm{M}_1,\ldots, \bm{M}_d\}$, and $\bm{b}_0$ being 
	
	\State \textbf{Output}: The ``crude'' approximation, $\bm{x}_0$, of $\bm{x}^{\star}$    
    \For  {$i=1$ to $d$} 
    	\State $\bm{b}_i = \left(\bm{I}+\bm{A}_{i-1}\bm{D}^{-1}_{i-1}\right)\bm{b}_{i-1}$
    \EndFor \textbf{end for} 
    \State $\bm{x}_d = \bm{D}^{-1}_d\bm{b}_d$
    \For  {$i=d-1$ to $0$} 
    	\State $\bm{x}_{i} = \frac{1}{2}\left[\bm{D}^{-1}_i\bm{b}_{i} + \left(\bm{I} + \bm{D}^{-1}_i\bm{A}_i\right)\bm{x}_{i+1}\right]$
    \EndFor \textbf{end for}
    \State \textbf{return} $x_0$ 
  \end{algorithmic}
\end{algorithm}

The quality of the ``crude'' solution returned by Algorithm~\ref{Algo:Inv} is quantified in the following lemma:
\begin{lemma}~\cite{Spielman}\label{Rude_Alg_guarantee_Lemma}
Let $\{\bm{M}_0, \bm{M}_1,\ldots, \bm{M}_d\}$ be the inverse approximated chain and denote $\bm{Z}_0$ be the operator defined by $\text{ParallelRSolve}\left(\bm{M}_0, \bm{M}_1,\ldots, \bm{M}_d, \bm{b}_0\right)$, namely, $\bm{x}_0 = \bm{Z}_0\bm{b}_0$. Then
\begin{equation}\label{appr_inv_express}
\bm{Z}_0\approx_{\sum_{i=0}^{d}\epsilon_i}\bm{M}^{-1}_0
\end{equation}
\end{lemma}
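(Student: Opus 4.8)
The plan is to prove the claim by backward induction on $i = d, d-1, \ldots, 0$, showing that the operator $\bm{Z}_i$ implicitly defined by the backward sweep of Algorithm~\ref{Algo:Inv} (i.e.\ the map $\bm{b}_i \mapsto \bm{x}_i = \bm{Z}_i\bm{b}_i$) satisfies $\bm{Z}_i \approx_{\sum_{j=i}^{d}\epsilon_j} \bm{M}_i^{-1}$; the asserted bound~\eqref{appr_inv_express} is then just the case $i=0$. The first step is to extract a clean recursion for $\bm{Z}_i$ from the two sweeps of the algorithm. Combining the forward update $\bm{b}_{i+1} = (\bm{I}+\bm{A}_i\bm{D}_i^{-1})\bm{b}_i$ with the backward update $\bm{x}_i = \tfrac{1}{2}[\bm{D}_i^{-1}\bm{b}_i + (\bm{I}+\bm{D}_i^{-1}\bm{A}_i)\bm{x}_{i+1}]$ and substituting $\bm{x}_{i+1} = \bm{Z}_{i+1}\bm{b}_{i+1}$ gives
\begin{equation*}
\bm{Z}_i = \tfrac{1}{2}\Big[\bm{D}_i^{-1} + (\bm{I}+\bm{D}_i^{-1}\bm{A}_i)\,\bm{Z}_{i+1}\,(\bm{I}+\bm{A}_i\bm{D}_i^{-1})\Big], \qquad \bm{Z}_d = \bm{D}_d^{-1}.
\end{equation*}
Writing $\bm{V}_i := \bm{I}+\bm{A}_i\bm{D}_i^{-1}$ and using symmetry of $\bm{D}_i$ and $\bm{A}_i$ gives $\bm{V}_i^{\mathsf{T}} = \bm{I}+\bm{D}_i^{-1}\bm{A}_i$, so the recursion reads $\bm{Z}_i = \tfrac{1}{2}[\bm{D}_i^{-1} + \bm{V}_i^{\mathsf{T}}\bm{Z}_{i+1}\bm{V}_i]$. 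A trivial induction then shows each $\bm{Z}_i$ is symmetric positive definite, so the $\approx_\alpha$ calculus of Lemma~\ref{approx_lemma_facts} applies throughout.

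For the base case $i=d$, condition (3) of Definition~\ref{Def:InvChain} gives $\bm{D}_d \approx_{\epsilon_d} \bm{D}_d - \bm{A}_d = \bm{M}_d$, and inverting via part~(5) of Lemma~\ref{approx_lemma_facts} yields $\bm{Z}_d = \bm{D}_d^{-1} \approx_{\epsilon_d} \bm{M}_d^{-1}$. For the inductive step, assume $\bm{Z}_{i+1} \approx_{S} \bm{M}_{i+1}^{-1}$ with $S := \sum_{j=i+1}^{d}\epsilon_j$. Applying the congruence rule (part~(6)) with $\bm{V}_i$, then adding the fixed positive semidefinite term $\bm{D}_i^{-1}$ (part~(1)), and finally scaling by $\tfrac{1}{2}$ (which preserves $\approx_\alpha$ directly from its definition), I obtain
\begin{equation*}
\bm{Z}_i \;\approx_{S}\; \bm{Y}_i := \tfrac{1}{2}\Big[\bm{D}_i^{-1} + \bm{V}_i^{\mathsf{T}}\,\bm{M}_{i+1}^{-1}\,\bm{V}_i\Big].
\end{equation*}

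It remains to compare $\bm{Y}_i$ with $\bm{M}_i^{-1}$, and this is where the two background ingredients meet. Lemma~\ref{approx_inverse_formulae_lemma} applied to $\bm{M}_i = \bm{D}_i - \bm{A}_i$ gives the \emph{exact} identity $\bm{M}_i^{-1} = \tfrac{1}{2}[\bm{D}_i^{-1} + \bm{V}_i^{\mathsf{T}}(\bm{D}_i - \bm{A}_i\bm{D}_i^{-1}\bm{A}_i)^{-1}\bm{V}_i]$, so $\bm{Y}_i$ and $\bm{M}_i^{-1}$ differ only in that $\bm{Y}_i$ uses $\bm{M}_{i+1}^{-1}$ in place of $(\bm{D}_i-\bm{A}_i\bm{D}_i^{-1}\bm{A}_i)^{-1}$. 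Condition (1) of Definition~\ref{Def:InvChain} at level $i+1$ states precisely $\bm{M}_{i+1} \approx_{\epsilon_i} \bm{D}_i - \bm{A}_i\bm{D}_i^{-1}\bm{A}_i$ (both SDDM, hence invertible, by Lemma~\ref{SDDM_splitting_lemma}); inverting (part~(5)), conjugating by $\bm{V}_i$ (part~(6)), adding $\bm{D}_i^{-1}$ and halving then gives $\bm{Y}_i \approx_{\epsilon_i} \bm{M}_i^{-1}$. Transitivity (part~(4)) combines $\bm{Z}_i \approx_{S} \bm{Y}_i$ and $\bm{Y}_i \approx_{\epsilon_i} \bm{M}_i^{-1}$ into $\bm{Z}_i \approx_{S+\epsilon_i} \bm{M}_i^{-1}$, and since $S+\epsilon_i = \sum_{j=i}^{d}\epsilon_j$ this closes the induction. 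I expect the only genuine obstacle to be the very first step: correctly reading off the operator recursion for $\bm{Z}_i$ from the interleaved forward/backward sweeps and recognizing it as the approximate analogue of the exact formula in Lemma~\ref{approx_inverse_formulae_lemma}. Once that structural match is in place, the error bound is a mechanical accumulation of exactly one $\epsilon_i$ per level through the rules of Lemma~\ref{approx_lemma_facts}. I note that condition (2) of the chain is not required for this statement and should instead be reserved for bounding the chain length and analysing the Richardson preconditioning stage.
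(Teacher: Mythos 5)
Your proof is correct: the operator recursion you extract from Algorithm~\ref{Algo:Inv} is the right one, and the backward induction that accumulates exactly one $\epsilon_i$ per level --- via the calculus of Lemma~\ref{approx_lemma_facts} applied to the exact identity of Lemma~\ref{approx_inverse_formulae_lemma} --- is precisely the standard argument from~\cite{Spielman}. Note that the paper itself gives no proof of this lemma (it is imported by citation), so there is no in-paper argument to diverge from; your side remark that condition (2) of Definition~\ref{Def:InvChain} is not used here is also accurate for the algorithm as stated.
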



Algorithm~\ref{Algo:Inv} returns a ``crude'' solution to $\bm{M}_{0}\bm{x}=\bm{b}$. To obtain arbitrary close solutions, Spielman {\it et.al}~\cite{Spielman} introduced the \emph{preconditioned Richardson iterative scheme}, summarized in Algorithm~\ref{Algo:Inv2}. Following their analysis, Lemma~\ref{Exact_Alg_guarantee_lemma} provides the iteration count needed by Algorithm~\ref{Algo:Inv2} to arrive at $\tilde{\bm{x}}$. 
\begin{algorithm}[h!]
  \caption{$\text{ParallelESolve}\left(\bm{M}_0, \bm{M}_1,\ldots, \bm{M}_d,  \bm{b}_0, \epsilon\right)$}
  \label{Algo:Inv2}
  \begin{algorithmic}[1]
	\State \textbf{Input}: Inverse approximated chain $\{\bm{M}_0,\bm{M}_1,\ldots, \bm{M}_d\}$, $\bm{b}_0$, and $\epsilon$. 
	\State \textbf{Output}: $\epsilon$ close approximation, $\tilde{\bm{x}}$, of $\bm{x}^*$
	\State \textbf{Initialize}: $\bm{y}_0 = 0$; \\
	$\chi = \text{ParallelRSolve}\left(\bm{M}_0,\bm{M}_1,\ldots, \bm{M}_d, \bm{b}_0\right)$ (i.e., Algorithm~\ref{Algo:Inv})
    \For  {$k=1$ to $q$}
    	\State $\bm{u}_{k}^{(1)} = \bm{M}_0\bm{y}_{k-1}$
    	\State $\bm{u}_{k}^{(2)} = \text{ParallelRSolve}\left(\bm{M}_0, \bm{M}_1,\ldots, \bm{M}_d, \bm{u}_{k}^{(1)}\right)$
    	\State $\bm{y}_{k} = \bm{y}_{k-1} - \bm{u}_{k}^{(2)} + \chi$ 
    \EndFor \textbf{end for} 
    \State $\tilde{\bm{x}} = \bm{y}_q$
    \State \textbf{return} $\tilde{\bm{x}}$ 
  \end{algorithmic}
\end{algorithm}

\begin{lemma}~\cite{Spielman}\label{Exact_Alg_guarantee_lemma}
Let $\{\bm{M}_0, \bm{M}_1\ldots \bm{M}_d\}$ be an inverse approximated chain such that $\sum_{i=1}^{d}\epsilon_i < \frac{1}{3}\ln2$. Then $\text{ParallelESolve}\left(\bm{M}_0, \bm{M}_1,\ldots, \bm{M}_d, \bm{b}_0, \epsilon\right)$ arrives at an $\epsilon$ close solution of $\bm{x}^{\star}$ in $q = \mathcal{O}\left(\log\frac{1}{\epsilon}\right)$ iterations. 
\end{lemma}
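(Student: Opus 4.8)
The plan is to recognize Algorithm~\ref{Algo:Inv2} as a preconditioned Richardson iteration whose convergence is governed by a single fixed iteration operator, and then to bound the contraction of that operator in the $\bm{M}_0$-norm. First I would unroll one step of the loop: substituting $\bm{u}_k^{(1)} = \bm{M}_0\bm{y}_{k-1}$ and $\bm{u}_k^{(2)} = \bm{Z}_0\bm{u}_k^{(1)} = \bm{Z}_0\bm{M}_0\bm{y}_{k-1}$ (where $\bm{Z}_0$ is the operator of Lemma~\ref{Rude_Alg_guarantee_Lemma} and $\chi = \bm{Z}_0\bm{b}_0$) into the update $\bm{y}_k = \bm{y}_{k-1} - \bm{u}_k^{(2)} + \chi$ gives
\[
\bm{y}_k = (\bm{I} - \bm{Z}_0\bm{M}_0)\bm{y}_{k-1} + \bm{Z}_0\bm{b}_0 .
\]
Writing $\bm{x}^{\star} = \bm{M}_0^{-1}\bm{b}_0$, the term $\bm{Z}_0\bm{b}_0 = \bm{Z}_0\bm{M}_0\bm{x}^{\star}$ cancels upon subtraction, producing the clean error recursion
\[
\bm{y}_k - \bm{x}^{\star} = (\bm{I} - \bm{Z}_0\bm{M}_0)(\bm{y}_{k-1} - \bm{x}^{\star}), \qquad \bm{y}_q - \bm{x}^{\star} = -(\bm{I} - \bm{Z}_0\bm{M}_0)^{q}\,\bm{x}^{\star},
\]
using $\bm{y}_0 = 0$. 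Thus the whole lemma reduces to showing that $\bm{I} - \bm{Z}_0\bm{M}_0$ contracts in the $\bm{M}_0$-norm by a factor strictly below $1$.

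The central step is to convert the approximation guarantee of Lemma~\ref{Rude_Alg_guarantee_Lemma} into such a contraction. Set $\sigma = \sum_i \epsilon_i$, so that $\bm{Z}_0 \approx_{\sigma} \bm{M}_0^{-1}$. Applying Lemma~\ref{approx_lemma_facts}(6) with the symmetric matrix $\bm{V} = \bm{M}_0^{1/2}$ yields $\bm{M}_0^{1/2}\bm{Z}_0\bm{M}_0^{1/2} \approx_{\sigma} \bm{I}$, i.e. the symmetric matrix $\bm{S} := \bm{M}_0^{1/2}\bm{Z}_0\bm{M}_0^{1/2}$ satisfies $e^{-\sigma}\bm{I} \preceq \bm{S} \preceq e^{\sigma}\bm{I}$ and therefore has all eigenvalues in $[e^{-\sigma}, e^{\sigma}]$. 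I would then use the identity relating the $\bm{M}_0$-operator norm to a Euclidean spectral norm via the similarity $\bm{M}_0^{1/2}(\cdot)\bm{M}_0^{-1/2}$: since $\bm{M}_0^{1/2}(\bm{I}-\bm{Z}_0\bm{M}_0)\bm{M}_0^{-1/2} = \bm{I} - \bm{S}$, we get $\|\bm{I}-\bm{Z}_0\bm{M}_0\|_{\bm{M}_0} = \|\bm{I}-\bm{S}\|_2$. As $\bm{S}$ is symmetric with spectrum in $[e^{-\sigma}, e^{\sigma}]$, the eigenvalues of $\bm{I}-\bm{S}$ lie in $[1-e^{\sigma}, 1-e^{-\sigma}]$, whose largest magnitude is $e^{\sigma}-1$ (because $e^{\sigma}-1 \ge 1-e^{-\sigma}$). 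Invoking the hypothesis $\sigma < \tfrac{1}{3}\ln 2$ then gives the explicit constant $\|\bm{I}-\bm{Z}_0\bm{M}_0\|_{\bm{M}_0} = e^{\sigma}-1 < 2^{1/3}-1 =: \gamma < \tfrac12$.

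Finally, submultiplicativity of the induced $\bm{M}_0$-norm applied to the error recursion gives $\|\bm{y}_q - \bm{x}^{\star}\|_{\bm{M}_0} \le \gamma^{q}\,\|\bm{x}^{\star}\|_{\bm{M}_0}$, so enforcing $\gamma^{q} \le \epsilon$ requires $q \ge \ln(1/\epsilon)/\ln(1/\gamma)$; since $\gamma$ is an absolute constant below $\tfrac12$, the factor $\ln(1/\gamma)$ is a positive constant and $q = \mathcal{O}\!\left(\log\tfrac{1}{\epsilon}\right)$ iterations suffice, which is exactly the claim.

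I expect the main obstacle to be the middle step: correctly translating the two-sided Loewner sandwich on $\bm{Z}_0$ (a statement about the symmetric operator) into a spectral-norm contraction of the \emph{non-symmetric} operator $\bm{I}-\bm{Z}_0\bm{M}_0$ measured in the weighted $\bm{M}_0$-norm. The device that resolves this is the conjugation by $\bm{M}_0^{1/2}$, which simultaneously symmetrizes the iteration operator and turns the $\bm{M}_0$-norm into the ordinary spectral norm, making the eigenvalue interval—and hence the precise contraction constant $e^{\sigma}-1$—explicit. Verifying that this constant is the larger of the two interval endpoints and that it falls below $\tfrac12$ under $\sigma < \tfrac13\ln 2$ is the one quantitative point that must be checked carefully.
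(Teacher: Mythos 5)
Your proof is correct. Note that the paper itself gives no proof of this lemma---it is imported verbatim from the cited Peng--Spielman work---so there is nothing internal to compare against; your argument (unrolling the loop into the preconditioned Richardson recursion $\bm{y}_k=(\bm{I}-\bm{Z}_0\bm{M}_0)\bm{y}_{k-1}+\bm{Z}_0\bm{b}_0$, conjugating by $\bm{M}_0^{1/2}$ to turn the two-sided Loewner bound $\bm{Z}_0\approx_{\sigma}\bm{M}_0^{-1}$ into the spectral bound $\left\|\bm{I}-\bm{S}\right\|_2\le e^{\sigma}-1<2^{1/3}-1<\tfrac12$, and iterating) is exactly the standard argument behind the cited result, with the one quantitative point ($e^{\sigma}-1\ge 1-e^{-\sigma}$ and $2^{1/3}-1<\tfrac12$) checked correctly. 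The only cosmetic quibble is that $\left\|\bm{I}-\bm{S}\right\|_2$ should be stated as $\le e^{\sigma}-1$ rather than $=$, which does not affect the conclusion.
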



\section{Distributed SDDM Solvers}\label{Sec:FullVersion}
Next, we distribute the parallel solver of Section~\ref{Sec:ParrallelSolver}. Similar to~\cite{Spielman}, we first introduce an approximate inverse chain which can be computed in a distributed fashion. This leads us to distributed version of the ``crude'' solver (see Section~\ref{Sec:RudeSolver}). Contrary to~\cite{Spielman}, however, we then generalize the ``crude'' distributed solver to allow for \emph{exact} solutions (see Section~\ref{Sec:Exact}) of Equation~\ref{lin_sys}. We summarize our results in the following theorem: 
\begin{theorem}\label{Main_Theorem}
There exists a distributed algorithm,
$\mathcal{A}\left(\{[\bm{M}_0]_{k1},\ldots [\bm{M}_0]_{kn}\}, [\bm{b}_0]_k, \epsilon\right)$, that computes $\epsilon$-close approximations to the solution of $\bm{M}_{0}\bm{x}=\bm{b}_{0}$ in $\mathcal{O}\left(n^2\log\kappa \log\left(\frac{1}{\epsilon}\right)\right)$ time steps, with $n$ the number of nodes in $\mathbb{G}$, $\kappa$ the condition number of  $\bm{M}_0$, and $[\bm{M}_{0}]_{k\cdot}$ the $k^{th}$ row of $\bm{M}_{0}$, as well as $\epsilon\in \left(0, \frac{1}{2}\right]$ representing the precision parameter. 
\end{theorem}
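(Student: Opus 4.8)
The plan is to assemble a distributed counterpart of the two-stage parallel scheme of Section~\ref{Sec:ParrallelSolver}: a distributed realization of $\text{ParallelRSolve}$ (Algorithm~\ref{Algo:Inv}), producing a crude inverse operator $\bm{Z}_0\approx_{\sum_i\epsilon_i}\bm{M}_0^{-1}$, wrapped inside a distributed realization of the preconditioned Richardson scheme $\text{ParallelESolve}$ (Algorithm~\ref{Algo:Inv2}). Granting the two building lemmas, Lemma~\ref{Rude_Alg_guarantee_Lemma} and Lemma~\ref{Exact_Alg_guarantee_lemma}, together with the conversion from inverse-approximation to solution-approximation in Lemma~\ref{lemma_approx_matrix_inverse}, the correctness of $\mathcal{A}$ is immediate once we exhibit an inverse approximated chain that (i) satisfies Definition~\ref{Def:InvChain}, (ii) has aggregate error $\sum_{i=1}^{d}\epsilon_i<\tfrac{1}{3}\ln 2$ so that Lemma~\ref{Exact_Alg_guarantee_lemma} applies, and (iii) can be both constructed and applied using only R-hop message passing. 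The proof therefore splits into a \emph{chain-design} step, a \emph{distributed-implementation} step, and a \emph{complexity-bookkeeping} step.

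First I would fix the specific chain. I would take the exact squaring recursion implied by Lemma~\ref{approx_inverse_formulae_lemma}, letting $\bm{M}_i$ be the standard splitting of $\bm{D}_{i-1}-\bm{A}_{i-1}\bm{D}_{i-1}^{-1}\bm{A}_{i-1}$ (Lemma~\ref{SDDM_splitting_lemma} guarantees this is again SDDM), so that conditions (1)--(2) of Definition~\ref{Def:InvChain} hold with small $\epsilon_i$ stemming only from the diagonal correction absorbed into $\bm{D}_i$. The crucial quantitative point is to bound the chain length: each squaring sends the normalized off-diagonal operator $\bm{D}_i^{-1/2}\bm{A}_i\bm{D}_i^{-1/2}$ to (essentially) its square, so its extreme eigenvalue decays like $(1-1/\kappa)^{2^i}$, and $d=\mathcal{O}(\log\kappa)$ squarings suffice to reach $\bm{D}_d\approx_{\epsilon_d}\bm{D}_d-\bm{A}_d$, i.e.\ condition (3). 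I would pick the per-level tolerances $\epsilon_i$ as a geometric sequence whose sum stays below $\tfrac{1}{3}\ln 2$; this is precisely where the logarithmic dependence on $\kappa$ is pinned down, and Lemma~\ref{lemma_approx_matrix_inverse} then certifies the quality of the crude solve while Lemma~\ref{Exact_Alg_guarantee_lemma} upgrades it to an $\epsilon$-close solution.

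Next I would describe the distributed implementation. The key structural fact is that forming $\bm{A}_{i-1}\bm{D}_{i-1}^{-1}\bm{A}_{i-1}$ doubles the graph reach, so $\bm{A}_i$ inherits the R-hop sparsity pattern of Section~\ref{Sec:Pre} with radius $2^i$. Hence node $\bm{v}_k$ can assemble all entries of its rows of $\bm{M}_i,\bm{D}_i,\bm{A}_i$ by gathering the corresponding rows from its $\min(2^i,\mathrm{diam}(\mathbb{G}))$-hop neighborhood, after which each product in the forward sweep $\bm{b}_i=(\bm{I}+\bm{A}_{i-1}\bm{D}_{i-1}^{-1})\bm{b}_{i-1}$ and the backward sweep $\bm{x}_i=\tfrac{1}{2}[\bm{D}_i^{-1}\bm{b}_i+(\bm{I}+\bm{D}_i^{-1}\bm{A}_i)\bm{x}_{i+1}]$ reduces to a purely local computation once the relevant vector components within that radius are exchanged. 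The outer Richardson loop of Algorithm~\ref{Algo:Inv2} then adds only a single one-hop multiplication by $\bm{M}_0$ and local vector updates per iteration.

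The complexity bookkeeping is the final, and I expect hardest, step, and it is where the $n^2$ factor is earned. Since a neighborhood may contain up to $\bm{\alpha}\le n$ nodes, each holding an $\mathcal{O}(n)$-length row, building $\bm{M}_i$ at a node requires collecting $\mathcal{O}(n^2)$ scalars across bounded-bandwidth edges within a radius capped by $\mathrm{diam}(\mathbb{G})\le n$; accounting for congestion this costs $\mathcal{O}(n^2)$ rounds per level, hence $\mathcal{O}(n^2\log\kappa)$ to assemble the whole chain, while each subsequent distributed product against $\bm{M}_i$ only gathers $\mathcal{O}(n)$ vector entries and so contributes lower-order $\mathcal{O}(n\log\kappa)$ terms per crude solve. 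Over the $q=\mathcal{O}(\log\tfrac{1}{\epsilon})$ Richardson steps of Lemma~\ref{Exact_Alg_guarantee_lemma} these combine into the stated $\mathcal{O}(n^2\log\kappa\log\tfrac{1}{\epsilon})$ bound. The main obstacle is exactly this accounting: one must simultaneously keep the chain short (the $\log\kappa$ factor, controlled by the quadratic spectral contraction above) and keep each level's communication bounded despite the geometric growth of the hop-radius, the cap at the diameter being what prevents the $2^i$ reach from exceeding $n$; verifying that the gathering/congestion argument genuinely fits inside the synchronized R-hop model, and that rounding in the squaring does not disturb the summability $\sum_i\epsilon_i<\tfrac{1}{3}\ln 2$, is the delicate part the distributed subroutines of Section~\ref{Sec:FullVersion} must supply.
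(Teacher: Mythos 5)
Your overall architecture matches the paper's: distribute Algorithm~\ref{Algo:Inv}, wrap it in the distributed Richardson loop of Algorithm~\ref{Algo:Inv2}, use a squaring chain of length $d=\mathcal{O}(\log\kappa)$ with $\mathcal{O}(n^2)$ work per level, and invoke Lemmas~\ref{Rude_Alg_guarantee_Lemma} and~\ref{Exact_Alg_guarantee_lemma}. But your chain-design step has a genuine gap. You take $\bm{M}_i$ to be the \emph{standard splitting} of $\bm{D}_{i-1}-\bm{A}_{i-1}\bm{D}_{i-1}^{-1}\bm{A}_{i-1}$, absorbing the diagonal of $\bm{A}_{i-1}\bm{D}_{i-1}^{-1}\bm{A}_{i-1}$ into $\bm{D}_i$, and you assert that the resulting per-level errors $\epsilon_i$ are small and can be "picked" as a summable geometric sequence. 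They are not free parameters, and they need not be small: condition (2) of Definition~\ref{Def:InvChain} requires $\bm{D}_{i-1}\preceq e^{\epsilon_{i-1}}\bm{D}_i$, but the diagonal correction $\left[\bm{A}_{i-1}\bm{D}_{i-1}^{-1}\bm{A}_{i-1}\right]_{jj}=\sum_k [\bm{A}_{i-1}]_{jk}^2/[\bm{D}_{i-1}]_{kk}$ can be arbitrarily close to $[\bm{D}_{i-1}]_{jj}$ for a barely dominant SDDM matrix (e.g.\ a single edge of weight $1-\delta$ with unit diagonal gives $\bm{D}_1\approx 2\delta\bm{I}$), forcing $\epsilon_{i-1}\approx\ln(1/\delta)$. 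The paper avoids this entirely by \emph{not} re-splitting: it fixes $\bm{D}_i=\bm{D}_0$ and sets $\bm{A}_i=\bm{D}_0\left(\bm{D}_0^{-1}\bm{A}_0\right)^{2^i}$ (allowing $\bm{A}_i$ a nonzero diagonal), so that $\bm{D}_i-\bm{A}_i=\bm{D}_{i-1}-\bm{A}_{i-1}\bm{D}_{i-1}^{-1}\bm{A}_{i-1}$ holds \emph{exactly} and $\epsilon_0=\dots=\epsilon_{d-1}=0$; the only nonzero error is $\epsilon_d$ from condition (3), bounded below $\frac{1}{3}\ln 2$ via the spectral decay $\left(1-\frac{1}{\kappa}\right)^{2^d}$ with $d=\lceil\log(c\kappa)\rceil$ (Lemma~\ref{eps_d_lemma}). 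Your spectral-decay bound for $d$ is the right one, but it only closes the argument once the intermediate $\epsilon_i$ have been made to vanish.

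A second, smaller issue concerns the distributed primitive. You describe each node as "gathering rows" from a growing neighborhood, but in the squaring recursion node $\bm{v}_k$ needs \emph{columns} of $\left(\bm{A}_0\bm{D}_0^{-1}\right)^{2^{i-2}}$ while other nodes can only transmit their own \emph{rows} of a non-symmetric matrix; the paper's Algorithm~\ref{Algo:DisRudeApprox} resolves this through the symmetry of $\bm{D}_0^{-1}\left(\bm{A}_0\bm{D}_0^{-1}\right)^{2^{i-2}}$, i.e.\ $\left[\left(\bm{A}_0\bm{D}_0^{-1}\right)^{2^{i-2}}\right]_{rj}=\frac{[\bm{D}_0]_{rr}}{[\bm{D}_0]_{jj}}\left[\left(\bm{A}_0\bm{D}_0^{-1}\right)^{2^{i-2}}\right]_{jr}$, and this is what makes the $\mathcal{O}(n^2+\mathrm{diam}(\mathbb{G}))$ per-level cost legitimate. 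Also note that Theorem~\ref{Main_Theorem} is the full-communication version; the R-hop restriction and the parameter $\bm{\alpha}$ you invoke belong to Theorem~\ref{Main_TheoremTwo}, so the diameter-capping and congestion discussion is not needed here.
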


Note that for each node $\bm{v}_k\in \mathbb{V}$, the input information for algorithm $\mathcal{A}$ is the $k^{th}$ row of $\bm{M}_0$ (i.e., the weights of the edges adjacent to $\bm{v}_i$), the precision parameter $\epsilon$, and the $k^{th}$ component of $[\bm{b}_0]$ (i.e., $[\bm{b}_{0}]_{k}$), easily rendering a distributed solver. 

\subsection{``Crude'' Distributed SDDM Solver}\label{Sec:RudeSolver}
Starting from $\bm{M}_0 = \bm{D}_0 - \bm{A}_0$, consider the collection $\mathcal{C} = \{\bm{A}_0,\bm{D}_0,\bm{A}_1, \bm{D}_1,\ldots, \bm{A}_d, \bm{D}_d\}$, where $\bm{D}_{k}=\bm{D}_{0}$, and $\bm{A}_{k}=\bm{D}_{0}\left(\bm{D}_{0}^{-1}\bm{A}_{0}\right)^{2^{k}}$, for $k=\{1,\dots,d\}$ with $\bm{D}_{0}=\bm{D}_{0}$, and $\bm{A}_{0}=\bm{A}_{0}$. Since the magnitude of the eigenvalues of $\bm{D}^{-1}_0\bm{A}_0$ is strictly less than 1, $\left(\bm{D}^{-1}_0\bm{A}_0\right)^{2^{k}}$ tends to zero as $k$ increases which reduces the length of the chain needed for the distributed solver.  It is easy to verify that $\mathcal{C}$ is an inverse approximated chain, since: (1) $\bm{D}_i - \bm{A}_i \approx_{\epsilon_{i-1}} \bm{D}_{i-1} - \bm{A}_{i-1}\bm{D}^{-1}_{i-1}\bm{A}_{i-1}$ with $\epsilon_i = 0$ for $i = 1,\ldots, d$, (2) $\bm{D}_{i}\approx_{\epsilon_{i-1}}\bm{D}_{i-1}$ with $\epsilon_i = 0$ for $i = 1,\ldots, d$, and (3) $\bm{D}_d\approx_{\epsilon_d}\bm{D}_d - \bm{A}_d$. Using the above, Algorithm~\ref{Algo:DisRudeApprox} (our first contribution) describes the distributed version of the ``crude'' parallel solver, which returns the $k^{th}$ component of the approximate solution vector, $[\bm{x}_{0}]_{k}$. Each node, $\bm{v}_k\in \mathbb{V}$, receives the $k^{th}$ row of $\bm{M}_0$, the $k^{th}$ value of $\bm{b}_{0}$ (i.e., $[\bm{b}_0]_k$), and the length of the inverse approximated chain $d$ as inputs. It operates in two parts, \textbf{Part One} and \textbf{Part Two}. The first, computes the $k^{th}$ component of $\bm{b}_{i}$, $[\bm{b}_{i}]_{k}$ for $i\in\{1,\dots,d\}$, which is then used in \textbf{Part Two} to return $[\bm{x}_{0}]_{k}$.

\begin{algorithm}[t!]
  \caption{
 		 \hspace{0em} $\text{DistrRSolve}\Big(\left\{[\bm{M}_0]_{k1},\ldots, [\bm{M}_0]_{kn}\right\}, 	 [\bm{b}_0]_k, d\Big)$
	   }
	   \label{Algo:DisRudeApprox}
  \begin{algorithmic}
	\State \textbf{Part One: Computing $[\bm{b}_{i}]_{k}$}
	\State $[\bm{b}_1]_k = [\bm{b}_0]_k + \sum_{j: \bm{v}_j\in \mathbb{N}_{1}\left(\bm{v}_k\right)}[\bm{A}_0\bm{D}^{-1}_0]_{kj}[\bm{b}_{0}]_j$	
	\For {$i = 2$ to $d$}
		\For {$j: \bm{v}_j \in \mathbb{N}_{2^{i-1}}\left(\bm{v}_k\right)$}
			\State$
			\left[(\bm{A}_0\bm{D}^{-1}_0)^{2^{i-1}}\right]_{kj} =  \sum_{r=1}^{n}\frac{[\bm{D}_0]_{rr}}{[\bm{D}_0]_{jj}}\left[(\bm{A}_0\bm{D}^{-1}_0)^{2^{i-2}}\right]_{kr} \left[(\bm{A}_0\bm{D}^{-1}_0)^{2^{i-2}}\right]_{jr}$
		\EndFor \textbf{end for}		
		\State $[\bm{b}_i]_k = [\bm{b}_{i-1}]_k + \sum_{j: \bm{v}_j \in \mathbb{N}_{2^{i-1}}\left(\bm{v}_k\right)}\left[(\bm{A}_0\bm{D}^{-1}_0)^{2^{i-1}}\right]_{kj}[\bm{b}_{i-1}]_{j}$
	\EndFor \textbf{end for}
	\\\hrulefill
	\State \textbf{Part Two: Computing $[\bm{x}_{0}]_{k}$}
	\State $[\bm{x}_d]_k = \sfrac{[\bm{b}_d]_k}{[\bm{D}_0]_{kk}}$
	\For {$i = d - 1$ to $1$}
		\For {$j: \bm{v}_j \in \mathbb{N}_{2^i}(\bm{v}_k)$}
			\State$
			\left[(\bm{D}^{-1}_0\bm{A}_0)^{2^i}\right]_{kj} = 
			\sum_{r=1}^{n}\frac{[\bm{D}_0]_{jj}}{[\bm{D}_0]_{rr}}\left[(\bm{D}^{-1}_0\bm{A}_0)^{2^{i-1}}\right]_{kr} \left[(\bm{D}^{-1}_0\bm{A}_0)^{2^{i-1}}\right]_{jr}$
		\EndFor \textbf{end for}		
		\State$
		[\bm{x}_i]_k =  \frac{[\bm{b}_i]_k}{2[\bm{D}_0]_{kk}} + \frac{[\bm{x}_{i+1}]_{k+1}}{2} +\frac{1}{2}\sum_{j: \bm{v}_j \in \mathbb{N}_{2^{i}}(\bm{v}_k)}\left[(\bm{D}^{-1}_0\bm{A}_0)^{2^i}\right]_{kj}[\bm{x}_{i+1}]_j$
	\EndFor \textbf{end for}
	\State $[\bm{x}_0]_k = \frac{[\bm{b}_0]_k}{2[\bm{D}_{0}]_{kk}} + \frac{[\bm{x}_{1}]_k}{2} + \frac{1}{2}\sum_{j:\bm{v}_j\in \mathbb{N}_{1}(\bm{v}_k)}[\bm{D}^{-1}_0\bm{A}_0]_{kj}[\bm{x}_1]_j$
    \State \textbf{return:} $[\bm{x}_0]_k$ 
  \end{algorithmic}
\end{algorithm}
\begin{algorithm}[h!]
  \caption{$\text{DistrESolve}\left(\{[\bm{M}_0]_{k1},\ldots, [\bm{M}_0]_{kn}\}, [\bm{b}_0]_k, d, \epsilon\right)$}

  \begin{algorithmic}
\State \textbf{Initialize}: $[\bm{y}_0]_k = 0$; 
$[\chi]_k = \text{DistrRSolve}\left(\{[\bm{M}_0]_{k1},\ldots, [\bm{M}_0]_{kn}\}, [\bm{b}_0]_k, d\right)$ (i.e., Algorithm~\ref{Algo:DisRudeApprox})	
	\For {$t=1$ to $q$}
		\State $\left[\bm{u}^{(1)}_{t}\right]_k = [\bm{D}_0]_{kk}[\bm{y}_{t-1}]_k - \sum_{j: \bm{v}_j\in \mathbb{N}_{1}(\bm{v}_k)}[\bm{A}_{0}]_{kj}[\bm{y}_{t-1}]_j$
		\State $\left[\bm{u}^{(2)}_{t}\right]_k = \text{DistrRSolve}(\{[\bm{M}_0]_{k1},\ldots, [\bm{M}_0]_{kn}\}, \left[\bm{u}^{(1)}_{t}\right]_k, d,)$
		\State $[\bm{y}_t]_k = [\bm{y}_{t-1}]_k - \left[\bm{u}^{(2)}_{t}\right]_k + [\chi]_k$ 
	\EndFor \textbf{end for}	
    \State $[\tilde{\bm{x}}]_k = [\bm{y}_q]_k$
    \State \textbf{return} $[\tilde{\bm{x}}]_k$ 
  \end{algorithmic}
  \label{Alg_ExactBla}  
\end{algorithm}

\textbf{Analysis of Algorithm~\ref{Algo:DisRudeApprox}:} Next, we present the theoretical analysis, showing that $\text{DistrRSolve}$ computes the $k^{th}$ component of the ``crude'' approximation of $\bm{x}^{\star}$. Further, we provide the time complexity analysis. 

\begin{lemma}\label{Rude_Dec_Alg_guarantee_Lemma}
Let $\bm{M}_0 = \bm{D}_0 - \bm{A}_0$ be the standard splitting of $\bm{M}_{0}$. Let $\bm{Z}^{\prime}_0$ be the operator defined by $\text{DistrRSolve}([\{[\bm{M}_0]_{k1},\ldots, [\bm{M}_0]_{kn}\}, [\bm{b}_0]_k, d)$ (i.e., $\bm{x}_0 = \bm{Z}^{\prime}_0\bm{b}_0$). Then
\begin{equation*}
\bm{Z}^{\prime}_0\approx_{\epsilon_d} \bm{M}^{-1}_0
\end{equation*}
Moreover, Algorithm~\ref{Algo:DisRudeApprox} requires  $\mathcal{O}\left(dn^2\right)$ time steps. 
\end{lemma}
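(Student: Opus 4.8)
The plan is to prove the two assertions separately, reducing the spectral guarantee to the parallel result of Lemma~\ref{Rude_Alg_guarantee_Lemma}. The key observation is that $\text{DistrRSolve}$ evaluates \emph{exactly} the same linear operator that $\text{ParallelRSolve}$ (Algorithm~\ref{Algo:Inv}) produces on the chain $\mathcal{C}$; call the latter $\bm{Z}_0$. Since $\mathcal{C}$ meets conditions (1) and (2) of Definition~\ref{Def:InvChain} with $\epsilon_0 = \cdots = \epsilon_{d-1} = 0$ and only (3) contributes a nonzero $\epsilon_d$, we have $\sum_{i=0}^{d}\epsilon_i = \epsilon_d$, so Lemma~\ref{Rude_Alg_guarantee_Lemma} yields $\bm{Z}_0 \approx_{\epsilon_d}\bm{M}_0^{-1}$. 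Hence the entire burden of the first claim is to establish $\bm{Z}'_0 = \bm{Z}_0$.

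First I would record the algebraic identities that collapse the chain into powers of a single matrix. Using $\bm{D}_k = \bm{D}_0$ and $\bm{A}_k = \bm{D}_0(\bm{D}_0^{-1}\bm{A}_0)^{2^k}$, a direct computation gives $\bm{A}_{i-1}\bm{D}_{i-1}^{-1}\bm{A}_{i-1} = \bm{D}_0(\bm{D}_0^{-1}\bm{A}_0)^{2^i} = \bm{A}_i$, confirming condition (1) with equality, and moreover $\bm{A}_{i-1}\bm{D}_{i-1}^{-1} = (\bm{A}_0\bm{D}_0^{-1})^{2^{i-1}}$ and $\bm{D}_i^{-1}\bm{A}_i = (\bm{D}_0^{-1}\bm{A}_0)^{2^i}$. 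Substituting these into the forward pass $\bm{b}_i = (\bm{I} + \bm{A}_{i-1}\bm{D}_{i-1}^{-1})\bm{b}_{i-1}$ and the backward pass $\bm{x}_i = \tfrac{1}{2}[\bm{D}_i^{-1}\bm{b}_i + (\bm{I}+\bm{D}_i^{-1}\bm{A}_i)\bm{x}_{i+1}]$ of Algorithm~\ref{Algo:Inv} reproduces, at the matrix level, the update rules of Algorithm~\ref{Algo:DisRudeApprox} (the identity block supplying the $\tfrac{1}{2}[\bm{x}_{i+1}]_k$ term). What then remains is to verify that the entrywise recursions and neighborhood truncations in the distributed algorithm implement these matrix operations exactly.

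Two structural facts carry that verification. The first is a transpose-symmetry identity: since $\bm{A}_0$ is symmetric and $\bm{D}_0$ diagonal, $\big((\bm{A}_0\bm{D}_0^{-1})^m\big)^{\mathsf{T}} = (\bm{D}_0^{-1}\bm{A}_0)^m = \bm{D}_0^{-1}(\bm{A}_0\bm{D}_0^{-1})^m\bm{D}_0$, whence $[(\bm{A}_0\bm{D}_0^{-1})^m]_{rj} = \tfrac{[\bm{D}_0]_{rr}}{[\bm{D}_0]_{jj}}\,[(\bm{A}_0\bm{D}_0^{-1})^m]_{jr}$, together with the mirror relation for $(\bm{D}_0^{-1}\bm{A}_0)^m$. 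Inserting this into $(\bm{A}_0\bm{D}_0^{-1})^{2^{i-1}} = \big((\bm{A}_0\bm{D}_0^{-1})^{2^{i-2}}\big)^2$ converts the column index of the naive product into a row index, which is exactly why node $\bm{v}_k$ can form $[(\bm{A}_0\bm{D}_0^{-1})^{2^{i-1}}]_{kj}$ from the $k$-th and $j$-th rows alone, matching the stated update. The second is sparsity: because $\bm{A}_0$ carries the $1$-hop pattern and $\bm{D}_0^{-1}$ is diagonal, $(\bm{A}_0\bm{D}_0^{-1})^m$ and its transpose are supported on $\mathbb{N}_{m}(\bm{v}_k)$, so restricting the sums to $\mathbb{N}_{2^{i-1}}(\bm{v}_k)$ discards only zeros. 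Thus the truncated distributed computation is exact, giving $\bm{Z}'_0 = \bm{Z}_0$ and therefore $\bm{Z}'_0 \approx_{\epsilon_d}\bm{M}_0^{-1}$.

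For the complexity bound I would count, per level $i$, the work at a single node: it forms $[(\bm{A}_0\bm{D}_0^{-1})^{2^{i-1}}]_{kj}$ for each of the at most $n$ indices $j$ in its neighborhood, each entry an $n$-term sum over $r$, for $\mathcal{O}(n^2)$ operations, and gathers the required neighbor rows within the same order; summing over the $d$ levels of both parts gives $\mathcal{O}(dn^2)$ time steps. I expect the main obstacle to lie in the third paragraph: rigorously certifying that the row-based squaring recursion under neighborhood truncation equals the intended matrix power. The transpose-symmetry bookkeeping (so that only row data, which a node can obtain, is ever needed) and the exactness of the $m$-hop support are precisely where a careless argument would either demand inaccessible column entries or silently introduce error.
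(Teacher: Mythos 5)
Your proposal is correct and follows essentially the same route as the paper's own proof: reduce the approximation guarantee to Lemma~\ref{Rude_Alg_guarantee_Lemma} via the chain with $\epsilon_0=\cdots=\epsilon_{d-1}=0$, justify the row-only entrywise recursion through the symmetry of $\bm{D}_0^{-1}(\bm{A}_0\bm{D}_0^{-1})^{m}$ and the $m$-hop sparsity pattern of the powers, and count $\mathcal{O}(n^2)$ work per level over $d$ levels. Your explicit insistence on certifying $\bm{Z}'_0=\bm{Z}_0$ exactly is a point the paper leaves largely implicit, but it is the same argument.
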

\begin{proof}
See Appendix. 
\end{proof}
\subsection{``Exact'' Distributed SDDM Solver}\label{Sec:Exact}
Having introduced $\text{DistrRSolve}$, we are now ready to present a distributed version of Algorithm~\ref{Algo:Inv2} which enables the computation of $\epsilon$ close solutions for $\bm{M}_{0}\bm{x}=\bm{b}_{0}$.  Similar to $\text{DistrRSolve}$, each node $\bm{v}_k\in \mathbb{V}$ receives the $k^{th}$ row of $\bm{M}_0$, $[\bm{b}_0]_k$, $d$ and a precision parameter $\epsilon$ as inputs. Node $\bm{v}_k$ then computes the $k^{th}$ component of the $\epsilon$  close approximation of $\bm{x}^{\star}$.  

\textbf{Analysis of Algorithm~\ref{Alg_ExactBla}:}
The following lemma shows that $\text{DistrESolve}$ computes the $k^{th}$ component of the $\epsilon$ close approximation of $x^{\star}$ and provides the time complexity analysis

\begin{lemma}\label{Dist_Exact_algorithm_guarantee_lemma}
Let $\bm{M}_0 = \bm{D}_0 - \bm{A}_0$ be the standard splitting. Further, let $\epsilon_d < \frac{1}{3}\ln2$ in the nverse approximated chain $\mathcal{C} = \{\bm{A}_0,\bm{D}_0,\bm{A}_1, \bm{D}_1,\ldots, \bm{A}_d, \bm{D}_d\}$. Then $\text{DistrESolve}(\{[\bm{M}_0]_{k1},\ldots, [\bm{M}_0]_{kn}\}, [b_0]_k, d, \epsilon)$ requires $\mathcal{O}\left(\log\frac{1}{\epsilon}\right)$ iterations to return the $k^{th}$ component of the $\epsilon$ close approximation for $x^{\star}$.
\end{lemma}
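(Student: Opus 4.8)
The plan is to show that $\text{DistrESolve}$ (Algorithm~\ref{Alg_ExactBla}) is nothing more than a componentwise, distributed execution of the preconditioned Richardson scheme $\text{ParallelESolve}$ (Algorithm~\ref{Algo:Inv2}), so that both the correctness and the $\mathcal{O}(\log\frac{1}{\epsilon})$ iteration bound follow directly from Lemma~\ref{Exact_Alg_guarantee_lemma}. Concretely, I would prove by induction on the iteration index $t$ that for every node $\bm{v}_k$ the scalar $[\bm{y}_t]_k$ produced by Algorithm~\ref{Alg_ExactBla} equals the $k^{th}$ entry of the vector $\bm{y}_t$ generated by Algorithm~\ref{Algo:Inv2}, when both are run on the chain $\mathcal{C} = \{\bm{A}_0,\bm{D}_0,\ldots,\bm{A}_d,\bm{D}_d\}$ and the same right-hand side $\bm{b}_0$.

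The base case is immediate: both algorithms initialize $\bm{y}_0 = \bm{0}$, and the preconditioner $[\chi]_k$ is, by Lemma~\ref{Rude_Dec_Alg_guarantee_Lemma}, exactly the $k^{th}$ component of $\chi = \text{ParallelRSolve}(\bm{M}_0,\ldots,\bm{M}_d,\bm{b}_0)$, since $\text{DistrRSolve}$ and $\text{ParallelRSolve}$ realize the \emph{same} linear operator $\bm{Z}'_0 = \bm{Z}_0$ on this chain. For the inductive step I would check the three update lines in turn. The first line computes $[\bm{u}^{(1)}_t]_k = [\bm{D}_0]_{kk}[\bm{y}_{t-1}]_k - \sum_{j:\bm{v}_j\in\mathbb{N}_1(\bm{v}_k)}[\bm{A}_0]_{kj}[\bm{y}_{t-1}]_j$; because $\bm{M}_0 = \bm{D}_0 - \bm{A}_0$ and the off-diagonal entries of $\bm{A}_0$ vanish outside $\mathbb{N}_1(\bm{v}_k)$, this scalar equals $[\bm{M}_0\bm{y}_{t-1}]_k$, matching $\bm{u}^{(1)}_t = \bm{M}_0\bm{y}_{t-1}$. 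The second line feeds $[\bm{u}^{(1)}_t]_k$ into $\text{DistrRSolve}$, which by Lemma~\ref{Rude_Dec_Alg_guarantee_Lemma} returns $[\bm{Z}'_0\bm{u}^{(1)}_t]_k = [\bm{Z}_0\bm{u}^{(1)}_t]_k$, i.e. the $k^{th}$ entry of $\bm{u}^{(2)}_t = \text{ParallelRSolve}(\bm{M}_0,\ldots,\bm{M}_d,\bm{u}^{(1)}_t)$. The third line is the scalar image of the affine update $\bm{y}_t = \bm{y}_{t-1} - \bm{u}^{(2)}_t + \chi$, closing the induction and giving $[\tilde{\bm{x}}]_k = [\bm{y}_q]_k$ equal to the $k^{th}$ entry of the parallel output.

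With the simulation established, I would invoke Lemma~\ref{Exact_Alg_guarantee_lemma}. Its hypothesis is $\sum_{i=1}^d \epsilon_i < \frac{1}{3}\ln 2$; for the chain $\mathcal{C}$ constructed here, combining Lemma~\ref{Rude_Alg_guarantee_Lemma} (which gives $\bm{Z}_0\approx_{\sum_{i=0}^d\epsilon_i}\bm{M}_0^{-1}$) with Lemma~\ref{Rude_Dec_Alg_guarantee_Lemma} (which gives $\bm{Z}'_0\approx_{\epsilon_d}\bm{M}_0^{-1}$) forces $\sum_{i=0}^d\epsilon_i = \epsilon_d$, so that all approximation constants except $\epsilon_d$ vanish and $\sum_{i=1}^d\epsilon_i = \epsilon_d$. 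The standing assumption $\epsilon_d < \frac{1}{3}\ln 2$ in the statement is therefore precisely what Lemma~\ref{Exact_Alg_guarantee_lemma} demands, so $\text{ParallelESolve}$ --- and by the simulation also $\text{DistrESolve}$ --- produces an $\epsilon$-close approximation of $\bm{x}^{\star}$ in $q = \mathcal{O}(\log\frac{1}{\epsilon})$ iterations.

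The main obstacle I anticipate is the bookkeeping required to justify the exact operator identity $\bm{Z}'_0 = \bm{Z}_0$ under the reindexed chain $\mathcal{C} = \{\bm{A}_0,\bm{D}_0,\ldots,\bm{A}_d,\bm{D}_d\}$, together with confirming that every $\text{DistrRSolve}$ call and the matrix-vector product $\bm{M}_0\bm{y}_{t-1}$ inside Algorithm~\ref{Alg_ExactBla} use only information available within the prescribed $2^{i}$-hop neighbourhoods, so that the scheme remains genuinely distributed; once that correspondence is in place, the iteration count is a direct corollary of Lemma~\ref{Exact_Alg_guarantee_lemma} and needs no further estimates.
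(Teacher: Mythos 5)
Your proposal is correct and follows essentially the same route as the paper: recognize the iterations of $\text{DistrESolve}$ as a componentwise execution of the preconditioned Richardson scheme with preconditioner $\bm{Z}'_0$, invoke Lemma~\ref{Rude_Dec_Alg_guarantee_Lemma} to get $\bm{Z}'_0\approx_{\epsilon_d}\bm{M}_0^{-1}$, and conclude via Lemma~\ref{Exact_Alg_guarantee_lemma}. One small quibble: the identity $\sum_{i=0}^{d}\epsilon_i=\epsilon_d$ is not ``forced'' by comparing the two approximation statements (the relation $\approx_{\alpha}$ is not tight), but it does hold here simply because the chain $\mathcal{C}$ is constructed with $\epsilon_i=0$ for all $i<d$.
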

\begin{proof}
See Appendix. 
\end{proof}

The following lemma provides the time complexity analysis of $\text{DistrESolve}$:

\begin{lemma}\label{time_complexity_of_distresolve}
Let $\bm{M}_0 =\bm{D}_0 - \bm{A}_0$ be the standard splitting. Further, let $\epsilon_d < \frac{1}{3}\ln2$ in the inverse approximated chain $\mathcal{C} = \{\bm{A}_0,\bm{D}_0,\bm{A}_1, \bm{D}_1,\ldots, \bm{A}_d, \bm{D}_d\}$. Then, $\text{DistrESolve}(\{[\bm{M}_0]_{k1},\ldots, [\bm{M}_0]_{kn}\}, [\bm{b}_0]_k, d, \epsilon)$ requires $\mathcal{O}\left(dn^2\log(\frac{1}{\epsilon})\right)$ time steps.
\end{lemma}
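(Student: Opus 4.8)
The plan is to establish the time complexity by combining the iteration count from Lemma~\ref{Dist_Exact_algorithm_guarantee_lemma} with the per-iteration cost, which is dominated by the calls to $\text{DistrRSolve}$ whose cost is already quantified in Lemma~\ref{Rude_Dec_Alg_guarantee_Lemma}. Specifically, by Lemma~\ref{Dist_Exact_algorithm_guarantee_lemma} the outer loop of Algorithm~\ref{Alg_ExactBla} runs for $q = \mathcal{O}\left(\log\frac{1}{\epsilon}\right)$ iterations. I would then account for the work done in each single iteration $t$ of that loop and multiply the two quantities, plus the one-time initialization cost.

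The key steps, in order, would be as follows. First, I would bound the cost of the initialization line: it consists of a single call $[\chi]_k = \text{DistrRSolve}(\cdots, d)$, which by Lemma~\ref{Rude_Dec_Alg_guarantee_Lemma} takes $\mathcal{O}(dn^2)$ time steps. Second, I would analyze the three lines inside the loop. The computation of $\left[\bm{u}^{(1)}_{t}\right]_k$ requires, at node $\bm{v}_k$, a local product $[\bm{D}_0]_{kk}[\bm{y}_{t-1}]_k$ together with a sum over the one-hop neighborhood $\mathbb{N}_{1}(\bm{v}_k)$; since $|\mathbb{N}_{1}(\bm{v}_k)| \le n$, this costs at most $\mathcal{O}(n)$ time steps (and only one round of communication to gather neighbors' values of $\bm{y}_{t-1}$). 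The computation of $\left[\bm{u}^{(2)}_{t}\right]_k$ is again a single invocation of $\text{DistrRSolve}$, contributing $\mathcal{O}(dn^2)$ by Lemma~\ref{Rude_Dec_Alg_guarantee_Lemma}. The update $[\bm{y}_t]_k = [\bm{y}_{t-1}]_k - \left[\bm{u}^{(2)}_{t}\right]_k + [\chi]_k$ is purely local and costs $\mathcal{O}(1)$. Thus the per-iteration cost is dominated by the $\text{DistrRSolve}$ call and equals $\mathcal{O}(dn^2)$. Third, I would multiply: total cost is initialization plus $q$ iterations, i.e. $\mathcal{O}(dn^2) + \mathcal{O}\left(\log\frac{1}{\epsilon}\right)\cdot \mathcal{O}(dn^2) = \mathcal{O}\left(dn^2\log\frac{1}{\epsilon}\right)$, which is the claimed bound.

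I do not expect a genuine obstacle here, since both ingredients are supplied by earlier lemmas; the proof is essentially a bookkeeping argument. The one point that warrants care is confirming that the auxiliary lines (computing $\left[\bm{u}^{(1)}_{t}\right]_k$ and the final local update) are genuinely subsumed by the $\mathcal{O}(dn^2)$ term rather than dominating it. Since $d \ge 1$ and the neighborhood sum costs only $\mathcal{O}(n)$, these lower-order contributions are clearly absorbed into $\mathcal{O}(dn^2)$. A secondary subtlety worth noting is that the hypothesis $\epsilon_d < \frac{1}{3}\ln 2$ is what guarantees, via Lemma~\ref{Dist_Exact_algorithm_guarantee_lemma}, that $q = \mathcal{O}\left(\log\frac{1}{\epsilon}\right)$ suffices for convergence; I would invoke that lemma directly rather than re-deriving the iteration count. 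With the iteration count and per-call cost both inherited from prior results, the conclusion follows immediately by the product of the two.
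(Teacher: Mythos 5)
Your proof is correct and follows essentially the same route as the paper's: the paper likewise combines the $\mathcal{O}(\log\frac{1}{\epsilon})$ iteration count from Lemma~\ref{Dist_Exact_algorithm_guarantee_lemma} with the $\mathcal{O}(dn^2)$ per-call cost of $\text{DistrRSolve}$ from Lemma~\ref{Rude_Dec_Alg_guarantee_Lemma} and multiplies. Your additional bookkeeping (initialization cost, the $\mathcal{O}(n)$ neighborhood sum, the $\mathcal{O}(1)$ local update) is a welcome refinement the paper leaves implicit, but it is the same argument.
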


\begin{proof}
See Appendix. 
\end{proof}
\subsection{Length of the Inverse Chain}\label{Sec:Length}
Both introduced algorithms depend on the length of the inverse approximated chain, $d$. Here, we provide an analysis to determine the value of $d$ which guarantees $\epsilon_d < \frac{1}{3}\ln2$ in $\mathcal{C} = \{\bm{A}_0,\bm{D}_0,\bm{A}_1, \bm{D}_1,\ldots, \bm{A}_d, \bm{D}_d\}$:

\begin{lemma}\label{eps_d_lemma}
Let $\bm{M}_0 = \bm{D}_0 - \bm{A}_0$ be the standard splitting and $\kappa$ denote the condition number of $\bm{M}_0$. Consider the inverse approximated chain $\mathcal{C} = \{\bm{A}_0,\bm{D}_0,\bm{A}_1, \bm{D}_1,\ldots, \bm{A}_d, \bm{D}_d\}$ with a length $ d =  \lceil \log \left(2\ln\left(\frac{\sqrt[3]{2}}{\sqrt[3]{2} - 1}\right)\kappa\right)\rceil$, then $\bm{D}_0\approx_{\epsilon_d} \bm{D}_0 - \bm{D}_0\left(\bm{D}^{-1}_0\bm{A}_0\right)^{2^d}$, with $\epsilon_d < \frac{1}{3}\ln2$.
\end{lemma}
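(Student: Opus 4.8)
The plan is to recognize that the asserted relation is exactly condition~(3) in the definition of the inverse approximated chain for the specific collection $\mathcal{C}$, since $\bm{M}_d = \bm{D}_d - \bm{A}_d = \bm{D}_0 - \bm{D}_0(\bm{D}_0^{-1}\bm{A}_0)^{2^d}$, and to reduce it to a scalar statement about the spectrum of $\bm{D}_0^{-1}\bm{A}_0$. First I would symmetrize by setting $\bm{B} = \bm{D}_0^{-1/2}\bm{A}_0\bm{D}_0^{-1/2}$, so that $\bm{D}_0^{-1}\bm{A}_0 = \bm{D}_0^{-1/2}\bm{B}\bm{D}_0^{1/2}$ and hence $\bm{D}_0 - \bm{D}_0(\bm{D}_0^{-1}\bm{A}_0)^{2^d} = \bm{D}_0^{1/2}(\bm{I} - \bm{B}^{2^d})\bm{D}_0^{1/2}$. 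Applying Lemma~\ref{approx_lemma_facts}(6) with $\bm{V} = \bm{D}_0^{1/2}$, the claim $\bm{D}_0 \approx_{\epsilon_d} \bm{D}_0^{1/2}(\bm{I}-\bm{B}^{2^d})\bm{D}_0^{1/2}$ is equivalent to $\bm{I} \approx_{\epsilon_d} \bm{I} - \bm{B}^{2^d}$, i.e. to $e^{-\epsilon_d}\bm{I} \preceq \bm{I} - \bm{B}^{2^d} \preceq e^{\epsilon_d}\bm{I}$.

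Next I would read off the eigenvalues. Since $\bm{B}$ is symmetric with spectral radius $\lambda := \rho(\bm{D}_0^{-1}\bm{A}_0) < 1$, the matrix $\bm{B}^{2^d}$ is positive semidefinite with eigenvalues in $[0,\lambda^{2^d}]$, so $\bm{I} - \bm{B}^{2^d}$ has eigenvalues in $[1-\lambda^{2^d},\,1]$. The upper Loewner bound then holds for free (every eigenvalue is $\le 1 \le e^{\epsilon_d}$), and the only binding requirement is $1 - \lambda^{2^d} \ge e^{-\epsilon_d}$; the smallest admissible constant is therefore $\epsilon_d = -\ln(1 - \lambda^{2^d})$. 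Imposing $\epsilon_d < \tfrac{1}{3}\ln 2$ is equivalent to $1 - \lambda^{2^d} > 2^{-1/3}$, i.e. to $\lambda^{2^d} < \tfrac{\sqrt[3]{2}-1}{\sqrt[3]{2}}$; taking logarithms, it suffices to show $2^d > \ln\!\big(\tfrac{\sqrt[3]{2}}{\sqrt[3]{2}-1}\big)/\ln(1/\lambda)$.

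The crux is then a spectral estimate linking $\lambda$ to the condition number, namely $\ln(1/\lambda) \ge \tfrac{1}{2\kappa}$. For this I would use $\ln(1/\lambda) \ge \tfrac12\ln(1/\lambda^2) \ge \tfrac12(1-\lambda^2)$ and reduce to proving $1-\lambda^2 \ge \tfrac1\kappa$. Observing that $\bm{I} - \bm{B}^2 = \bm{D}_0^{-1/2}\big(\bm{D}_0 - \bm{A}_0\bm{D}_0^{-1}\bm{A}_0\big)\bm{D}_0^{-1/2}$, and that $\bm{D}_0 - \bm{A}_0\bm{D}_0^{-1}\bm{A}_0$ is SDDM by Lemma~\ref{SDDM_splitting_lemma}, I would bound $1-\lambda^2 = \lambda_{\min}(\bm{I}-\bm{B}^2) \ge \lambda_{\min}(\bm{D}_0 - \bm{A}_0\bm{D}_0^{-1}\bm{A}_0)/\lambda_{\max}(\bm{D}_0)$, and use that a diagonal entry of a symmetric positive definite matrix never exceeds its largest eigenvalue, so $\lambda_{\max}(\bm{D}_0) = \max_i[\bm{M}_0]_{ii} \le \lambda_{\max}(\bm{M}_0)$. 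Once $\ln(1/\lambda)\ge\tfrac1{2\kappa}$ is established, the stated choice $d = \lceil \log(2\ln(\tfrac{\sqrt[3]{2}}{\sqrt[3]{2}-1})\kappa)\rceil$ gives $2^d \ge 2\kappa\ln(\tfrac{\sqrt[3]{2}}{\sqrt[3]{2}-1}) \ge \ln(\tfrac{\sqrt[3]{2}}{\sqrt[3]{2}-1})/\ln(1/\lambda)$, which is exactly the inequality needed.

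The main obstacle I anticipate is the last comparison inside the spectral estimate, namely controlling $\lambda$ through both extreme eigenvalues of $\bm{D}_0^{-1}\bm{A}_0$ at once. The largest eigenvalue $\mu_{\max}$ is easy: $1-\mu_{\max}$ is the smallest generalized eigenvalue of $(\bm{M}_0,\bm{D}_0)$, which is at least $\lambda_{\min}(\bm{M}_0)/\lambda_{\max}(\bm{D}_0) \ge 1/\kappa$. The delicate side is the most negative eigenvalue $\mu_{\min}$, where $1-\lambda^2$ forces me to lower-bound $\lambda_{\min}(\bm{D}_0 - \bm{A}_0\bm{D}_0^{-1}\bm{A}_0)$ (equivalently to control the smallest eigenvalue of the ``signless'' object $\bm{D}_0 + \bm{A}_0$) by a multiple of $\lambda_{\min}(\bm{M}_0)$. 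This comparison is \emph{not} a Loewner domination, since a $2\times 2$ check shows $\bm{A}_0 - \bm{A}_0\bm{D}_0^{-1}\bm{A}_0$ is indefinite, so it must be carried out at the level of eigenvalues; this is also where the factor $2$ in the definition of $d$ ultimately originates.
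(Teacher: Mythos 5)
Your route is essentially the paper's: the paper also reduces the claim to the two spectral facts that (i) if the eigenvalues of $\bm{D}_0^{-1}\bm{A}_0$ lie in $[-\alpha,\beta]$ then $(1-\beta)\bm{D}_0\preceq\bm{D}_0-\bm{A}_0\preceq(1+\alpha)\bm{D}_0$, and (ii) $\rho(\bm{D}_0^{-1}\bm{A}_0)\le 1-\tfrac{1}{\kappa}$ --- but it imports both as Propositions 5.3 and 5.4 of Spielman--Peng rather than proving them. Your congruence argument with $\bm{V}=\bm{D}_0^{1/2}$ replaces (i) cleanly, and your observation that $\bm{B}^{2^d}\succeq 0$ makes the upper Loewner bound automatic is a small simplification over the paper, which separately checks $e^{\epsilon_d}\ge 1+\gamma$. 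The bookkeeping with $d$ also checks out: the paper spends its factor of $2$ on the slack of taking $c$ twice the minimal value in $(1-\tfrac1\kappa)^{c\kappa}\le e^{-c}$, while you spend it on $\ln(1/\lambda)\ge\tfrac12(1-\lambda^2)$; both close.

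The one genuine gap is the step you flag and leave open: the lower bound on $1+\mu_{\min}$, equivalently on $\lambda_{\min}(\bm{D}_0+\bm{A}_0)$. This is exactly the content of the cited Proposition 5.3, so it cannot be left as an ``anticipated obstacle''; without it the chain $1-\lambda^2\ge \tfrac1\kappa$ is unproven. It does close in one line. Since $\bm{A}_0$ is entrywise non-negative, for every $\bm{x}\in\mathbb{R}^n$ (writing $|\bm{x}|$ for the entrywise absolute value),
\begin{equation*}
\bm{x}^{\mathsf{T}}\left(\bm{D}_0+\bm{A}_0\right)\bm{x}=\sum_i[\bm{D}_0]_{ii}x_i^2+\sum_{i\ne j}[\bm{A}_0]_{ij}x_ix_j\ \ge\ \sum_i[\bm{D}_0]_{ii}|x_i|^2-\sum_{i\ne j}[\bm{A}_0]_{ij}|x_i||x_j|=|\bm{x}|^{\mathsf{T}}\bm{M}_0|\bm{x}|\ \ge\ \lambda_{\min}\left(\bm{M}_0\right)\|\bm{x}\|^2,
\end{equation*}
hence $1+\mu_{\min}=\lambda_{\min}\bigl(\bm{D}_0^{-1/2}(\bm{D}_0+\bm{A}_0)\bm{D}_0^{-1/2}\bigr)\ge\lambda_{\min}(\bm{M}_0)/\lambda_{\max}(\bm{D}_0)\ge \tfrac1\kappa$, using your own observation that $\lambda_{\max}(\bm{D}_0)\le\lambda_{\max}(\bm{M}_0)$. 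Combined with the scalar inequality $1-\mu^2=(1-\mu)(1+\mu)\ge\min(1-\mu,\,1+\mu)$ for $\mu\in[-1,1]$, this yields $1-\lambda^2\ge\min(1-\mu_{\max},\,1+\mu_{\min})\ge\tfrac1\kappa$ directly; in particular you need not pass through $\lambda_{\min}(\bm{D}_0-\bm{A}_0\bm{D}_0^{-1}\bm{A}_0)$ at all, whose comparison with $\lambda_{\min}(\bm{M}_0)$ you correctly note is not a Loewner domination. With this inserted, your argument is complete and matches the paper's in substance.
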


\begin{proof}
The proof will be given as a collection of claims:
\begin{claim}
Let $\kappa$ be the condition number of $\bm{M}_0 = \bm{D}_0 - \bm{A}_0$, and $\{\lambda_i\}^{n}_{i=1}$ denote the eigenvalues of $\bm{D}^{-1}_0\bm{A}_0$. Then,
$|\lambda_i| \le 1 - \frac{1}{\kappa}$,  
for all $i=1,\ldots, n$
\end{claim}
\begin{proof}
See Appendix. 
\end{proof}

Notice that if $\lambda_i$ represented an eigenvalue of $\bm{D}^{-1}_0\bm{A}_0$, then $\lambda^{r}_i$ is an eigenvalue of $\left(\bm{D}^{-1}_0\bm{A}_0\right)^r$ for all $r\in \mathbb{N}$. Therefore, we have 
\begin{equation}\label{spect_radius}
\rho\left(\left(\bm{D}^{-1}_0\bm{A}_0\right)^{2^d}\right)\le \left(1 - \frac{1}{\kappa}\right)^{2^d}
\end{equation}

\begin{claim}
Let $\bm{M}$ be an SDDM matrix and consider the splitting $\bm{M} =\bm{D} -\bm{A}$, with $\bm{D}$ being non negative diagonal and $\bm{A}$ being symmetric non negative. Further, assume that the eigenvalues of $\bm{D}^{-1}\bm{A}$ lie between $-\alpha$ and $\beta$. Then, $
(1 - \beta)\bm{D} \preceq \bm{D} -\bm{A} \preceq (1 + \alpha)\bm{D}$. 
\end{claim}
\begin{proof}
See Appendix. 
\end{proof}

Combining the above results, give
$\left[1 - \left(1 - \frac{1}{\kappa}\right)^{2^d}\right]\bm{D}_d\preceq \bm{D}_d - \bm{A}_d\preceq\left[1 + \left(1 - \frac{1}{\kappa}\right)^{2^d}\right]\bm{D}_d $
Hence, to guarantee that $\bm{D}_d\approx_{\epsilon_d}\bm{D}_d - \bm{A}_d$, the following system must be satisfied: (1) $e^{-\epsilon_d} \le 1 - \left(1 - \frac{1}{\kappa}\right)^{2^d}$, and (2)  
$e^{\epsilon_d} \ge 1 + \left(1 - \frac{1}{\kappa}\right)^{2^d}$. Introducing $\gamma$ for $\left(1 - \frac{1}{\kappa}\right)^{2^d}$, we arrive at: (1) 
$\epsilon_d \ge \ln\left(\frac{1}{1 - \gamma}\right)$, and (2) 
$\epsilon_d \ge \ln(1 + \gamma)$. Hence, $\epsilon_d \ge \max\left\lbrace \ln\left(\frac{1}{1 - \gamma}\right), \ln(1 + \gamma)\right\rbrace = \ln\left(\frac{1}{1 - \gamma}\right)$. 
Now, notice that if $d = \lceil \log c \kappa\rceil$ then,
$\gamma = \left(1 - \frac{1}{\kappa}\right)^{2^d} =  \left(1 - \frac{1}{\kappa}\right)^{{c}\kappa} \le \frac{1}{e^c}$. 
Hence, $\ln\left(\frac{1}{1 - \gamma}\right) \le \ln\left(\frac{e^c}{e^c - 1}\right)$. This gives $c = \lceil 2\ln\left(\frac{\sqrt[3]{2}}{\sqrt[3]{2} - 1}\right)\rceil$, implying $\epsilon_d = \ln\left(\frac{e^c}{e^c - 1}\right) < \frac{1}{3}\ln2$.
\end{proof}
Using the above results the time complexity of $\text{DistrESolve}$ with $d =  \lceil \log \left(2\ln\left(\frac{\sqrt[3]{2}}{\sqrt[3]{2} - 1}\right)\kappa\right)\rceil$ is $\mathcal{O}\left(n^2\log\kappa \log(\frac{1}{\epsilon})\right)$ times steps, which concludes the proof of Theorem \ref{Main_Theorem}.

\section{Distributed R-Hop SDDM Solver}
Though the previous algorithm requires no knowledge of the graph's topology, but it requires the information of all other nodes (i.e., full communication). We will outline an R-Hop version of the algorithm in which communication is restricted to the R-Hop neighborhood between nodes. The following theorem summarizes these main results:
\begin{theorem}\label{Main_TheoremTwo}
There is a decentralized algorithm 
$\mathcal{A}(\{[\bm{M}_0]_{k1},\ldots [\bm{M}_0]_{kn}\}, [\bm{b}_0]_k, R, \epsilon)$, that uses only $R$-Hop communication between the nodes and computes $\epsilon$-close solutions to $\bm{M}_{0}\bm{x}=\bm{b}_{0}$ in $\mathcal{O}\left(\left(\frac{\alpha\kappa}{R} + \alpha Rd_{max}\right) \log(\frac{1}{\epsilon})\right)$ time steps, with $n$ being the number of nodes in $\mathbb{G}$, $d_{max}$ denoting the maximal degree, $\kappa$ the condition number of $\bm{M}_0$, and $\alpha = \min\left\{n, \frac{\left(d^{R+1}_{\text{max}} - 1\right)}{\left(d_{\text{max}} - 1\right)}\right\}$ representing the upper bound on the size of the R-hop neighborhood $\forall \bm{v}\in \mathbb{V}$, and $\epsilon\in (0, \frac{1}{2}]$ being the precision parameter. 
\end{theorem}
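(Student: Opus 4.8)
The plan is to realize the \emph{same} inverse approximated chain $\mathcal{C}=\{\bm{A}_0,\bm{D}_0,\ldots,\bm{A}_d,\bm{D}_d\}$ used in Theorem~\ref{Main_Theorem}, but to reorganize the computation so that no node ever needs information beyond its $R$-hop neighborhood in a single logical step. The obstruction in Algorithm~\ref{Algo:DisRudeApprox} is that forming $[(\bm{A}_0\bm{D}_0^{-1})^{2^{i-1}}]_{kj}$ requires node $\bm{v}_k$ to store entries indexed by its entire $2^{i-1}$-hop neighborhood, which becomes the whole graph once $2^{i-1}$ exceeds $\text{diam}(\mathbb{G})$. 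I would therefore never materialize these high powers. Instead, during a setup phase I build the $R$-hop operator $\bm{P}=(\bm{A}_0\bm{D}_0^{-1})^{R}$ (and, as a byproduct, all intermediate powers $(\bm{A}_0\bm{D}_0^{-1})^{1},\ldots,(\bm{A}_0\bm{D}_0^{-1})^{R}$, each of which is $R$-hop sparse), and then obtain every product $(\bm{A}_0\bm{D}_0^{-1})^{2^{i-1}}\bm{v}$ and $(\bm{D}_0^{-1}\bm{A}_0)^{2^{i}}\bm{v}$ appearing in Parts One and Two of $\text{DistrRSolve}$ by applying $\bm{P}$ (or its transpose $\bm{P}^{\mathsf{T}}=(\bm{D}_0^{-1}\bm{A}_0)^R$) to the current vector $\lfloor 2^{i-1}/R\rfloor$ times, followed by a single application of the precomputed remainder power $(\bm{A}_0\bm{D}_0^{-1})^{2^{i-1}\bmod R}$, exchanging only $R$-hop boundary values between successive applications.

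First I would settle correctness, which I expect to reduce cleanly to the lemmas already established. Since $\bm{P}^{m}=(\bm{A}_0\bm{D}_0^{-1})^{Rm}$, matrix-power composition guarantees that after $m$ parallel applications each node $\bm{v}_k$ holds exactly $[(\bm{A}_0\bm{D}_0^{-1})^{Rm}\bm{v}]_k$, so the $R$-hop scheme produces exactly the same iterates $\bm{b}_i,\bm{x}_i$ as $\text{DistrRSolve}$. Consequently the approximation guarantee $\bm{Z}_0'\approx_{\epsilon_d}\bm{M}_0^{-1}$ of Lemma~\ref{Rude_Dec_Alg_guarantee_Lemma}, and hence the $\epsilon$-closeness and $q=\mathcal{O}(\log\frac1\epsilon)$ iteration count obtained by wrapping it in the Richardson scheme of Lemma~\ref{Dist_Exact_algorithm_guarantee_lemma}, carry over unchanged; the chain length $d=\lceil\log(2\ln(\tfrac{\sqrt[3]2}{\sqrt[3]2-1})\kappa)\rceil$ of Lemma~\ref{eps_d_lemma} still forces $\epsilon_d<\frac13\ln2$, and in particular $2^{d}=\mathcal{O}(\kappa)$.

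Next I would bound the two cost contributions separately and combine them. For the setup cost, computing $\bm{P}$ at node $\bm{v}_k$ amounts to forming $R$ successive products with the $1$-hop-sparse, degree-$\le d_{\max}$ matrix $\bm{A}_0\bm{D}_0^{-1}$; each of the $\le\alpha$ entries of the current row costs $\mathcal{O}(d_{\max})$ and there are $R$ such sparse multiplications, giving $\mathcal{O}(\alpha R d_{\max})$ per node (this also covers the $R$ rounds needed to gather the $R$-hop subgraph). For the application cost, each application of $\bm{P}$ is a single weighted sum over the $R$-hop neighborhood and costs $\mathcal{O}(\alpha)$, while realizing all chain products requires $\sum_{i=1}^{d}\lceil 2^{i-1}/R\rceil$ applications; using $\sum_{i=1}^d 2^{i-1}=2^d-1=\mathcal{O}(\kappa)$ this telescopes to $\mathcal{O}(\kappa/R)$ applications, i.e. $\mathcal{O}(\alpha\kappa/R)$ total, with Part Two contributing the same order and the $1$-hop product $\bm{u}^{(1)}_t=\bm{M}_0\bm{y}_{t-1}$ inside the Richardson loop being negligible. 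Since $\bm{P}$ is right-hand-side independent it may be computed once and reused, but even charging it conservatively, adding the two contributions and multiplying by the $\mathcal{O}(\log\frac1\epsilon)$ outer iterations yields the claimed $\mathcal{O}\!\left(\left(\frac{\alpha\kappa}{R}+\alpha R d_{\max}\right)\log\frac1\epsilon\right)$.

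The step I expect to be the main obstacle is the application-cost accounting, specifically justifying that repeatedly applying $\bm{P}$ while exchanging only $R$-hop boundary data faithfully computes $(\bm{A}_0\bm{D}_0^{-1})^{2^{i-1}}\bm{v}$ without any node exceeding its $R$-hop budget: the values a node reads on its $R$-hop boundary are correct only because every node runs the update in lockstep, so the induction on the number of applications must be set up with care. Tied to this are the bookkeeping of the remainder when $R\nmid 2^{i-1}$, the regime $2^{i-1}\ge\text{diam}(\mathbb{G})$ where the neighborhood saturates at $\alpha=n$, and the observation that the $R$ communication rounds per application are dominated by the $\mathcal{O}(\alpha)$ arithmetic whenever $\alpha\ge R$ (so that the telescoped count indeed collapses to $\mathcal{O}(\alpha\kappa/R)$). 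The setup and correctness pieces should then follow routinely from the earlier lemmas.
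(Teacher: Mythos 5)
Your proposal is correct and follows essentially the same route as the paper: you precompute the $R$-hop-sparse operator $(\bm{A}_0\bm{D}_0^{-1})^R$ (the paper's $\text{Comp}_0$/$\text{Comp}_1$, costing $\mathcal{O}(\alpha R d_{\max})$), apply it $2^{i-1}/R$ times per chain level to recover the same iterates as $\text{DistrRSolve}$, sum the geometric series to get $\mathcal{O}(2^d\alpha/R)=\mathcal{O}(\alpha\kappa/R)$, and wrap the crude solver in preconditioned Richardson for $\mathcal{O}(\log\frac{1}{\epsilon})$ iterations. The only cosmetic difference is your remainder power $(\bm{A}_0\bm{D}_0^{-1})^{2^{i-1}\bmod R}$, which the paper avoids by assuming $R=2^{\rho}$ so that either $2^{i-1}<R$ (handled by repeated $1$-hop products) or $R$ divides $2^{i-1}$ exactly.
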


Given a graph $\mathbb{G}$ formed from the weighted Laplacian $\bm{M}_{0}$, the following corollary easily follows: 

\begin{corollary}\label{CorollaryTwo}
Let $\bm{M}_0$ be the weighted Laplacian of $\mathbb{G} = \left(\mathbb{V},\mathbb{E},\bm{W}\right)$. There exists a decentralized algorithm that uses only $R$-hop communication between nodes and computes $\epsilon$ close solutions of $\bm{M}_{0}\bm{x}=\bm{b}_{0}$ in
 $\mathcal{O}\left(\frac{n^3\alpha}{R}\frac{\bm{W}_{\text{max}}}{\bm{W}_{\text{min}}}\log(\frac{1}{\epsilon})\right)$ time steps, with $n$ being the number of nodes in $\mathbb{G}$, $\bm{W}_{\text{max}}, \bm{W}_{\text{min}}$ denoting the largest and the smallest weights of edges in $\mathbb{G}$, respectively, $\alpha = \min\left\{n, \frac{\left(d^{R+1}_{\text{max}} - 1\right)}{\left(d_{\text{max}} - 1\right)}\right\}$ representing the upper bound on the size of the R-hop neighborhood $\forall \bm{v}\in \mathbb{V}$, and $\epsilon\in (0, \frac{1}{2}]$ being the precision parameter. 
\end{corollary}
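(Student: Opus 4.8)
The plan is to treat Corollary~\ref{CorollaryTwo} as a direct specialization of Theorem~\ref{Main_TheoremTwo} to the case where $\bm{M}_0$ is the weighted Laplacian of $\mathbb{G}$, replacing the abstract condition number $\kappa$ by its graph-theoretic bound and then collapsing the two-term running time. First I would invoke Theorem~\ref{Main_TheoremTwo} verbatim to assert the existence of a decentralized, $R$-hop algorithm running in $\mathcal{O}\left(\left(\frac{\alpha\kappa}{R} + \alpha R d_{\max}\right)\log(\frac{1}{\epsilon})\right)$ time steps. All that remains is to insert a bound on $\kappa$ and to simplify.

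For the condition number I would use the estimate stated in Section~\ref{Sec:Pre} (following~\cite{DaanS}): for the weighted graph Laplacian one has $\kappa = \mathcal{O}\left(n^3\frac{\bm{W}_{\text{max}}}{\bm{W}_{\text{min}}}\right)$. Substituting this into the first term immediately gives $\frac{\alpha\kappa}{R} = \mathcal{O}\left(\frac{\alpha n^3}{R}\frac{\bm{W}_{\text{max}}}{\bm{W}_{\text{min}}}\right)$, which already has exactly the form claimed in the corollary. The crux is then to verify that the second term $\alpha R d_{\max}$ is asymptotically absorbed by the first, so that it does not affect the stated order.

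This dominance step is where the (mild) work lies, and I would handle it with three elementary observations: any meaningful hop parameter satisfies $R \le \mathrm{diam}(\mathbb{G}) \le n$, the maximal degree satisfies $d_{\max} \le n$, and the weight ratio satisfies $\frac{\bm{W}_{\text{max}}}{\bm{W}_{\text{min}}} \ge 1$. Together these yield $R^2 d_{\max} \le n^3 \le n^3\frac{\bm{W}_{\text{max}}}{\bm{W}_{\text{min}}}$; dividing by $R$ and multiplying by $\alpha$ gives $\alpha R d_{\max} \le \frac{\alpha n^3}{R}\frac{\bm{W}_{\text{max}}}{\bm{W}_{\text{min}}}$. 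Hence both terms are of order $\frac{\alpha n^3}{R}\frac{\bm{W}_{\text{max}}}{\bm{W}_{\text{min}}}$, and the total time complexity collapses to $\mathcal{O}\left(\frac{n^3\alpha}{R}\frac{\bm{W}_{\text{max}}}{\bm{W}_{\text{min}}}\log(\frac{1}{\epsilon})\right)$, as claimed. I do not anticipate a genuine obstacle here, since the argument is purely a substitution of a known condition-number bound followed by a comparison of two monomials in $n$, $R$, and $d_{\max}$; the only point needing care is confirming that $\frac{\bm{W}_{\text{max}}}{\bm{W}_{\text{min}}} \ge 1$ so that it may be inserted for free into the dominating term, and that $R, d_{\max} \le n$ in the relevant regime.
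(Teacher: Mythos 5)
Your proposal is correct and matches the paper's (implicit) argument: the paper offers no separate proof of Corollary~\ref{CorollaryTwo}, simply stating that it ``easily follows'' from Theorem~\ref{Main_TheoremTwo} together with the bound $\kappa = \mathcal{O}\left(n^3\frac{\bm{W}_{\text{max}}}{\bm{W}_{\text{min}}}\right)$ from Section~\ref{Sec:Pre}, which is exactly the substitution you perform. Your explicit verification that the $\alpha R d_{\max}$ term is absorbed (using $R \le n$, $d_{\max} \le n$, and $\frac{\bm{W}_{\text{max}}}{\bm{W}_{\text{min}}} \ge 1$) is a detail the paper leaves unstated, and it is handled correctly.
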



\subsection{``Crude'' R-Hop SDMM Solver}
Algorithm~\ref{Algo:DistRHop} presents the ``crude'' R-Hop solver for SDDM systems using the same inverse chain of Section~\ref{Sec:RudeSolver}. Each node $\bm{v}_{k} \in \mathbb{V}$ receives the $k^{th}$ row of $\bm{M}_{0}$ , $k^{th}$ component, $[\bm{b}_{0}]_{k}$ of $\bm{b}_{0}$, the length of the inverse chain, $d$, and the local communication bound\footnote{For simplicity, $R$ is assumed to be in the order of powers of 2, i.e., $R=2^{\rho}$.}  $R$ as inputs, and outputs the $k^{th}$ component of the ``rude'' approximation of $\bm{x}^{\star}$.

\begin{algorithm}[h!]
  \caption{$\text{RDistRSolve}\left(\{[\bm{M}_0]_{k1},\ldots, [\bm{M}_0]_{kn}\}, [\bm{b}_0]_k, d, R\right)$}
  \label{Algo:DistRHop}
  \begin{algorithmic}
	\State \textbf{Part One:}	
	\State $\{[\bm{A}_0\bm{D}^{-1}_0]_{k1},\ldots,[\bm{A}_0\bm{D}^{-1}_0]_{kn}\} = \left\{\frac{[\bm{A}_0]_{k1}}{[\bm{D}_0]_{11}},\ldots,\frac{[\bm{A}_0]_{kn}}{[\bm{D}_0]_{nn}}\right\}$, $\{[\bm{D}^{-1}_0\bm{A}_0]_{k1},\ldots,[\bm{D}^{-1}_0\bm{A}_0]_{kn}\} = \{\frac{[\bm{A}_0]_{k1}}{[\bm{D}_0]_{kk}},\ldots,\frac{[\bm{A}_0]_{kn}}{[\bm{D}_0]_{kk}}\}$
	\State $[\bm{C}_0]_{k1},\ldots,[\bm{C}_0]_{kn} = \text{Comp}_0\left([\bm{M}_0]_{k1},\ldots, [\bm{M}_0]_{kn}, R\right)$, $[\bm{C}_1]_{k1},\ldots,[\bm{C}_1]_{kn}= \text{Comp}_1\left([\bm{M}_0]_{k1},\ldots, [\bm{M}_0]_{kn}, R\right)$
\\\hrulefill	
	\State \textbf{Part Two:}
	\For {$i=1$ to $d$}
		\State \textbf{if} \ {$i-1 < \rho$}
			\State $[\bm{u}^{(i-1)}_{1}]_k = [\bm{A}_0\bm{D}^{-1}_0\bm{b}_{i-1}]_k$
			\For {$j=2$ to $2^{i-1}$}
				\State $[\bm{u}^{(i-1)}_{j}]_k = [\bm{A}_0\bm{D}^{-1}_0\bm{u}^{(i-1)}_{j-1}]_k$
			\EndFor \textbf{end for}
			\State $[\bm{b}_i]_k = [\bm{b}_{i-1}]_k + [\bm{u}^{(i-1)}_{2^{i-1}}]_k$
		\State \textbf{if} {$i-1 \ge \rho$}
			\State $l_{i-1} = \sfrac{2^{i-1}}{R}$
			\State $[\bm{u}^{(i-1)}_{1}]_k = [\bm{C}_0\bm{b}_{i-1}]_k$
			\For {$j=2$ to $l_{i-1}$}
				\State $[\bm{u}^{(i-1)}_{j}]_k = [\bm{C}_0\bm{u}^{(i-1)}_{j-1}]_k$
			\EndFor \textbf{end for}
			\State $[\bm{b}_i]_k = [\bm{b}_{i-1}]_k + [\bm{u}^{(i-1)}_{l_{i-1}}]_k$
	\EndFor \textbf{end for}
		\\\hrulefill
	\State \textbf{Part Three:}
	\State $[\bm{x}_d]_k = \sfrac{[\bm{b}_d]_k}{[\bm{D}_0]_{kk}}$
	\For {$i=d-1$ to $1$}
		\State \textbf{if} {$i < \rho$}
			\State $[\bm{\eta}^{(i+1)}_{1}]_k = [\bm{D}^{-1}_0\bm{A}_0\bm{x}_{i+1}]_k$
			\For {$j=2$ to $2^{i}$}
				\State $[\bm{\eta}^{(i+1)}_{j}]_k = [\bm{D}^{-1}_0\bm{A}_0\bm{\eta}^{(i+1)}_{j-1}]_k$
			\EndFor \textbf{end for}
			\State $[\bm{x}_i]_k = \frac{1}{2}\left[\frac{[\bm{b}_i]_k}{[\bm{D}_{0}]_{kk}} + [\bm{x}_{i+1}]_k + [\bm{\eta}^{i + 1}_{2^i}]_{k}\right]$
		\State \textbf{if} {$i \ge \rho$}
			\State $l_i = \sfrac{2^i}{R}$
			\State $[\bm{\eta}^{(i+1)}_{1}]_k = [\bm{C}_1\bm{x}_{i+1}]_k$
			\For {$j=2$ to $l_i$}
				\State $[\bm{\eta}^{(i+1)}_{j}]_k = [\bm{C}_1\bm{\eta}^{(i+1)}_{j-1}]_k$
			\EndFor \textbf{end for}
			\State $[\bm{x}_i]_k = \frac{1}{2}\left[\frac{[\bm{b}_i]_k}{[\bm{D}_0]_{kk}}  + [\bm{x}_{i+1}]_k + [\bm{\eta}^{i+1}_{l_i}]_k \right]$
	\EndFor \textbf{end for}
	\State $[\bm{x}_0]_k = \frac{1}{2}\left[\frac{[\bm{b}_0]_k}{[\bm{D}_0]_{kk}} + [\bm{x}_1]_k + [\bm{D}^{-1}_0\bm{A}_0\bm{x}_1]_k \right]$
    \State \textbf{return} $[\bm{x}_0]_k$ 
  \end{algorithmic}
\end{algorithm}
\begin{algorithm}[h!]
  \caption{$\text{Comp}_0\left([\bm{M}_0]_{k1},\ldots, [\bm{M}_0]_{kn}, R\right)$}
  \label{Alg_4}
  \begin{algorithmic}
	\For {$l = 1$ to $R - 1$}
		\For {$j$ s.t.$\bm{v}_j\in \mathbb{N}_{l+1}(\bm{v}_k)$}
			\State
			$\left[(\bm{A}_0\bm{D}^{-1}_0)^{l+1}\right]_{kj} = \sum\limits_{r:\bm{v}_r\in \mathbb{N}_1(v_j)}\frac{[\bm{D}_0]_{rr}}{[\bm{D}_0]_{jj}}[(\bm{A}_0\bm{D}^{-1}_0)^l]_{kr}[\bm{A}_0\bm{D}^{-1}_0]_{jr}$
		\EndFor\textbf{end for}
	\EndFor\textbf{end for}
	\State \textbf{return} $\bm{c}_0 = \{[(\bm{A}_0\bm{D}^{-1}_0)^{R}]_{k1},\ldots,[(\bm{A}_0\bm{D}^{-1}_0)^{R}]_{kn} \}$
  \end{algorithmic}
\end{algorithm}
\begin{algorithm}
  \caption{$\text{Comp}_1([\bm{M}_0]_{k1},\ldots, [\bm{M}_0]_{kn}, R)$}
  \label{Alg_5}
  \begin{algorithmic}
	\For {$l = 1$ to $R - 1$}
		\For {$j$ s.t.$\bm{v}_j\in \mathbb{N}_{l+1}(\bm{v}_k)$}
			\State$
			\left[(\bm{D}^{-1}_0\bm{A}_0)^{l+1}\right]_{kj} = \sum\limits_{r:\bm{v}_r\in \mathbb{N}_1(\bm{v}_j)}\frac{[\bm{D}_0]_{jj}}{[\bm{D}_0]_{rr}}[(\bm{D}^{-1}_0\bm{A}_0)^l]_{kr}[\bm{D}^{-1}_0\bm{A}_0]_{jr}$
		\EndFor\textbf{end for}
	\EndFor\textbf{end for}
	\State \textbf{return} $\bm{c}_1 = \{[(\bm{D}^{-1}_0\bm{A}_0)^{R}]_{k1},\ldots,[(\bm{D}^{-1}_0\bm{A}_0)^{R}]_{kn} \}$
  \end{algorithmic}
\end{algorithm}

\textbf{Analysis of Algorithm~\ref{Algo:DistRHop}} The following Lemma shows that $\text{RDistRSolve}$ computes the $k^{th}$ component of the ``crude'' approximation of $\bm{x}^{\star}$ and provides the algorithm's time complexity 
 
\begin{lemma}\label{r_hop_rude_lemma}
Let $\bm{M}_0 = \bm{D}_0 - \bm{A}_0$ be the standard splitting and let $\bm{Z}^{\prime}_0$ be the operator defined by $\text{RDistRSolve}$, namely, $\bm{x}_0 = \bm{Z}^{\prime}_0\bm{b}_0$. Then,
$\bm{Z}^{\prime}_0\approx_{\epsilon_d} \bm{M}^{-1}_0$.  
$\text{RDistRSolve}$ requires $\mathcal{O}\left(\frac{2^d}{R}\alpha + \alpha Rd_{max}\right)$, where $\alpha = \min\left\{n, \frac{\left(d^{R+1}_{\text{max}} - 1\right)}{\left(d_{\text{max}} - 1\right)}\right\}$, to arrive at $\bm{x}_{0}$. 
\end{lemma}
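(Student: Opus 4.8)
The plan is to prove the two assertions separately: first the spectral guarantee $\bm{Z}^{\prime}_0 \approx_{\epsilon_d} \bm{M}^{-1}_0$, and then the running-time bound. For the correctness claim, the key observation is that $\text{RDistRSolve}$ and $\text{DistrRSolve}$ (Algorithm~\ref{Algo:DisRudeApprox}) produce \emph{exactly the same} operator $\bm{Z}^{\prime}_0$; they differ only in how the matrix powers $(\bm{A}_0\bm{D}^{-1}_0)^{2^{i-1}}$ and $(\bm{D}^{-1}_0\bm{A}_0)^{2^{i}}$ are evaluated so as to respect the $R$-Hop communication constraint. Once this is established, the spectral bound follows immediately from Lemma~\ref{Rude_Dec_Alg_guarantee_Lemma}, since that lemma already certifies $\bm{Z}^{\prime}_0 \approx_{\epsilon_d}\bm{M}^{-1}_0$ for the operator associated with the chain $\mathcal{C}$.

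To show the two operators agree, I would first verify that $\text{Comp}_0$ and $\text{Comp}_1$ correctly return $\bm{C}_0 = (\bm{A}_0\bm{D}^{-1}_0)^{R}$ and $\bm{C}_1 = (\bm{D}^{-1}_0\bm{A}_0)^{R}$ restricted to the $R$-Hop neighborhood. This is an induction on $l$ using the recursion $(\bm{A}_0\bm{D}^{-1}_0)^{l+1} = (\bm{A}_0\bm{D}^{-1}_0)^{l}(\bm{A}_0\bm{D}^{-1}_0)$, where the crucial point is the symmetry of $\bm{A}_0$: it lets the off-diagonal factor be rewritten as $[\bm{A}_0\bm{D}^{-1}_0]_{rj} = \frac{[\bm{D}_0]_{rr}}{[\bm{D}_0]_{jj}}[\bm{A}_0\bm{D}^{-1}_0]_{jr}$, so each node $\bm{v}_k$ can assemble $[(\bm{A}_0\bm{D}^{-1}_0)^{l+1}]_{kj}$ from quantities it already holds together with the single-Hop data $[\bm{A}_0\bm{D}^{-1}_0]_{jr}$ owned by $\bm{v}_j$. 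Next I would invoke the semigroup identity $(\bm{A}_0\bm{D}^{-1}_0)^{2^{i-1}} = \bm{C}_0^{\,l_{i-1}}$ with $l_{i-1} = 2^{i-1}/R$ (and likewise $(\bm{D}^{-1}_0\bm{A}_0)^{2^{i}} = \bm{C}_1^{\,l_i}$) to conclude that the branch $i-1\ge\rho$, which applies $\bm{C}_0$ exactly $l_{i-1}$ times, yields the same $\bm{b}_i$ as the direct accumulation used by $\text{DistrRSolve}$; the branch $i-1<\rho$ computes $(\bm{A}_0\bm{D}^{-1}_0)^{2^{i-1}}\bm{b}_{i-1}$ by $2^{i-1}$ one-Hop multiplications and is trivially identical. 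Matching Parts Two and Three of $\text{RDistRSolve}$ term by term against the update rules of $\text{DistrRSolve}$ then gives $\bm{x}_0 = \bm{Z}^{\prime}_0\bm{b}_0$ with the same $\bm{Z}^{\prime}_0$.

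For the complexity bound I would split the cost into the precomputation of $\bm{C}_0,\bm{C}_1$ and the main iterations. In $\text{Comp}_0$/$\text{Comp}_1$ the outer loop runs $R-1$ times; at level $l$ it updates at most $\alpha$ entries (one per node in $\mathbb{N}_{l+1}(\bm{v}_k)$), each requiring a sum over a $1$-Hop neighborhood of size $\mathcal{O}(d_{max})$, giving $\mathcal{O}(\alpha R d_{max})$ time steps once the $R$-Hop gathering of the needed single-Hop data is included. In Parts Two and Three, for a level with $i-1\ge\rho$ the algorithm performs $l_{i-1}=2^{i-1}/R$ multiplications by $\bm{C}_0$ (resp. $\bm{C}_1$), and since these matrices have the $R$-Hop sparsity pattern each such product costs $\mathcal{O}(\alpha)$; summing the geometric series $\sum_{i\le d} 2^{i-1}/R$ is dominated by its top term and yields $\mathcal{O}(\tfrac{2^d}{R}\alpha)$. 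The levels with $i-1<\rho$ contribute $\sum_{2^{i-1}<R} 2^{i-1}\,\mathcal{O}(d_{max}) = \mathcal{O}(R d_{max})$, absorbed into the first term. Adding the two pieces gives the claimed $\mathcal{O}\!\left(\frac{2^d}{R}\alpha + \alpha R d_{max}\right)$.

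I expect the main obstacle to be the correctness bookkeeping rather than the spectral estimate: one must carefully confirm that the $R$-Hop sparsity of $\bm{C}_0$ is exactly what makes each of the $l_{i-1}$ repeated multiplications implementable with only $R$-Hop communication while still reproducing the global power $(\bm{A}_0\bm{D}^{-1}_0)^{2^{i-1}}$, and that the symmetry rewriting in $\text{Comp}_0$/$\text{Comp}_1$ keeps every summand available within one Hop of $\bm{v}_j$. The spectral conclusion itself is then a one-line appeal to Lemma~\ref{Rude_Dec_Alg_guarantee_Lemma}.
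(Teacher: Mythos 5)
Your proposal is correct and follows essentially the same route as the paper: establish that $\text{Comp}_0/\text{Comp}_1$ correctly assemble $(\bm{A}_0\bm{D}^{-1}_0)^{R}$ and $(\bm{D}^{-1}_0\bm{A}_0)^{R}$ via the symmetry rewriting $[\bm{A}_0\bm{D}^{-1}_0]_{rj} = \frac{[\bm{D}_0]_{rr}}{[\bm{D}_0]_{jj}}[\bm{A}_0\bm{D}^{-1}_0]_{jr}$ in $\mathcal{O}(\alpha R d_{max})$ steps, use the identity $(\bm{A}_0\bm{D}^{-1}_0)^{2^{i-1}} = \bm{C}_0^{\,l_{i-1}}$ together with the $R$-Hop sparsity to cost the main loops at $\mathcal{O}(\tfrac{2^d}{R}\alpha + Rd_{max})$, and reduce the spectral claim to the already-proved approximation guarantee for the chain $\mathcal{C}$ (the paper invokes Lemma~\ref{Rude_Alg_guarantee_Lemma} directly rather than going through Lemma~\ref{Rude_Dec_Alg_guarantee_Lemma}, but the two appeals are interchangeable here).
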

The proof of the above Lemma can be arrived at by proving a collection of claims:
\begin{claim}\label{claim_1}
Matrices $\left(\bm{D}^{-1}_0\bm{A}_0\right)^r$ and $\left(\bm{A}_0\bm{D}^{-1}_0\right)^{-r}$ have sparsity patterns corresponding to the $R$-Hop neighborhood for any $R\in \mathbb{N}$.
\end{claim}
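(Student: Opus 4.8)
The plan is to prove the claim by induction on the exponent $r$, using the classical fact that nonzero entries of powers of an adjacency-type matrix correspond to walks in $\mathbb{G}$. First I would observe that, since $\bm{D}_0$ is a positive diagonal matrix, left- or right-multiplication of $\bm{A}_0$ by $\bm{D}^{-1}_0$ merely rescales rows or columns and hence does not alter the set of nonzero entries. Consequently $\bm{D}^{-1}_0\bm{A}_0$ and $\bm{A}_0\bm{D}^{-1}_0$ share the zero/nonzero pattern of $\bm{A}_0$, whose $(i,j)$ entry is nonzero only when $\{i,j\}\in\mathbb{E}$, i.e.\ only when $\bm{v}_j\in\mathbb{N}_1(\bm{v}_i)$. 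This establishes the base case $r=1$: both matrices have a sparsity pattern corresponding to the $1$-hop neighborhood (the vanishing diagonal of $\bm{A}_0$ is harmless, as the definition only forbids nonzeros \emph{outside} $\mathbb{N}_r$ and permits extra zeros anywhere).

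For the inductive step I would assume $(\bm{D}^{-1}_0\bm{A}_0)^{r}$ has a sparsity pattern corresponding to the $r$-hop neighborhood and expand
\[
\bigl[(\bm{D}^{-1}_0\bm{A}_0)^{r+1}\bigr]_{ij} = \sum_{l=1}^{n}\bigl[(\bm{D}^{-1}_0\bm{A}_0)^{r}\bigr]_{il}\,\bigl[\bm{D}^{-1}_0\bm{A}_0\bigr]_{lj}.
\]
A summand can be nonzero only if both factors are, which by the inductive hypothesis and the base case forces $\text{dist}(\bm{v}_i,\bm{v}_l)\le r$ and $\text{dist}(\bm{v}_l,\bm{v}_j)\le 1$. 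The triangle inequality for the graph metric then gives $\text{dist}(\bm{v}_i,\bm{v}_j)\le r+1$, so whenever $\bm{v}_j\notin\mathbb{N}_{r+1}(\bm{v}_i)$ every summand vanishes and the entry is zero. This is exactly the assertion that $(\bm{D}^{-1}_0\bm{A}_0)^{r+1}$ has a sparsity pattern corresponding to the $(r+1)$-hop neighborhood, completing the induction. The identical argument, expanding instead as $(\bm{A}_0\bm{D}^{-1}_0)^{r}(\bm{A}_0\bm{D}^{-1}_0)$, handles $(\bm{A}_0\bm{D}^{-1}_0)^{r}$.

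I do not anticipate any serious obstacle: the argument is the standard ``entries of the $r$-th power see walks of length $r$'' fact transcribed into the graph distance $\text{dist}(\cdot,\cdot)$. The only points needing care are confirming that the diagonal rescaling by $\bm{D}^{-1}_0$ preserves the \emph{sparsity pattern} (and not merely the sign structure), and reading the claim's exponent as the positive power $r$ that actually appears in the algorithms. Finally, I would emphasize that this claim is precisely what legitimizes the inner recursions of $\text{Comp}_0$ and $\text{Comp}_1$: because $(\bm{A}_0\bm{D}^{-1}_0)^{l}$ vanishes outside $\mathbb{N}_{l}(\bm{v}_k)$, the $(l+1)$-th power can be assembled from the $1$-hop data $[\bm{A}_0\bm{D}^{-1}_0]_{jr}$ over $\bm{v}_r\in\mathbb{N}_1(\bm{v}_j)$, so that every power up to $(\bm{A}_0\bm{D}^{-1}_0)^{R}$ is computable within $R$-hop communication.
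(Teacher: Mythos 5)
Your proof is correct and follows essentially the same route as the paper's: induction on the exponent, expanding $[(\bm{A}_0\bm{D}^{-1}_0)^{r}]_{ij}=\sum_{k}[(\bm{A}_0\bm{D}^{-1}_0)^{r-1}]_{ik}[\bm{A}_0\bm{D}^{-1}_0]_{kj}$ and using nonnegativity together with the triangle inequality for the graph distance. Your explicit remark that diagonal rescaling by $\bm{D}^{-1}_0$ preserves the zero/nonzero pattern is a slightly cleaner justification of the base case than the paper gives, but it is not a different argument.
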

\begin{proof}
The above claim is proved by induction on $R$. We start with the base case: for $R = 1$
\begin{equation*}
[\bm{A}_0\bm{D}^{-1}_0]_{ij} =\frac{[\bm{A}_0]_{ij}}{[\bm{D}_0]_{ii}} \ \ \  (\text{if } j: \bm{v}_j\in \mathbb{N}_{1}(\bm{v}_i))  \ \ \ \ \ \ \text{or} \ \ \ \ [\bm{A}_0\bm{D}^{-1}_0]_{ij}=0 \ \ \ (\text{otherwise})
\end{equation*}
Therefore, $\bm{A}_0\bm{D}^{-1}_0$ has a sparsity pattern corresponding to the $1$-Hop neighborhood. 
Assume that for all $1\le p\le R-1$, $\left(\bm{A}_0\bm{D}^{-1}_0\right)^p$ has a sparsity pattern corresponding to the $p-hop$ neighborhood. Consider, $\left(\bm{A}_0\bm{D}^{-1}_0\right)^{r}$
\begin{equation}\label{sparisty_lemma_2}
[(\bm{A}_0\bm{D}^{-1}_0)^{R}]_{ij} = \sum_{k=1}^{n}[(\bm{A}_0\bm{D}^{-1}_0)^{R-1}]_{ik}[\bm{A}_0\bm{D}^{-1}_0]_{kj}
\end{equation}
Since $\bm{A}_0\bm{D}^{-1}_0$ is non negative, then $[(\bm{A}_0\bm{D}^{-1}_0)^{R}]_{ij} \ne 0$ iff there exists $k$ such that $\bm{v}_k\in \mathbb{N}_{R-1}(\bm{v}_i)$ and $\bm{v}_k\in \mathbb{N}_1(\bm{v}_j)$, namely, $\bm{v}_j\in \mathbb{N}_{R}(\bm{v}_i)$. The proof can be done in a similar fashion for $\bm{D}^{-1}_0\bm{A}_0$.
\end{proof}
The next claim provides complexity guarantees for $\text{Comp}_{0}$ and $\text{Comp}_{1}$ described in Algorithms~\ref{Alg_4} and~\ref{Alg_5}, respectively. 
\begin{claim}\label{claim_2}
Algorithms~\ref{Alg_4} and~\ref{Alg_5} use only the R-hop information to compute the $k^{th}$ row of $\left(\bm{D}^{-1}_0\bm{A}_0\right)^{R}$ and $\left(\bm{A}_0\bm{D}^{-1}_0\right)^{R}$, respectively, in $\mathcal{O}\left(\alpha Rd_{max}\right)$ time steps, where $\alpha = \min\left\{n, \frac{\left(d^{R+1}_{\text{max}} - 1\right)}{\left(d_{\text{max}} - 1\right)}\right\}$.  
\end{claim}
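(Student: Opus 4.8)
The plan is to establish Claim~\ref{claim_2} in three stages: first verifying that the recurrence in Algorithm~\ref{Alg_4} (and symmetrically Algorithm~\ref{Alg_5}) correctly computes the target matrix powers, then arguing that only $R$-hop information enters the computation, and finally counting the time steps.

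For correctness, I would start from the ordinary matrix-product identity
\begin{equation*}
\left[(\bm{A}_0\bm{D}^{-1}_0)^{l+1}\right]_{kj} = \sum_{r=1}^{n}\left[(\bm{A}_0\bm{D}^{-1}_0)^{l}\right]_{kr}\left[\bm{A}_0\bm{D}^{-1}_0\right]_{rj},
\end{equation*}
and rewrite the factor $[\bm{A}_0\bm{D}^{-1}_0]_{rj}$ using the symmetry of $\bm{A}_0$ and the diagonality of $\bm{D}_0$. Since $[\bm{A}_0\bm{D}^{-1}_0]_{rj}=[\bm{A}_0]_{rj}/[\bm{D}_0]_{jj}$ while $[\bm{A}_0\bm{D}^{-1}_0]_{jr}=[\bm{A}_0]_{jr}/[\bm{D}_0]_{rr}=[\bm{A}_0]_{rj}/[\bm{D}_0]_{rr}$, we obtain $[\bm{A}_0\bm{D}^{-1}_0]_{rj}=\frac{[\bm{D}_0]_{rr}}{[\bm{D}_0]_{jj}}[\bm{A}_0\bm{D}^{-1}_0]_{jr}$, which is exactly the scalar weight appearing in the algorithm. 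Invoking Claim~\ref{claim_1}, the factor $[\bm{A}_0\bm{D}^{-1}_0]_{jr}$ vanishes unless $\bm{v}_r\in\mathbb{N}_1(\bm{v}_j)$, so the sum over all $r$ collapses to the sum over $r:\bm{v}_r\in\mathbb{N}_1(\bm{v}_j)$ used in Algorithm~\ref{Alg_4}. A straightforward induction on $l$ then shows the algorithm returns the correct $k^{th}$ row of $(\bm{A}_0\bm{D}^{-1}_0)^{R}$; the argument for $(\bm{D}^{-1}_0\bm{A}_0)^{R}$ in Algorithm~\ref{Alg_5} is identical with the roles of $[\bm{D}_0]_{jj}$ and $[\bm{D}_0]_{rr}$ interchanged.

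For $R$-hop feasibility I would again appeal to Claim~\ref{claim_1}: at iteration $l$ the only nonzero entries $[(\bm{A}_0\bm{D}^{-1}_0)^{l}]_{kr}$ have $\bm{v}_r\in\mathbb{N}_{l}(\bm{v}_k)$, and all target indices $j$ lie in $\mathbb{N}_{l+1}(\bm{v}_k)\subseteq\mathbb{N}_R(\bm{v}_k)$ because $l+1\le R$. The crucial observation is that the remaining data $[\bm{A}_0\bm{D}^{-1}_0]_{jr}$ is part of node $\bm{v}_j$'s own row; hence $\bm{v}_k$ never needs to reach $\bm{v}_r$ directly but only to communicate with $\bm{v}_j$, which sits within $R$ hops. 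Consequently every quantity the recurrence references is available through $R$-hop communication.

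The complexity count then proceeds loop by loop: the outer loop executes $R-1=\mathcal{O}(R)$ times; at iteration $l$ the inner loop ranges over $j:\bm{v}_j\in\mathbb{N}_{l+1}(\bm{v}_k)$, of which there are at most $|\mathbb{N}_R(\bm{v}_k)|\le\alpha$; and each such entry costs a sum over $\bm{v}_r\in\mathbb{N}_1(\bm{v}_j)$, i.e. $\mathcal{O}(d_{\max})$ arithmetic operations together with the $R$-hop gathering of the corresponding row entries. Multiplying gives $\mathcal{O}(\alpha R d_{\max})$ time steps. I expect the main obstacle to be the feasibility bookkeeping rather than the counting: one must be careful that although the index $\bm{v}_r$ may sit at distance up to $l+2$ from $\bm{v}_k$, the recurrence only ever consumes $\bm{v}_r$-data that is locally stored at the $R$-hop-reachable node $\bm{v}_j$, so no communication beyond $R$ hops is actually required, and this communication must be folded cleanly into the $\mathcal{O}(\alpha R d_{\max})$ bound.
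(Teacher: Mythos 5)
Your proposal is correct and follows essentially the same route as the paper's proof: the matrix-product recurrence restricted by Claim~\ref{claim_1} to $r:\bm{v}_r\in\mathbb{N}_1(\bm{v}_j)$, the row-to-column conversion $[\bm{A}_0\bm{D}^{-1}_0]_{rj}=\frac{[\bm{D}_0]_{rr}}{[\bm{D}_0]_{jj}}[\bm{A}_0\bm{D}^{-1}_0]_{jr}$ (which the paper derives via the symmetry of $\bm{D}^{-1}_0\bm{A}_0\bm{D}^{-1}_0$ and you derive equivalently from the symmetry of $\bm{A}_0$ directly), and the same $\alpha\times R\times d_{\max}$ count. Your explicit bookkeeping that the $\bm{v}_r$-data is consumed only via the locally stored row of the $R$-hop-reachable node $\bm{v}_j$ is a slightly more careful articulation of the communication feasibility than the paper gives, but it is the same argument.
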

\begin{proof}
The proof will be given for $\text{Comp}_{0}$ described in Algorithm~\ref{Alg_4} as that for $\text{Comp}_{1}$ can be performed similarly. Due to Claim~\ref{claim_1}, we have
\begin{align}\label{eq_1}
\left[\left(\bm{A}_0\bm{D}^{-1}_0\right)^{l+1}\right]_{kj} = \sum\limits_{r=1}^{n}\left[\left(\bm{A}_0\bm{D}^{-1}_0\right)^{l}\right]_{kr}\left[\bm{A}_0\bm{D}^{-1}_0\right]_{rj} = \sum\limits_{r: \bm{v}_r\in \mathbb{N}_1(\bm{v}_j)}\left[\left(\bm{A}_0\bm{D}^{-1}_0\right)^{l}\right]_{kr}\left[\bm{A}_0\bm{D}^{-1}_0\right]_{rj}
\end{align} 
Therefore at iteration $l+1$, $\bm{v}_k$ computes the $k^{th}$ row of $\left(\bm{A}_0\bm{D}^{-1}_0\right)^{l+1}$ using: (1) the $k^{th}$ row of $(\bm{A}_0\bm{D}^{-1}_0)^{l}$, and (2) the $r^{th}$ column of $\bm{A}_0\bm{D}^{-1}_0$. Node $\bm{v}_{r}$, however, can only send the $r^{th}$ row of $\bm{A}_{0}\bm{D}^{-1}_{0}$ making $\bm{A}_{0}\bm{D}^{-1}_{0}$ non-symmetric. Noting that
$\sfrac{[\bm{A}_0\bm{D}^{-1}_0]_{rj}}{[\bm{D}_0]_{rr}} = \sfrac{[\bm{A}_0\bm{D}^{-1}_0]_{jr}}{[\bm{D}_0]_{jj}}$, since $\bm{D}^{-1}_0\bm{A}_0\bm{D}^{-1}$ is symmetric, leads to 
$[(\bm{A}_0\bm{D}^{-1}_0)^{l+1}]_{kj} = 
\sum\limits_{r: \bm{v}_r\in \mathbb{N}_1(\bm{v}_j)}\frac{[\bm{D}_0]_{rr}}{[\bm{D}_0]_{jj}}[(\bm{A}_0\bm{D}^{-1}_0)^{l}]_{kr}[\bm{A}_0\bm{D}^{-1}_0]_{jr}$. 
To prove the time complexity guarantee, at each iteration $\bm{v}_k$ computes at most $\alpha$ values, where $\alpha  = \min\left\{n, \frac{\left(d^{R+1}_{\text{max}} - 1\right)}{\left(d_{\text{max}} - 1\right)}\right\}$ is the upper bound on the size of the R-hop neighborhood $\forall \bm{v}\in \mathbb{V}$.  Each such computation requires at most $\mathcal{O}(d_{max})$ operations. Thus, the overall time complexity is given by $\mathcal{O}(\alpha Rd_{max})$.
\end{proof}

We are now ready to provide the proof of Lemma \ref{r_hop_rude_lemma}. 
\begin{proof}
From \textbf{Parts Two} and \textbf{Three} of Algorithm~\ref{Algo:DistRHop}, it is clear that node $\bm{v}_k$ computes $[\bm{b}_1]_k,[\bm{b}_2]_k,\ldots, [\bm{b}_d]_k$ and $[\bm{x}_d]_k, [\bm{x}_{d-1}]_k,\ldots, [\bm{x}_0]_k$, respectively. These are determined using the inverse approximated chain as follows
\begin{align}\label{eq_4}
\bm{b}_i &= (\bm{I} + (\bm{A}_{i-1}\bm{D}^{-1}_{i-1})\bm{b}_{i-1}  =\bm{b}_{i-1} + (\bm{A}_0\bm{D}^{-1}_0)^{2^{i-1}}\bm{b}_{i-1} \\ \nonumber
\bm{x}_{i} &= \frac{1}{2}[\bm{D}^{-1}_i\bm{b}_i + (\bm{I} +\bm{D}^{-1}_i\bm{A}_i)x_{i+1}] =\frac{1}{2}[\bm{D}^{-1}_0\bm{b}_i + \bm{x}_{i+1} + (\bm{D}^{-1}_0\bm{A}_0)^{2^i}\bm{x}_{i+1}]
\end{align}

Considering the computation of $[\bm{b}_1]_k,\ldots, [\bm{b}_d]_k$ for $\rho> i - 1$, we have
\begin{align*}
[\bm{b}_{i}]_k &= [\bm{b}_{i-1}]_k + [(\bm{A}_0\bm{D}^{-1}_0)^{2^{i-1}}\bm{b}_{i-1}]_k = [\bm{b}_{i-1}]_k + [\underbrace{\bm{A}_0\bm{D}^{-1}_0 \ldots \bm{A}_0\bm{D}^{-1}_0}_{2^{i-1}}\bm{b}_{i-1}]_k = [\bm{b}_{i-1}]_k + [\underbrace{\bm{A}_0\bm{D}^{-1}_0 \ldots \bm{A}_0\bm{D}^{-1}_0}_{2^{i-1}-1}\bm{u}^{(i-1)}_1]_k \dots \\
&= [\bm{b}_{i-1}]_k + \left[\bm{u}^{(i-1)}_{2^{i-1}}\right]_k, \ \ \ \text{with $\bm{u}^{(i-1)}_{j+1} = \bm{A}_0\bm{D}^{-1}_0\bm{u}^{(i-1)}_{j}$ for $j = 1,\ldots 2^{i-1}-1$.}
\end{align*}

Since $\bm{A}_0\bm{D}^{-1}_0$ has a sparsity pattern corresponding to 1-hop neighborhood (see Claim~\ref{claim_1}),  node $\bm{v}_k$ computes $\left[\bm{u}^{(i-1)}_{j+1}\right]_k$, based on $\bm{u}^{(i-1)}_{j}$, acquired from its 1-Hop neighbors. It is easy to see that $ \forall i \ \text{such that} \ i - 1 < \rho$ the computation of $[\bm{b}_i]_k$ requires $\mathcal{O}\left(2^{i-1}d_{max}\right)$ time steps. Thus, the computation of $[\bm{b}_1]_k, \ldots, [\bm{b}_{\rho}]_k$ requires $\mathcal{O}(2^{\rho}d_{\text{max}}) = \mathcal{O}(Rd_{\text{max}})$. Now, consider the computation of $[\bm{b}_i]_k$ but for $i - 1 \geq \rho$ 
\begin{align*}
[\bm{b}_{i}]_k &= [\bm{b}_{i-1}]_k + [(\bm{A}_0\bm{D}^{-1}_0)^{2^{i-1}}\bm{b}_{i-1}]_k  & =[\bm{b}_{i-1}]_k + [\underbrace{\bm{C}_0\ldots \bm{C}_0}_{l_{i-1}}\bm{b}_{i-1}]_k =[\bm{b}_{i-1}]_k + [\underbrace{\bm{C}_0 \ldots \bm{C}_0}_{l_{i-1}-1}\bm{u}^{(i-1)}_1]_k  =[\bm{b}_{i-1}]_k + \left[\bm{u}^{(i-1)}_{l_{i-1}}\right]_k
\end{align*}
with $\bm{C}_0 = (\bm{A}_0\bm{D}^{-1}_0)^R$,  $l_{i-1} = \frac{2^{i-1}}{R}$, and $\bm{u}^{(i-1)}_{j+1} = \bm{C}_0\bm{u}^{(i-1)}_{j}$ for $j=1,\ldots, l_{i-1} - 1$.
Since $\bm{C}_0$ has a sparsity pattern corresponding to R-hop neighborhood (see Claim~\ref{claim_1}), node $\bm{v}_k$ computes $[\bm{u}^{(i-1)}_{j+1}]_k$ based on the components of $\bm{u}^{(i-1)}_{j}$ attained from its R-hop neighbors. For each $i$ such that $i - 1 \geq \rho $ the computing $[\bm{b}_i]_k$ requires $\mathcal{O}\left(\frac{2^{i-1}}{R}\alpha\right)$ time steps, where $\alpha = \min\left\{n, \frac{\left(d^{R+1}_{\text{max}} - 1\right)}{\left(d_{\text{max}} - 1\right)}\right\}$ being the upper bound on the number of nodes in the $R-$ hop neighborhood $\forall \ \bm{v}\in \mathbb{V}$. Therefore, the  overall computation of $[\bm{b}_{\rho + 1}]_k, [\bm{b}_{\rho + 2}]_k, \ldots, [\bm{b}_d]_k$ is achieved in$\mathcal{O}\left(\frac{2^d}{R}\alpha\right)$ time steps. Finally, the time complexity for the computation of all of the values $[\bm{b}_1]_k,[\bm{b}_2]_k,\ldots, [\bm{b}_d]_k$ is $\mathcal{O}\left(\frac{2^d}{R}\alpha + Rd_{max}\right)$. Similar analysis can be applied to determine the computational complexity of $[\bm{x}_d]_k,[\bm{x}_{d-1}]_k,\ldots, [\bm{x}_1]_k$, i.e., \textbf{Part Three} of Algorithm~\ref{Algo:DistRHop}. Using Lemma~\ref{Rude_Alg_guarantee_Lemma}, we arrive at
$\bm{Z}^{\prime}_0\approx_{\epsilon_d} \bm{M}^{-1}_0$. 
Finally, using Claim \ref{claim_2}, the time complexity of $\text{RDistRSolve}$ (Algorithm~\ref{Algo:DistRHop}) is $\mathcal{O}\left(\frac{2^d}{R}\alpha + \alpha Rd_{max}\right)$.
\end{proof} 

\subsection{``Exact'' Distributed R-Hop SDDM Solver}
Having developed an R-hop version which computes a ``rude'' approximation to the solution of $\bm{M}_{0}\bm{x}=\bm{b}_{0}$, now we provide an exact R-hop solver presented in Algorithm~\ref{Alg_ExactRHop}. Similar to $\text{RDistRSolve}$, each node $\bm{v}_k$ receives the $k^{th}$ row $\bm{M}_0$, $[\bm{b}_0]_k$, $d$, $R$, and a precision parameter $\epsilon$ as inputs, and outputs the $k^{th}$ component of the $\epsilon$ close approximation of vector $\bm{x}^{\star}$. 

\begin{algorithm}
  \caption{ \hspace{-.5em} $\text{EDistRSolve}\left(\{[\bm{M}_0]_{k1},\ldots, [\bm{M}_0]_{kn}\}, [\bm{b}_0]_k, d, R,\epsilon\right)$}
  \label{Alg_ExactRHop}
  \begin{algorithmic}
\State \textbf{Initialize}: $[\bm{y}_0]_k = 0$, and $[\bm{\chi}]_k = \text{RDistRSolve}(\{[M_0]_{k1},\ldots, [M_0]_{kn}\}, [b_0]_k, d, R)$	
	\For {$t=1$ to $q$}
		\State  \hspace{-2em} $[\bm{u}^{(1)}_{t}]_k = [\bm{D}_0]_{kk}[\bm{y}_{t-1}]_k - \sum_{j: \bm{v}_j\in \mathbb{N}_{1}(\bm{v}_k)}[\bm{A}_{0}]_{kj}[\bm{y}_{t-1}]_j$
		 \State \hspace{-2em} $[\bm{u}^{(2)}_{t}]_k = \text{RDistRSolve}(\{[\bm{M}_0]_{k1},\ldots, [\bm{M}_0]_{kn}\}, [\bm{u}^{(1)}_{t}]_k, d, R)$
		\State  \hspace{-2em} $[\bm{y}_t]_k = [\bm{y}_{t-1}]_k - [\bm{u}^{(2)}_{t}]_k + [\bm{\chi}]_k$ 
	\EndFor \textbf{end for}	
    \State \textbf{return} $[\tilde{\bm{x}}]_k = [\bm{y}_q]_k$
  \end{algorithmic}
\end{algorithm}

\textbf{Analysis of Algorithm~\ref{Alg_ExactRHop}:} The following Lemma shows that $\text{EDistRSolve}$ computes the $k^{th}$ component of the $\epsilon$ close approximation to $\bm{x}^{\star}$ and provides the time complexity analysis. 

\begin{lemma}\label{Dist_Exact_algorithm_guarantee_lemma}
Let $\bm{M}_0 = \bm{D}_0 - \bm{A}_0$ be the standard splitting. Further, let $\epsilon_d < \sfrac{1}{3}\ln2$. Then Algorithm~\ref{Alg_ExactRHop} requires $\mathcal{O}\left(\log\frac{1}{\epsilon}\right)$ iterations to return the $k^{th}$ component of the $\epsilon$ close approximation to $\bm{x}^{\star}$.  
\end{lemma}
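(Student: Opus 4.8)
The plan is to show that the distributed iteration in Algorithm~\ref{Alg_ExactRHop} reproduces, component by component, exactly the global preconditioned Richardson recursion analyzed in Lemma~\ref{Exact_Alg_guarantee_lemma}, so that the convergence guarantee carries over verbatim. The whole argument is therefore a reduction: once the per-node updates are shown to assemble into the same linear recursion used by $\text{ParallelESolve}$, the iteration count is inherited without deriving any new contraction bound.

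First I would identify the operator that $\text{RDistRSolve}$ realizes. By Lemma~\ref{r_hop_rude_lemma}, the initialization $[\bm{\chi}]_k = \text{RDistRSolve}(\ldots,[\bm{b}_0]_k,d,R)$ collectively computes $\bm{\chi} = \bm{Z}^{\prime}_0\bm{b}_0$ with $\bm{Z}^{\prime}_0\approx_{\epsilon_d}\bm{M}^{-1}_0$, and likewise the loop call yields $\bm{u}^{(2)}_t = \bm{Z}^{\prime}_0\bm{u}^{(1)}_t$. Next I would verify that the first update in the loop is a matrix-vector product with $\bm{M}_0$: since $\bm{M}_0 = \bm{D}_0 - \bm{A}_0$ and $\bm{A}_0$ has a $1$-hop sparsity pattern (Claim~\ref{claim_1}), the line $[\bm{u}^{(1)}_t]_k = [\bm{D}_0]_{kk}[\bm{y}_{t-1}]_k - \sum_{j:\bm{v}_j\in\mathbb{N}_1(\bm{v}_k)}[\bm{A}_0]_{kj}[\bm{y}_{t-1}]_j$ equals exactly $[\bm{M}_0\bm{y}_{t-1}]_k$ and uses only $1$-hop communication. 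Substituting both facts into $[\bm{y}_t]_k = [\bm{y}_{t-1}]_k - [\bm{u}^{(2)}_t]_k + [\bm{\chi}]_k$ and stacking over $k$ gives the global recursion
\begin{equation*}
\bm{y}_t = \bm{y}_{t-1} - \bm{Z}^{\prime}_0\bm{M}_0\bm{y}_{t-1} + \bm{Z}^{\prime}_0\bm{b}_0 = \left(\bm{I} - \bm{Z}^{\prime}_0\bm{M}_0\right)\bm{y}_{t-1} + \bm{Z}^{\prime}_0\bm{b}_0,
\end{equation*}
which is identical in form to the iteration produced by $\text{ParallelESolve}$ (Algorithm~\ref{Algo:Inv2}), with $\bm{Z}^{\prime}_0$ playing the role of the crude inverse operator $\bm{Z}_0$.

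With this identification in hand the convergence analysis reduces to the parallel case. The operator satisfies $\bm{Z}^{\prime}_0\approx_{\epsilon_d}\bm{M}^{-1}_0$, and for the chain $\mathcal{C} = \{\bm{A}_0,\bm{D}_0,\ldots,\bm{A}_d,\bm{D}_d\}$ of Section~\ref{Sec:RudeSolver} all intermediate parameters vanish, so the total approximation parameter is exactly $\epsilon_d$. The hypothesis $\epsilon_d < \frac{1}{3}\ln 2$ therefore matches the condition $\sum_{i=1}^{d}\epsilon_i < \frac{1}{3}\ln 2$ required by Lemma~\ref{Exact_Alg_guarantee_lemma}. Invoking that lemma directly gives that after $q = \mathcal{O}\left(\log\frac{1}{\epsilon}\right)$ iterations the vector $\bm{y}_q$ is an $\epsilon$-close approximation of $\bm{x}^{\star}$; since node $\bm{v}_k$ returns $[\tilde{\bm{x}}]_k = [\bm{y}_q]_k$, it returns the $k^{th}$ component of that approximation.

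I expect the main obstacle to be the bookkeeping in the identification step: carefully checking that the per-node updates together with the two nested $\text{RDistRSolve}$ invocations assemble precisely into the global operators $\bm{M}_0$ and $\bm{Z}^{\prime}_0$, with no extra approximation introduced beyond the $\epsilon_d$ already accounted for in $\bm{Z}^{\prime}_0$, and that each step respects the $R$-hop communication constraint. Once this exact correspondence with the parallel recursion is confirmed, the spectral convergence estimate is entirely inherited from Lemma~\ref{Exact_Alg_guarantee_lemma} and nothing further must be proved.
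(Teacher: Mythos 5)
Your proposal is correct and follows essentially the same route as the paper's own proof: identify the per-node updates with the global preconditioned Richardson recursion $\bm{y}_t = (\bm{I}-\bm{Z}^{\prime}_0\bm{M}_0)\bm{y}_{t-1} + \bm{Z}^{\prime}_0\bm{b}_0$, use Lemma~\ref{r_hop_rude_lemma} to get $\bm{Z}^{\prime}_0\approx_{\epsilon_d}\bm{M}^{-1}_0$, and invoke Lemma~\ref{Exact_Alg_guarantee_lemma}. The extra bookkeeping you flag (checking that the nested $\text{RDistRSolve}$ calls and the $1$-hop product with $\bm{M}_0$ assemble exactly into those operators) is a sound elaboration of what the paper asserts more tersely.
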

\begin{proof}
See Appendix.
\end{proof} 
Next, the following Lemma provides the time complexity analysis of $\text{EDistRSolve}$. 

\begin{lemma}\label{time_complexity_of_distresolve}
Let $\bm{M}_0 = \bm{D}_0 - \bm{A}_0$ be the standard splitting and let $\epsilon_d < \sfrac{1}{3}\ln 2 $, then $\text{EDistRSolve}$ requires 
$\mathcal{O}\left(\left(\sfrac{2^d}{R}\alpha + \alpha Rd_{max}\right)\log\left(\sfrac{1}{\epsilon}\right)\right)$ time steps. Moreover, for each node $\bm{v}_k$, $\text{EDistRSolve}$ only uses information from the R-hop neighbors.
\end{lemma}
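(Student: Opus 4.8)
The plan is to obtain the total running time of $\text{EDistRSolve}$ by composing two facts already established: the per-call cost of $\text{RDistRSolve}$ given in Lemma~\ref{r_hop_rude_lemma}, and the bound $q = \mathcal{O}\!\left(\log\frac{1}{\epsilon}\right)$ on the number of outer iterations supplied by the preceding Lemma~\ref{Dist_Exact_algorithm_guarantee_lemma}. Inspecting Algorithm~\ref{Alg_ExactRHop}, the work performed by each node $\bm{v}_k$ splits into a single initialization call $[\bm{\chi}]_k = \text{RDistRSolve}(\cdots)$ followed by $q$ iterations of the main loop, so I would bound these two contributions separately and then add them.

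For the per-iteration cost, I would examine the three statements inside the loop. The computation of $[\bm{u}^{(1)}_{t}]_k = [\bm{D}_0]_{kk}[\bm{y}_{t-1}]_k - \sum_{j:\bm{v}_j\in\mathbb{N}_1(\bm{v}_k)}[\bm{A}_0]_{kj}[\bm{y}_{t-1}]_j$ is a single matrix--vector product with the row $[\bm{M}_0]_{k\cdot}$ restricted to the $1$-hop neighborhood, and hence costs $\mathcal{O}(d_{\text{max}})$ time steps per node. The computation of $[\bm{u}^{(2)}_{t}]_k$ is one further invocation of $\text{RDistRSolve}$, costing $\mathcal{O}\!\left(\frac{2^d}{R}\alpha + \alpha R d_{\text{max}}\right)$ by Lemma~\ref{r_hop_rude_lemma}, while the update $[\bm{y}_t]_k = [\bm{y}_{t-1}]_k - [\bm{u}^{(2)}_{t}]_k + [\bm{\chi}]_k$ is $\mathcal{O}(1)$. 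Since $\alpha R d_{\text{max}}$ dominates $d_{\text{max}}$, the cost of one iteration is $\mathcal{O}\!\left(\frac{2^d}{R}\alpha + \alpha R d_{\text{max}}\right)$, i.e.\ asymptotically the cost of a single $\text{RDistRSolve}$ call. Adding the identical initialization cost and multiplying the loop cost by $q = \mathcal{O}\!\left(\log\frac{1}{\epsilon}\right)$ then yields the claimed bound $\mathcal{O}\!\left(\left(\frac{2^d}{R}\alpha + \alpha R d_{\text{max}}\right)\log\frac{1}{\epsilon}\right)$.

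For the locality claim, I would verify that no step in Algorithm~\ref{Alg_ExactRHop} requires information from beyond the $R$-hop neighborhood. The two $\text{RDistRSolve}$ calls already obey this restriction by Lemma~\ref{r_hop_rude_lemma} together with the sparsity-pattern Claim~\ref{claim_1}; the $[\bm{u}^{(1)}_{t}]_k$ step sums only over $\mathbb{N}_1(\bm{v}_k)\subseteq\mathbb{N}_R(\bm{v}_k)$; and the $[\bm{y}_t]_k$ update is purely local. The only point that deserves care is confirming that the additional $1$-hop matrix--vector product does not dominate, which follows from the elementary comparison $d_{\text{max}} = \mathcal{O}(\alpha R d_{\text{max}})$; beyond that the argument is a direct aggregation of the two cited lemmas, so I do not anticipate a substantive obstacle.
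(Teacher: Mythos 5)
Your proposal is correct and follows essentially the same route as the paper's own proof: bound each outer iteration by the cost of one $\text{RDistRSolve}$ call via Lemma~\ref{r_hop_rude_lemma}, multiply by the $\mathcal{O}\left(\log\frac{1}{\epsilon}\right)$ iteration count from the preceding lemma, and note that locality is inherited from the subroutine. Your extra check that the $1$-hop product and the local update are dominated by the $\text{RDistRSolve}$ cost is a small refinement the paper leaves implicit.
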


\begin{proof}
See Appendix
\end{proof}
\subsection{Length of the Inverse Chain}
Again these introduced algorithms depend on the length of the inverse approximated chain, $d$. Here, we provide an analysis to determine the value of $d$ which guarantees $\epsilon_d < \frac{1}{3}\ln2$ in $\mathcal{C} = \{\bm{A}_0,\bm{D}_0,\bm{A}_1, \bm{D}_1,\ldots, \bm{A}_d, \bm{D}_d\}$. These results are summarized the following lemma

\begin{lemma}\label{eps_d_lemma}
Let $\bm{M}_0 = \bm{D}_0 - \bm{A}_0$ be the standard splitting and let $\kappa$ denote the condition number $\bm{M}_0$. Consider the inverse approximated chain $\mathcal{C} = \{\bm{A}_0,\bm{D}_0,\bm{A}_1, \bm{D}_1,\ldots, \bm{A}_d, \bm{D}_d\}$ with length $d =  \lceil \log \left(2\ln\left(\frac{\sqrt[3]{2}}{\sqrt[3]{2} - 1}\right)\kappa\right)\rceil$, then 
$\bm{D}_0\approx_{\epsilon_d} \bm{D}_0 - \bm{D}_0\left(\bm{D}^{-1}_0\bm{A}_0\right)^{2^d}$, 
with $\epsilon_d < \sfrac{1}{3}\ln2$.
\end{lemma}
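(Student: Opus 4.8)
The plan is to reduce the Loewner relation $\bm{D}_0 \approx_{\epsilon_d} \bm{D}_0 - \bm{D}_0(\bm{D}_0^{-1}\bm{A}_0)^{2^d}$ to a purely spectral estimate on $\bm{D}_0^{-1}\bm{A}_0$, and then to pick $d$ large enough to make the residual spectral radius exponentially small. First I would record the identifications coming from the chain $\mathcal{C}$: here $\bm{D}_d = \bm{D}_0$ and $\bm{A}_d = \bm{D}_0(\bm{D}_0^{-1}\bm{A}_0)^{2^d}$, so the right-hand side is exactly $\bm{D}_d - \bm{A}_d$. By the definition of $\approx_{\epsilon_d}$, proving the claim amounts to producing $\epsilon_d$ with $e^{-\epsilon_d}\bm{D}_0 \preceq \bm{D}_0 - \bm{A}_d \preceq e^{\epsilon_d}\bm{D}_0$ and then checking $\epsilon_d < \frac{1}{3}\ln 2$.

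The first key step is a spectral bound on $\bm{D}_0^{-1}\bm{A}_0$. Since $\bm{D}_0^{-1}\bm{A}_0$ is similar, via $\bm{D}_0^{1/2}$, to the symmetric matrix $\bm{N} = \bm{D}_0^{-1/2}\bm{A}_0\bm{D}_0^{-1/2}$, its eigenvalues $\{\lambda_i\}$ are real, and $\bm{I} - \bm{N}$ is congruent to $\bm{M}_0 = \bm{D}_0 - \bm{A}_0$, hence positive definite, which already forces $\lambda_i < 1$. Relating the extreme eigenvalues of $\bm{I} - \bm{N}$ to the condition number $\kappa$ of $\bm{M}_0$ yields $|\lambda_i| \le 1 - \frac{1}{\kappa}$ for every $i$. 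Because each eigenvalue of $(\bm{D}_0^{-1}\bm{A}_0)^{2^d}$ equals $\lambda_i^{2^d}$, this gives $\rho\big((\bm{D}_0^{-1}\bm{A}_0)^{2^d}\big) \le \big(1 - \frac{1}{\kappa}\big)^{2^d} =: \gamma$.

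The second step converts this spectral bound into a Loewner sandwich. The eigenvalues of $\bm{D}_0^{-1}\bm{A}_d = (\bm{D}_0^{-1}\bm{A}_0)^{2^d}$ all lie in $[-\gamma, \gamma]$, so applying the quadratic-form estimate that eigenvalues of $\bm{D}^{-1}\bm{A}$ contained in $[-\alpha, \beta]$ yield $(1-\beta)\bm{D} \preceq \bm{D} - \bm{A} \preceq (1+\alpha)\bm{D}$ (with $\bm{D} = \bm{D}_0$, $\bm{A} = \bm{A}_d$, and $\alpha = \beta = \gamma$) gives $(1-\gamma)\bm{D}_0 \preceq \bm{D}_0 - \bm{A}_d \preceq (1+\gamma)\bm{D}_0$.

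Finally I would extract $\epsilon_d$ and calibrate $d$. Matching $(1-\gamma)\bm{D}_0 \preceq \bm{D}_0 - \bm{A}_d \preceq (1+\gamma)\bm{D}_0$ against $e^{-\epsilon_d}\bm{D}_0 \preceq \bm{D}_0 - \bm{A}_d \preceq e^{\epsilon_d}\bm{D}_0$ requires $\epsilon_d \ge \max\{\ln\frac{1}{1-\gamma}, \ln(1+\gamma)\} = \ln\frac{1}{1-\gamma}$. Writing $d = \lceil \log(c\kappa)\rceil$ with $c = 2\ln\frac{\sqrt[3]{2}}{\sqrt[3]{2}-1}$, the elementary inequality $\big(1 - \frac{1}{\kappa}\big)^{c\kappa} \le e^{-c}$ bounds $\gamma \le e^{-c}$, so that $\epsilon_d \le \ln\frac{e^c}{e^c - 1} < \frac{1}{3}\ln 2$ by the choice of $c$. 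The step I expect to be the main obstacle is the eigenvalue estimate $|\lambda_i| \le 1 - \frac{1}{\kappa}$: controlling both the largest eigenvalue of $\bm{N}$ (approaching $1$) and the smallest (approaching $-1$) uniformly by $1 - \frac{1}{\kappa}$ is exactly where the condition number of $\bm{M}_0$ enters, and it must be handled with care since the congruence $\bm{I} - \bm{N} \sim \bm{M}_0$ preserves inertia but not the eigenvalues themselves.
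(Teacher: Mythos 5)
Your proposal follows essentially the same route as the paper's own proof: the eigenvalue bound $|\lambda_i| \le 1 - \frac{1}{\kappa}$ for $\bm{D}_0^{-1}\bm{A}_0$ (which the paper imports as a cited claim), the power-of-eigenvalues argument giving $\rho\bigl((\bm{D}_0^{-1}\bm{A}_0)^{2^d}\bigr) \le \gamma = (1-\frac{1}{\kappa})^{2^d}$, the Loewner sandwich $(1-\gamma)\bm{D}_0 \preceq \bm{D}_0 - \bm{A}_d \preceq (1+\gamma)\bm{D}_0$, and the calibration $\gamma \le e^{-c}$ with $c = 2\ln\frac{\sqrt[3]{2}}{\sqrt[3]{2}-1}$ to get $\epsilon_d = \ln\frac{e^c}{e^c-1} < \frac{1}{3}\ln 2$. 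The only cosmetic difference is that you sketch the symmetrization/congruence reasoning behind the eigenvalue estimate, which the paper simply cites from Spielman and Peng.
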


\begin{proof}
The proof is similar to that of Section~\ref{Sec:Length} and can be found in the Appendix. 
\end{proof}

\section{Discussion \& Conclusions}
We developed a distributed version of the parallel SDDM solver of~\cite{Spielman} and proposed a fast decentralized solver for SDDM systems. Our approach is capable of acquiring $\epsilon$-close solutions for arbitrary $\epsilon$ in $\mathcal{O}\left(n^{3}\frac{\bm{\alpha}}{R}\frac{\bm{W}_{\text{max}}}{\bm{W}_{\text{min}}}\log\left(\frac{1}{\epsilon}\right)\right)$, with $n$ the number of nodes in graph $\mathbb{G}$, $\bm{W}_{\text{max}}$ and $\bm{W}_{\text{min}}$ denoting the largest and smaller weights of the edges in $\mathbb{G}$, respectively, $\bm{\alpha}=\min\left\{n,\frac{d_{\text{max}}^{R+1}-1}{d_{\text{max}}-1}\right\}$ representing the upper bound on the size of the R-Hop neighborhood $\forall \bm{v} \in \mathbb{V}$, and $\epsilon \in (0,\frac{1}{2}]$ as the precision parameter. After developing the full communication version, we proposed a generalization to the R-Hop case where communication is restricted. 

Our method is faster than state-of-the-art methods for iteratively solving linear systems. Typical linear methods, such as Jacobi iteration~\cite{Axelsson}, are guaranteed to converge if the matrix is \emph{strictly} diagonally dominant. We proposed a distributed algorithm that generalizes this setting, where it is guaranteed to converge in the SDD/SDDM scenario. Furthermore, the time complexity of linear techniques is $\mathcal{O}(n^{1+\beta} \log n)$, hence, a case of strictly diagonally dominant matrix $\bm{M}_{0}$ can be easily constructed to lead to a complexity of $\mathcal{O}(n^{4}\log n)$. Consequently, our approach not only generalizes the assumptions made by linear methods, but is also faster by a factor of $\log n$.

In centralized solvers, nonlinear methods (e.g., conjugate gradient descent~\cite{CG,Nocedal2006NO}, etc.) typically offer computational advantages over linear methods (e.g., Jacobi Iteration) for iteratively solving linear systems. These techniques, however, can not be easily decentralized. For instance, the stopping criteria for nonlinear methods require the computation of weighted norms of residuals (e.g., $||\bm{p}_{k}||_{\bm{M}_{0}}$ with $\bm{p}_{k}$ being the search direction at iteration $k$). To the best of our knowledge, the distributed computation of weighted norms is difficult. Namely using the approach in~\cite{Olshevsky}, this requires the calculation of the top singular value of $\bm{M}_{0}$ which amounts to a power iteration on $\bm{M}_{0}^{\mathsf{T}}\bm{M}_{0}$ leading to the loss of sparsity. Furthermore, conjugate gradient methods require global computations of inner products. 

Another existing method which we compare our results to is the recent work of the authors~\cite{Asuman} where a local and asynchronous solution for solving systems of linear equations is considered. In their work, the authors derive a complexity bound, for one component of the solution vector, of $\mathcal{O}\left(\min\left(d\epsilon^{\frac{\ln d}{\ln ||\bm{G}||_{2}}}, \frac{d n \ln \epsilon}{\ln ||\bm{G}||_{2}}\right)\right)$, with $\epsilon$ being the precision parameter, $d$ a constant bound on the maximal degree of $\mathbb{G}$, and $\bm{G}$ is defined as $\bm{x}=\bm{G}\bm{x}+\bm{z}$ which can be directly mapped to $\bm{A}\bm{x} = \bm{b}$. The relevant scenario to our work is when $\bm{A}$ is PSD and $\bm{G}$ is symmetric. Here, the bound on the number of multiplications is given by $\mathcal{O}\left(\min\left(d^{\frac{\kappa(\bm{A})+1}{2}\ln\frac{1}{\epsilon}}, \frac{\kappa(\bm{A})+1}{2}n d \ln\frac{1}{\epsilon}\right)\right)$, with $\kappa(\bm{A})$ being the condition number of $\bm{A}$. In the general case, when the degree depends on the number of nodes (i.e., $d = d(n)$), the minimum in the above bound will be the result of the second term ( $\frac{\kappa(\bm{A})+1}{2}n d \ln\frac{1}{\epsilon}$) leading to $\mathcal{O}\left(d(n) n \kappa(\bm{A})\ln \frac{1}{\epsilon}\right)$. Consequently, in such a general setting, our approach outperforms~\cite{Asuman} by a factor of $d(n)$. 


\newpage
%
\bibliographystyle{abbrv}
\bibliography{references}  
\newpage 

\section*{Appendix}
In this appendix, we will provide proof of the Lemmas in the original submission. 

\begin{mylemma}\label{lemma_approx_matrix_inverse}
Let $\bm{Z}_0\approx_{\epsilon}\bm{M}^{-1}_0$, and $\tilde{\bm{x}} = \bm{Z}_0\bm{b}_0$. Then $\tilde{\bm{x}}$ is $\sqrt{2^{\epsilon}(e^{\epsilon} - 1)}$ approximate solution of $\bm{M}_{0}\bm{x}=\bm{b}_{0}$.
\end{mylemma}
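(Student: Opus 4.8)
The plan is to pass to the $\bm{M}_0$-geometry, where the hypothesis $\bm{Z}_0\approx_\epsilon\bm{M}_0^{-1}$ becomes a clean two-sided spectral bound, and then read off the residual directly. First I would symmetrize the preconditioned operator: applying part (6) of Lemma~\ref{approx_lemma_facts} with $\bm{V}=\bm{M}_0^{1/2}$ to $\bm{Z}_0\approx_\epsilon\bm{M}_0^{-1}$ gives $\bm{\Pi}:=\bm{M}_0^{1/2}\bm{Z}_0\bm{M}_0^{1/2}\approx_\epsilon\bm{M}_0^{1/2}\bm{M}_0^{-1}\bm{M}_0^{1/2}=\bm{I}$. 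By the definition of $\approx_\epsilon$ this reads $e^{-\epsilon}\bm{I}\preceq\bm{\Pi}\preceq e^{\epsilon}\bm{I}$, so every eigenvalue $\mu$ of the symmetric matrix $\bm{\Pi}$ lies in $[e^{-\epsilon},e^{\epsilon}]$.

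Next I would rewrite the error in these coordinates. Since $\bm{x}^\star=\bm{M}_0^{-1}\bm{b}_0$ and $\tilde{\bm{x}}=\bm{Z}_0\bm{b}_0$, we have $\bm{x}^\star-\tilde{\bm{x}}=(\bm{I}-\bm{Z}_0\bm{M}_0)\bm{x}^\star$. Setting $\bm{y}=\bm{M}_0^{1/2}\bm{x}^\star$ and left-multiplying by $\bm{M}_0^{1/2}$ yields $\bm{M}_0^{1/2}(\bm{x}^\star-\tilde{\bm{x}})=(\bm{I}-\bm{\Pi})\bm{y}$. Because $\|\bm{u}\|_{\bm{M}_0}^2=\|\bm{M}_0^{1/2}\bm{u}\|_2^2$, this gives the identities $\|\bm{x}^\star-\tilde{\bm{x}}\|_{\bm{M}_0}^2=\|(\bm{I}-\bm{\Pi})\bm{y}\|_2^2$ and $\|\bm{x}^\star\|_{\bm{M}_0}^2=\|\bm{y}\|_2^2$, so the relative $\bm{M}_0$-error equals the quotient $\|(\bm{I}-\bm{\Pi})\bm{y}\|_2/\|\bm{y}\|_2$.

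I would then bound this quotient spectrally. As $\bm{\Pi}$ is symmetric with spectrum in $[e^{-\epsilon},e^{\epsilon}]$, the symmetric matrix $\bm{I}-\bm{\Pi}$ has spectrum in $[1-e^\epsilon,\,1-e^{-\epsilon}]$, whence $\|\bm{I}-\bm{\Pi}\|_2=\max\{e^\epsilon-1,\,1-e^{-\epsilon}\}=e^\epsilon-1$ (the first term dominates since $e^\epsilon+e^{-\epsilon}\ge 2$). Therefore $\|\bm{x}^\star-\tilde{\bm{x}}\|_{\bm{M}_0}\le(e^\epsilon-1)\|\bm{x}^\star\|_{\bm{M}_0}$. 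Finally, to match the stated constant I would invoke the elementary scalar inequality $e^\epsilon-1\le 2^\epsilon$, valid on the relevant range $\epsilon\in(0,\tfrac12]$, to write $(e^\epsilon-1)^2\le 2^\epsilon(e^\epsilon-1)$ and conclude $\|\bm{x}^\star-\tilde{\bm{x}}\|_{\bm{M}_0}\le\sqrt{2^\epsilon(e^\epsilon-1)}\,\|\bm{x}^\star\|_{\bm{M}_0}$, i.e.\ $\tilde{\bm{x}}$ is a $\sqrt{2^\epsilon(e^\epsilon-1)}$-approximate solution.

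The only genuinely delicate point, and the one I would treat as the main obstacle, is the passage from the two-sided operator relation $\bm{Z}_0\approx_\epsilon\bm{M}_0^{-1}$ to a bound on the quadratic residual operator $(\bm{I}-\bm{\Pi})^2$: Loewner inequalities cannot be squared or multiplied naively, so the argument must route through the $\bm{M}_0^{1/2}$-conjugation, which simultaneously diagonalizes the picture and reduces everything to the scalar eigenvalue estimate for $(1-\mu)^2$ on $[e^{-\epsilon},e^{\epsilon}]$. I would also flag the standing assumption that $\bm{Z}_0$ is symmetric positive definite (implicit in writing $\bm{Z}_0\approx_\epsilon\bm{M}_0^{-1}$), without which $\bm{\Pi}$ is not symmetric and the spectral step fails; and note that the final scalar inequality is precisely where the (slightly loose) constant $\sqrt{2^\epsilon(e^\epsilon-1)}$ replaces the sharp value $e^\epsilon-1$.
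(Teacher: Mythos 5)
Your proof is correct, but it takes a genuinely different route from the paper's. The paper works entirely with quadratic forms in $\bm{b}_0$: it expands $\|\bm{x}^{\star}-\tilde{\bm{x}}\|^2_{\bm{M}_0}$ into the three terms $(\bm{x}^{\star})^{\mathsf{T}}\bm{M}_0\bm{x}^{\star}+\tilde{\bm{x}}^{\mathsf{T}}\bm{M}_0\tilde{\bm{x}}-2(\bm{x}^{\star})^{\mathsf{T}}\bm{M}_0\tilde{\bm{x}}$, notes that the cross term equals $\bm{b}_0^{\mathsf{T}}\bm{Z}_0\bm{b}_0$ exactly, bounds the other two by $e^{\epsilon}\bm{b}_0^{\mathsf{T}}\bm{Z}_0\bm{b}_0$ using $\bm{M}_0^{-1}\preceq e^{\epsilon}\bm{Z}_0$ and $\bm{Z}_0\bm{M}_0\bm{Z}_0\preceq e^{\epsilon}\bm{Z}_0$, and then converts $\bm{b}_0^{\mathsf{T}}\bm{Z}_0\bm{b}_0\le e^{\epsilon}\|\bm{x}^{\star}\|^2_{\bm{M}_0}$, arriving at $2e^{\epsilon}(e^{\epsilon}-1)\|\bm{x}^{\star}\|^2_{\bm{M}_0}$. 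Your conjugation by $\bm{M}_0^{1/2}$ avoids the term-by-term bookkeeping and the issue you flag about multiplying Loewner inequalities: once $\bm{\Pi}=\bm{M}_0^{1/2}\bm{Z}_0\bm{M}_0^{1/2}$ is symmetric with spectrum in $[e^{-\epsilon},e^{\epsilon}]$, the single spectral-norm estimate $\|\bm{I}-\bm{\Pi}\|_2\le e^{\epsilon}-1$ yields $\|\bm{x}^{\star}-\tilde{\bm{x}}\|_{\bm{M}_0}\le(e^{\epsilon}-1)\|\bm{x}^{\star}\|_{\bm{M}_0}$, which is strictly sharper than either constant above. Two remarks. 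First, your final relaxation $e^{\epsilon}-1\le 2^{\epsilon}$ does require restricting $\epsilon$ (it fails for $\epsilon\gtrsim 1.18$), but that is harmless here since the lemma is only invoked with $\epsilon<\frac{1}{3}\ln 2$. Second, note that the paper's own derivation actually ends at $2e^{\epsilon}(e^{\epsilon}-1)$, which exceeds the stated $2^{\epsilon}(e^{\epsilon}-1)$ for every $\epsilon>0$; the statement appears to carry a typo for $\sqrt{2e^{\epsilon}(e^{\epsilon}-1)}$, and your argument is the one that genuinely delivers the constant as printed (on the relevant range), besides making explicit the symmetry of $\bm{Z}_0$ on which both proofs silently rely.
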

\begin{proof}
Let $\bm{x}^{\star}\in \mathbb{R}^{n}$ be the solution of $\bm{M}_0\bm{x}=\bm{b}_{0}$, then
\begin{align}\label{app_solut_for_approx_inv}
&||\bm{x}^{\star} - \tilde{\bm{x}}||^2_{\bm{M}_0} =(\bm{x}^{\star} - \tilde{\bm{x}})^T\bm{M}_0(\bm{x}^{\star} - \tilde{\bm{x}}) = (\bm{x}^{\star})^T\bm{M}_0\bm{x}^{\star} + (\tilde{\bm{x}})^T\bm{M}_0\tilde{\bm{x}} - 2(\bm{x}^{\star})^T\bm{M}_0\tilde{\bm{x}}
\end{align}
Consider each term separately in (\ref{app_solut_for_approx_inv}):
\begin{enumerate}
\item $(\bm{x}^{\star})^T\bm{M}_0\tilde{\bm{x}} = \bm{b}^{T}_0\bm{M}^{-1}_0\bm{M}_0\bm{Z}_0\bm{b}_0 = \bm{b}^T_0\bm{Z}_0\bm{b}_0$

\item $(\bm{x}^{\star})^T\bm{M}_0\bm{x}^{\star} = \bm{b}^T_0\bm{M}^{-1}_0\bm{M}_0\bm{M}^{-1}_0\bm{b}_0 = \bm{b}^T_0\bm{M}^{-1}_0\bm{b}_0 \le e^{\epsilon}\bm{b}^T_0\bm{Z}_0\bm{b}_0$

\item $\tilde{\bm{x}}^T\bm{M}_0\tilde{\bm{x}} = \bm{b}^T_0\bm{Z}_0\bm{M}_0\bm{Z}_0\bm{b}_0 \le e^{\epsilon}\bm{b}^T_0\bm{Z}_0\bm{b}_0$

where in the last step we used that if $\bm{Z}_0\approx_{\epsilon}\bm{M}^{-1}_0$, then $\bm{M}_0 \approx_{\epsilon}\bm{Z}^{-1}_0$.
\end{enumerate}
Therefore, (\ref{app_solut_for_approx_inv}) can be rewritten as:
\begin{equation}\label{inter_1}
||\bm{x}^{\star} - \tilde{\bm{x}}||^2_{\bm{M}_0} \le 2(e^{\epsilon} - 1)\bm{b}^T_0\bm{Z}_0\bm{b}_0
\end{equation}
Combining (\ref{inter_1}) with $\bm{b}^T_0\bm{Z}_0\bm{b}_0 = (\bm{x}^{\star})^T\bm{M}_0\bm{Z}_0\bm{M}_0\bm{x}^{\star}  \le e^{\epsilon}(\bm{x}^{\star})^T\bm{M}_0\bm{x}^{\star}$:
\begin{align*}
&||\bm{x}^{\star} - \tilde{\bm{x}}||^2_{\bm{M}_0} \le 2(e^{\epsilon} - 1)e^{\epsilon}(\bm{x}^{\star})^T\bm{M}_0\bm{x}^{\star} = 2(e^{\epsilon} - 1)e^{\epsilon}||\bm{x}^{\star}||^2_{\bm{M}_0}
\end{align*}
\end{proof}

\begin{mylemma}\label{Rude_Dec_Alg_guarantee_Lemma}
Let $\bm{M}_0 = \bm{D}_0 - \bm{A}_0$ be the standard splitting of $\bm{M}_{0}$. Let $\bm{Z}^{\prime}_0$ be the operator defined by $\text{DistrRSolve}([\{[\bm{M}_0]_{k1},\ldots, [\bm{M}_0]_{kn}\}, [\bm{b}_0]_k, d)$ (i.e., $\bm{x}_0 = \bm{Z}^{\prime}_0\bm{b}_0$). Then
\begin{equation*}
\bm{Z}^{\prime}_0\approx_{\epsilon_d} \bm{M}^{-1}_0
\end{equation*}
Moreover, Algorithm~\ref{Algo:DisRudeApprox} requires  $\mathcal{O}\left(dn^2\right)$ time steps. 
\end{mylemma}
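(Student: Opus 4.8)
The plan is to reduce the correctness claim to the parallel guarantee of Lemma~\ref{Rude_Alg_guarantee_Lemma} and then to establish the time bound by a direct, level-by-level accounting of the work. The first and central step is to show that the operator $\bm{Z}^{\prime}_0$ produced by $\text{DistrRSolve}$ is \emph{identical} to the operator $\bm{Z}_0$ that $\text{ParallelRSolve}$ would produce when run on the specific chain $\mathcal{C} = \{\bm{M}_0, \bm{M}_1, \ldots, \bm{M}_d\}$ with $\bm{D}_i = \bm{D}_0$ and $\bm{A}_i = \bm{D}_0(\bm{D}^{-1}_0\bm{A}_0)^{2^i}$. Since the excerpt already verified that $\mathcal{C}$ is an inverse approximated chain with $\epsilon_0 = \cdots = \epsilon_{d-1} = 0$ and only $\epsilon_d$ possibly nonzero, Lemma~\ref{Rude_Alg_guarantee_Lemma} gives $\bm{Z}_0 \approx_{\sum_{i=0}^d \epsilon_i} \bm{M}^{-1}_0$, i.e.\ $\bm{Z}_0 \approx_{\epsilon_d} \bm{M}^{-1}_0$; combined with the equality $\bm{Z}^{\prime}_0 = \bm{Z}_0$ this settles the approximation claim.

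To establish that equality I would trace the two parts of Algorithm~\ref{Algo:DisRudeApprox} against the updates of $\text{ParallelRSolve}$. The crucial algebraic facts, both coming from the conjugation identity $\bm{D}_0(\bm{D}^{-1}_0\bm{A}_0)^m\bm{D}^{-1}_0 = (\bm{A}_0\bm{D}^{-1}_0)^m$, are that for this chain $\bm{A}_{i-1}\bm{D}^{-1}_{i-1} = (\bm{A}_0\bm{D}^{-1}_0)^{2^{i-1}}$ and $\bm{D}^{-1}_i\bm{A}_i = (\bm{D}^{-1}_0\bm{A}_0)^{2^i}$. Consequently the recursion $\bm{b}_i = (\bm{I} + \bm{A}_{i-1}\bm{D}^{-1}_{i-1})\bm{b}_{i-1}$ collapses to $\bm{b}_i = \bm{b}_{i-1} + (\bm{A}_0\bm{D}^{-1}_0)^{2^{i-1}}\bm{b}_{i-1}$, and the recursion for $\bm{x}_i$ collapses to $\bm{x}_i = \tfrac{1}{2}[\bm{D}^{-1}_0\bm{b}_i + \bm{x}_{i+1} + (\bm{D}^{-1}_0\bm{A}_0)^{2^i}\bm{x}_{i+1}]$, which are exactly the componentwise formulas in Parts One and Two. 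I would also verify the doubling recursion $(\bm{A}_0\bm{D}^{-1}_0)^{2^{i-1}} = \big((\bm{A}_0\bm{D}^{-1}_0)^{2^{i-2}}\big)^2$ used to build the needed rows, together with the symmetry identity $[(\bm{A}_0\bm{D}^{-1}_0)^m]_{rj} = \tfrac{[\bm{D}_0]_{rr}}{[\bm{D}_0]_{jj}}[(\bm{A}_0\bm{D}^{-1}_0)^m]_{jr}$, which follows from $\bm{D}^{-1}_0(\bm{A}_0\bm{D}^{-1}_0)^m$ being symmetric (seen via the similarity $\bm{D}^{-1}_0\bm{A}_0 = \bm{D}^{-1/2}_0(\bm{D}^{-1/2}_0\bm{A}_0\bm{D}^{-1/2}_0)\bm{D}^{1/2}_0$). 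This identity is precisely what lets a node reconstruct the column entry it needs from the row entry a neighbor can actually transmit.

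For the time complexity I would count operations level by level. At level $i$ of Part One, node $\bm{v}_k$ forms the entries $[(\bm{A}_0\bm{D}^{-1}_0)^{2^{i-1}}]_{kj}$ for the at most $n$ indices $j$, each entry being a sum of at most $n$ products, costing $\mathcal{O}(n^2)$; the ensuing update of $[\bm{b}_i]_k$ adds a further $\mathcal{O}(n)$. Summing over the $d$ levels yields $\mathcal{O}(dn^2)$, and Part Two is analyzed identically for the rows of $(\bm{D}^{-1}_0\bm{A}_0)^{2^i}$ and the $[\bm{x}_i]_k$ updates, so the total is $\mathcal{O}(dn^2)$ time steps.

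The main obstacle is the first step: carefully establishing that the distributed, per-component computation truly reproduces the global parallel iteration. This requires not only the power-identity bookkeeping but also the justification that the symmetry trick lets each node assemble the matrix-power rows it needs from data its neighbors can actually supply (rows rather than columns). Once this equivalence is pinned down, the approximation bound is immediate from Lemma~\ref{Rude_Alg_guarantee_Lemma}, and the complexity claim is a routine count.
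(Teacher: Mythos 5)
Your proposal is correct and follows essentially the same route as the paper's own proof: both reduce the approximation claim to Lemma~\ref{Rude_Alg_guarantee_Lemma} applied to the chain with $\epsilon_0=\cdots=\epsilon_{d-1}=0$, both rely on the collapse $\bm{A}_{i-1}\bm{D}^{-1}_{i-1}=(\bm{A}_0\bm{D}^{-1}_0)^{2^{i-1}}$ together with the symmetry identity $[(\bm{A}_0\bm{D}^{-1}_0)^{m}]_{rj}=\frac{[\bm{D}_0]_{rr}}{[\bm{D}_0]_{jj}}[(\bm{A}_0\bm{D}^{-1}_0)^{m}]_{jr}$ to let nodes reconstruct columns from transmitted rows, and both obtain the $\mathcal{O}(dn^2)$ bound by charging $\mathcal{O}(n^2)$ per level over $d$ levels. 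Your explicit statement that the distributed operator equals the parallel one is just a cleaner phrasing of what the paper does implicitly.
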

\begin{proof} 
The proof commences by showing that $\left(\bm{D}^{-1}_0\bm{A}_0\right)^r$ and $\left(\bm{A}_0\bm{D}^{-1}_0\right)^{-r}$ have a sparsity pattern corresponding to the r-hop neighborhood for any $r\in \mathbb{N}$. This case be shown using induction as follows
\begin{enumerate}
\item \text{Base case:} If $r = 1$, we have
\begin{equation*}
[\bm{A}_0\bm{D}^{-1}_0]_{ij} = \begin{cases} \frac{[\bm{A}_0]_{ij}}{[\bm{D}_0]_{ii}} &\mbox{if } j: \bm{v}_j\in \mathbb{N}_{1}(\bm{v}_i), \\ 
0 & \mbox{otherwise }. \end{cases}
\end{equation*}
Therefore, $\bm{A}_0\bm{D}^{-1}_0$ has sparsity pattern corresponding to the 1-hop neighborhood.
\end{enumerate}
Assume that for all $1\le p\le r-1$, $(\bm{A}_0\bm{D}^{-1}_0)^p$ has a sparsity patter corresponding to the p-hop neighborhood.
\begin{enumerate}
\item[2.] Now, consider $(\bm{A}_0\bm{D}^{-1}_0)^{r}$, where
\begin{equation}\label{sparisty_lemma_2}
[(\bm{A}_0\bm{D}^{-1}_0)^{r}]_{ij} = \sum_{k=1}^{n}[(\bm{A}_0\bm{D}^{-1}_0)^{r-1}]_{ik}[\bm{A}_0\bm{D}^{-1}_0]_{kj}
\end{equation}
Since $\bm{A}_0\bm{D}^{-1}_0$ is non negative then it is easy to see that $[(\bm{A}_0\bm{D}^{-1}_0)^{r}]_{ij}\ne 0$ if and only if there exists $k$ such that $\bm{v}_k\in \mathbb{N}_{r-1}(\bm{v}_i)$ and $\bm{v}_k\in \mathbb{N}_1(\bm{v}_j)$ (i.e., $\bm{v}_j\in \mathbb{N}_{r}(\bm{v}_i)$).
\end{enumerate}
For $\bm{D}^{-1}_0\bm{A}_0$, the same results can be derived similarly.

Please notice that in \textbf{Part One} of $\text{DistrRSolve}$ algorithm node $\bm{v}_k$ computes (in a distributed fashion) the components $[\bm{b}_1]_k$ to $[\bm{b}_d]_{k}$ using the inverse approximated chain $\mathcal{C} = \{\bm{A}_0,\bm{D}_0,\bm{A}_1, \bm{D}_1,\ldots, \bm{A}_d, \bm{D}_d\}$. Formally, 
\begin{align*}
\bm{b}_{i} &= \left[\bm{I} + \left(\bm{A}_0\bm{D}^{-1}_0\right)^{2^{i-1}}\right]\bm{b}_{i-1} = \bm{b}_{i-1} + \left(\bm{A}_0\bm{D}^{-1}_0\right)^{2^{i-2}} \\
&\hspace{18em}\left(\bm{A}_0\bm{D}^{-1}_0\right)^{2^{i-2}}\bm{b}_{i-1}
\end{align*}
Clearly, at the $i^{th}$ iteration node $\bm{v}_k$ requires the $k^{th}$ row of $\left(\bm{A}_{0}\bm{D}^{-1}_{0}\right)^{2^{i-2}}$ (i.e., the $k^{th}$ row from the previous iteration) in addition to the $j^{th}$ row of $\left(\bm{A}_0\bm{D}^{-1}_0\right)^{2^{i-2}}$ from all nodes $\bm{v}_j\in \mathbb{N}_{2^{i-1}}\left(\bm{v}_k\right)$ to compute the $k^{th}$ row of $\left(\bm{A}_{0}\bm{D}^{-1}_{0}\right)^{2^{i-1}}$. 

For computing $\left[\left(\bm{A}_0\bm{D}^{-1}_0\right)^{2^{i-1}}\right]_{kj}$, node $\bm{v}_k$ requires the $k^{th}$ row and $j^{th}$ column of $\left(\bm{A}_0\bm{D}^{-1}_0\right)^{2^{i-2}}$. The problem, however, is that node $\bm{v}_j$ can only send the $j^{th}$ row of $\left(\bm{A}_0\bm{D}^{-1}_0\right)^{2^{i-2}}$ which can be easily seen not to be see that symmetric. To overcome this issue, node $\bm{v}_k$ has to compute the $j^{th}$ column of $\left(\bm{A}_0\bm{D}^{-1}_0\right)^{2^{i-2}}$ based on its $j^{th}$ row.  The fact that $\bm{D}^{-1}_0\left(\bm{A}_0\bm{D}^{-1}_0\right)^{2^{i-2}}$ is symmetric, manifests that for $r = 1,\ldots, n$

\begin{equation*}
\frac{\left[\left(\bm{A}_0\bm{D}^{-1}_0\right)^{2^{i-2}}\right]_{rj}}{[\bm{D}_0]_{rr}} = \frac{\left[\left(\bm{A}_0\bm{D}^{-1}_0\right)^{2^{i-2}}\right]_{jr}}{[\bm{D}_0]_jj}
\end{equation*}
Hence, for all $r=1,\ldots,n$
\begin{equation}\label{columns_from_rows}
\left[\left(\bm{A}_0\bm{D}^{-1}_0\right)^{2^{i-2}}\right]_{rj} = \frac{[\bm{D}_0]_{rr}}{[\bm{D}_0]_{jj}}\left[\left(\bm{A}_0\bm{D}^{-1}_0\right)^{2^{i-2}}\right]_{jr}
\end{equation}     
Now, lets analyze the time complexity of computing components ${[b_1]_k, [b_2]_k,\ldots, [b_d]_k}$.\\\newline
 \textbf{Time Complexity Analysis:} At each iteration $i$, node $\bm{v}_k$ receives the $j^{th}$ row of $\left(\bm{A}_0\bm{D}^{-1}_0\right)^{2^{i-2}}$ from all nodes $\bm{v}_j\in \mathbb{N}_{2^{i-1}}(\bm{v}_k)$. using Equation~\ref{columns_from_rows}, node $\bm{v}_{k}$ computes the corresponding columns as well as the product of these columns with the $k^{th}$ row of $\left(\bm{A}_0\bm{D}^{-1}_0\right)^{2^{i-2}}$. Therefore, the time complexity at the $i^{th}$ iteration is $\mathcal{O}\left(n^2 + \text{diam}\left(\mathbb{G}\right)\right)$, where $n^2$ is responsible for the $k^{th}$ row computation, and $\text{diam}\left(\mathbb{G}\right)$ represents the communication cost between the nodes.  Using the fact that $\text{diam}\left(\mathbb{G}\right)\le n$, the total complexity of $\textbf{Part One}$ in $\text{DistrRSolve}$ algorithm is $\mathcal{O}\left(dn^2\right)$.

In \textbf{Part Two}, node $\bm{v}_k$ computes (in a distributed fashion) ${[\tilde{\bm{x}}_{d-1}]_k, [\tilde{\bm{x}}_{d-2}]_k,\ldots, [\tilde{\bm{x}}_{0}]_k}$ using the same inverse approximated chain $\mathcal{C} = \{\bm{A}_0,\bm{D}_0,\bm{A}_1, \bm{D}_1,\ldots, \bm{A}_d, \bm{D}_d\}$. 
\begin{align}
\bm{x}_{i} = \frac{1}{2}\bm{D}^{-1}_0\bm{b}_{i} &+ \frac{1}{2}\left[\bm{I} + (\bm{D}^{-1}_0\bm{A}_0)^{2^i}\right]\bm{x}_{i+1} = \frac{1}{2}\bm{D}^{-1}_0\bm{b}_{i} + \frac{1}{2}\bm{x}_{i+1}  \\\nonumber
&\hspace{3em}+\frac{1}{2}\left(\bm{D}^{-1}_0\bm{A}_0\right)^{2^{i-1}}\left(\bm{D}^{-1}_0\bm{A}_0\right)^{2^{i-1}}\bm{x}_{i+1}
\end{align}
for $i=d-1, \ldots, 1$. Thus,
\begin{equation*}
\bm{x}_{0} = \frac{1}{2}\bm{D}^{-1}_0\bm{b}_{0} + \frac{\bm{x}_1}{2} + \frac{1}{2}\left(\bm{D}^{-1}_0\bm{A}_0\right)\bm{x}_{1}
\end{equation*}
Similar to the analysis of \textbf{Part One} of $\text{DistrRSolve}$ algorithm the time complexity of \textbf{Part Two} as well as the time complexity of the whole algorithm is $\mathcal{O}\left(dn^2\right)$.

Finally, using Lemma~\ref{Rude_Alg_guarantee_Lemma} of the original paper for the inverse approximated chains $\mathcal{C} = \{\bm{A}_0,\bm{D}_0, \bm{A}_1, \bm{D}_1,\ldots, \bm{A}_d, \bm{D}_d\}$ yields:

\begin{equation*}
\bm{Z}^{\prime}_0\approx_{\epsilon_d} \bm{M}^{-1}_0.
\end{equation*}
\end{proof}

\begin{mylemma}\label{Dist_Exact_algorithm_guarantee_lemma_1}
Let $\bm{M}_0 = \bm{D}_0 - \bm{A}_0$ be the standard splitting. Further, let $\epsilon_d < \frac{1}{3}\ln2$ in the nverse approximated chain $\mathcal{C} = \{\bm{A}_0,\bm{D}_0,\bm{A}_1, \bm{D}_1,\ldots, \bm{A}_d, \bm{D}_d\}$. Then $\text{DistrESolve}(\{[\bm{M}_0]_{k1},\ldots, [\bm{M}_0]_{kn}\}, [b_0]_k, d, \epsilon)$ requires $\mathcal{O}\left(\log\frac{1}{\epsilon}\right)$ iterations to return the $k^{th}$ component of the $\epsilon$ close approximation for $x^{\star}$.
\end{mylemma}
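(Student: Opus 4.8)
The plan is to reduce the statement to its already-established parallel counterpart, Lemma~\ref{Exact_Alg_guarantee_lemma}, by showing that $\text{DistrESolve}$ (Algorithm~\ref{Alg_ExactBla}) reproduces, component by component, exactly the iterates generated by the parallel scheme $\text{ParallelESolve}$ (Algorithm~\ref{Algo:Inv2}) run on the same inverse approximated chain $\mathcal{C}$. Once this exact equivalence is in hand, the iteration count is inherited verbatim from the parallel analysis, since for the chain $\mathcal{C} = \{\bm{A}_0,\bm{D}_0,\ldots,\bm{A}_d,\bm{D}_d\}$ we have $\epsilon_i = 0$ for $i = 1,\ldots,d-1$, so that $\sum_{i=1}^{d}\epsilon_i = \epsilon_d < \tfrac{1}{3}\ln 2$ and the hypothesis of Lemma~\ref{Exact_Alg_guarantee_lemma} is met.

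First I would verify the two elementary algebraic identities that tie a single $\text{DistrESolve}$ iteration to the corresponding $\text{ParallelESolve}$ iteration. Because $\bm{A}_0$ is the off-diagonal part of the standard splitting, $[\bm{A}_0]_{kj}\neq 0$ only when $\bm{v}_j\in\mathbb{N}_1(\bm{v}_k)$; hence the local update
\begin{equation*}
[\bm{u}^{(1)}_t]_k = [\bm{D}_0]_{kk}[\bm{y}_{t-1}]_k - \sum_{j:\,\bm{v}_j\in\mathbb{N}_1(\bm{v}_k)}[\bm{A}_0]_{kj}[\bm{y}_{t-1}]_j
\end{equation*}
is precisely $[(\bm{D}_0-\bm{A}_0)\bm{y}_{t-1}]_k = [\bm{M}_0\bm{y}_{t-1}]_k$, matching the parallel line $\bm{u}^{(1)}_t = \bm{M}_0\bm{y}_{t-1}$. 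Next, I would invoke the content of Lemma~\ref{Rude_Dec_Alg_guarantee_Lemma}: its proof shows that $\text{DistrRSolve}$ faithfully evaluates the same recurrences $\bm{b}_i = (\bm{I}+(\bm{A}_0\bm{D}_0^{-1})^{2^{i-1}})\bm{b}_{i-1}$ and $\bm{x}_i = \tfrac12[\bm{D}_0^{-1}\bm{b}_i + (\bm{I}+(\bm{D}_0^{-1}\bm{A}_0)^{2^i})\bm{x}_{i+1}]$ that define $\text{ParallelRSolve}$ on $\mathcal{C}$, only distributed across the nodes. Consequently both algorithms realize the \emph{identical} linear operator $\bm{Z}_0$, so $[\bm{u}^{(2)}_t]_k = [\bm{Z}_0\bm{u}^{(1)}_t]_k$ agrees with the parallel $\bm{u}^{(2)}_t$, and likewise $[\chi]_k$ agrees with $\chi$.

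With these identities, a straightforward induction on $t$ finishes the equivalence: assuming $[\bm{y}_{t-1}]_k$ equals the $k$-th entry of the parallel iterate $\bm{y}_{t-1}$ for every $k$, the update $[\bm{y}_t]_k = [\bm{y}_{t-1}]_k - [\bm{u}^{(2)}_t]_k + [\chi]_k$ together with the two identities yields $[\bm{y}_t]_k = [\bm{y}_{t-1} - \bm{u}^{(2)}_t + \chi]_k$, i.e. the $k$-th entry of the parallel iterate $\bm{y}_t$. Hence the output $[\tilde{\bm{x}}]_k = [\bm{y}_q]_k$ of $\text{DistrESolve}$ is exactly the $k$-th component of the vector $\tilde{\bm{x}} = \bm{y}_q$ produced by $\text{ParallelESolve}$, and Lemma~\ref{Exact_Alg_guarantee_lemma} gives the required $q = \mathcal{O}(\log\tfrac{1}{\epsilon})$.

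I expect the main obstacle to be the middle step: arguing that $\text{DistrRSolve}$ and $\text{ParallelRSolve}$ implement the \emph{same} operator rather than two operators that merely agree up to $\approx_{\epsilon_d}$. Lemma~\ref{Rude_Dec_Alg_guarantee_Lemma} as stated only records the approximation $\bm{Z}_0'\approx_{\epsilon_d}\bm{M}_0^{-1}$, so the proof must lean on the exact recurrences exhibited in its proof (and on the symmetry relation $[(\bm{A}_0\bm{D}_0^{-1})^{2^{i-2}}]_{rj} = \tfrac{[\bm{D}_0]_{rr}}{[\bm{D}_0]_{jj}}[(\bm{A}_0\bm{D}_0^{-1})^{2^{i-2}}]_{jr}$ used there) to certify the stronger operator equality; the remaining bookkeeping is routine.
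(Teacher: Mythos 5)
Your proposal is correct and follows essentially the same route as the paper: the paper's own proof likewise identifies the iterations of $\text{DistrESolve}$ with the preconditioned Richardson scheme $\bm{y}_t = [\bm{I}-\bm{Z}'_0\bm{M}_0]\bm{y}_{t-1}+\bm{Z}'_0\bm{b}_0$, invokes Lemma~\ref{Rude_Dec_Alg_guarantee_Lemma} to get $\bm{Z}'_0\approx_{\epsilon_d}\bm{M}_0^{-1}$, and then applies Lemma~\ref{Exact_Alg_guarantee_lemma}. The only difference is presentational: you make the componentwise equivalence with $\text{ParallelESolve}$/$\text{ParallelRSolve}$ explicit, whereas the paper treats that operator identification as already established in the proof of Lemma~\ref{Rude_Dec_Alg_guarantee_Lemma}.
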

\begin{proof}
Notice that iterations in $\text{DistrESolve}$ corresponds to Preconditioned Richardson Iteration:
\begin{equation*}
\bm{y}_t = \left[\bm{I} - \bm{Z}'_0\bm{M}_0\right]\bm{y}_{t-1} + \bm{Z}_0\bm{b}_0
\end{equation*}
where $\bm{Z}'_0$ is the operator defined by $\text{DistrRSolve}$ and $\bm{y}_0 = \bm{0}$. Therefore, from Lemma \ref{Rude_Dec_Alg_guarantee_Lemma}:
\begin{equation*}
\bm{Z}'_0 \approx_{\epsilon_d} \bm{M}^{-1}_0 
\end{equation*}
Finally, applying Lemma \ref{Exact_Alg_guarantee_lemma} of the main submission gives that $\text{DistrESolve}$ algorithm needs $\mathcal{O}\left(\log\frac{1}{\epsilon}\right)$ iterations to $k^{th}$ component of the $\epsilon$ approximated solution for $x^{\star}$.
\end{proof}

\begin{mylemma}\label{time_complexity_of_distresolve}
Let $\bm{M}_0 =\bm{D}_0 - \bm{A}_0$ be the standard splitting. Further, let $\epsilon_d < \frac{1}{3}\ln2$ in the inverse approximated chain $\mathcal{C} = \{\bm{A}_0,\bm{D}_0,\bm{A}_1, \bm{D}_1,\ldots, \bm{A}_d, \bm{D}_d\}$. Then, $\text{DistrESolve}(\{[\bm{M}_0]_{k1},\ldots, [\bm{M}_0]_{kn}\}, [\bm{b}_0]_k, d, \epsilon)$ requires $\mathcal{O}\left(dn^2\log(\frac{1}{\epsilon})\right)$ time steps.
\end{mylemma}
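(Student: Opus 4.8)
The plan is to reduce the statement to two facts that are already in hand: the per-call running time of the crude solver $\text{DistrRSolve}$ and the number of outer iterations that $\text{DistrESolve}$ executes. Concretely, I would first invoke the preceding lemma (the iteration-count guarantee for $\text{DistrESolve}$, Lemma~\ref{Dist_Exact_algorithm_guarantee_lemma}) to fix $q = \mathcal{O}\!\left(\log\frac{1}{\epsilon}\right)$ as the number of passes through the main loop of Algorithm~\ref{Alg_ExactBla}. With $q$ pinned down, the whole argument becomes a per-iteration accounting of cost, multiplied by $q$ and augmented by the one-time initialization.

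Next I would bound the cost of a single iteration $t$ by inspecting its three lines. The computation of $\left[\bm{u}^{(1)}_{t}\right]_k = [\bm{D}_0]_{kk}[\bm{y}_{t-1}]_k - \sum_{j:\bm{v}_j\in\mathbb{N}_1(\bm{v}_k)}[\bm{A}_0]_{kj}[\bm{y}_{t-1}]_j$ is a single application of $\bm{M}_0$ to the current iterate; since $\bm{M}_0$ has a sparsity pattern corresponding to the $1$-hop neighborhood, node $\bm{v}_k$ only needs $[\bm{y}_{t-1}]_j$ from its immediate neighbors, so this line costs $\mathcal{O}(n)$ local work and $\mathcal{O}(1)$ communication. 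The computation of $\left[\bm{u}^{(2)}_{t}\right]_k$ is a fresh call to $\text{DistrRSolve}$, which by Lemma~\ref{Rude_Dec_Alg_guarantee_Lemma} runs in $\mathcal{O}(dn^2)$ time steps. The update $[\bm{y}_t]_k = [\bm{y}_{t-1}]_k - \left[\bm{u}^{(2)}_{t}\right]_k + [\chi]_k$ is a componentwise vector operation costing $\mathcal{O}(1)$ per node. Thus one iteration costs $\mathcal{O}(dn^2 + n + 1) = \mathcal{O}(dn^2)$, dominated entirely by the $\text{DistrRSolve}$ call.

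Finally I would assemble the total. The initialization line $[\chi]_k = \text{DistrRSolve}(\dots)$ contributes one additional $\mathcal{O}(dn^2)$ term, and the loop contributes $q$ copies of the per-iteration bound, so the overall cost is $\mathcal{O}(dn^2) + q\cdot\mathcal{O}(dn^2) = \mathcal{O}\!\left(dn^2\log\frac{1}{\epsilon}\right)$, using $q = \mathcal{O}\!\left(\log\frac{1}{\epsilon}\right)$. This would close the lemma.

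The routine part is clear; the one point that actually requires care is verifying that the two auxiliary lines (the $\bm{M}_0\bm{y}_{t-1}$ product and the vector update) are genuinely subdominant and, crucially, that they respect the distributed/local communication model so that no hidden $\mathcal{O}(n^2)$ or $\text{diam}(\mathbb{G})$ term sneaks in outside the $\text{DistrRSolve}$ call. The cleanest way to handle this is to note that $\bm{M}_0$ is $1$-hop sparse, so the matrix-vector product uses only one round of neighbor exchange, and the update is purely local; everything heavier is already folded into the $\mathcal{O}(dn^2)$ bound of Lemma~\ref{Rude_Dec_Alg_guarantee_Lemma}, which itself absorbs the $\text{diam}(\mathbb{G})\le n$ communication cost.
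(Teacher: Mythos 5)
Your proposal is correct and follows essentially the same route as the paper: the paper's proof simply combines the $\mathcal{O}(dn^2)$ per-call cost of $\text{DistrRSolve}$ (Lemma~\ref{Rude_Dec_Alg_guarantee_Lemma}) with the $\mathcal{O}\left(\log\frac{1}{\epsilon}\right)$ iteration count to conclude. Your additional bookkeeping for the initialization, the $1$-hop matrix-vector product, and the componentwise update only makes explicit what the paper leaves implicit, namely that these steps are subdominant.
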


\begin{proof}
Each iteration of $\text{DistrESolve}$ algorithm calls $\text{DistRSolve}$ routine, therefore, using Lemmas \ref{Rude_Dec_Alg_guarantee_Lemma} and \ref{Dist_Exact_algorithm_guarantee_lemma_1} the total time complexity of f $\text{DistrESolve}$ algorithm is $\mathcal{O}\left(dn^2\log(\frac{1}{\epsilon})\right)$ time steps

\end{proof}

\begin{claim}
Let $\kappa$ be the condition number of $\bm{M}_0 = \bm{D}_0 - \bm{A}_0$, and $\{\lambda_i\}^{n}_{i=1}$ denote the eigenvalues of $\bm{D}^{-1}_0\bm{A}_0$. Then,
$|\lambda_i| \le 1 - \frac{1}{\kappa}$,  
for all $i=1,\ldots, n$
\end{claim}
\begin{proof}
See Proposition $5.3$ in ~\cite{Spielman}. 
\end{proof}

\begin{claim}
Let $\bm{M}$ be an SDDM matrix and consider the splitting $\bm{M} =\bm{D} -\bm{A}$, with $\bm{D}$ being non negative diagonal and $\bm{A}$ being symmetric non negative. Further, assume that the eigenvalues of $\bm{D}^{-1}\bm{A}$ lie between $-\alpha$ and $\beta$. Then, $
(1 - \beta)\bm{D} \preceq \bm{D} -\bm{A} \preceq (1 + \alpha)\bm{D}$. 
\end{claim}
\begin{proof} 
See Proposition $5.4$ in ~\cite{Spielman}.
\end{proof}

\begin{mylemma}\label{Dist_Exact_algorithm_guarantee_lemma}
Let $\bm{M}_0 = \bm{D}_0 - \bm{A}_0$ be the standard splitting. Further, let $\epsilon_d < \sfrac{1}{3}\ln2$. Then Algorithm~\ref{Alg_ExactRHop} requires $\mathcal{O}\left(\log\frac{1}{\epsilon}\right)$ iterations to return the $k^{th}$ component of the $\epsilon$ close approximation to $\bm{x}^{\star}$.  
\end{mylemma}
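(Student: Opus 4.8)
The plan is to mirror the argument already used for the full-communication solver in Lemma~\ref{Dist_Exact_algorithm_guarantee_lemma_1}, since \text{EDistRSolve} (Algorithm~\ref{Alg_ExactRHop}) differs from \text{DistrESolve} only in the \emph{routine} invoked to apply the crude inverse operator ($\text{RDistRSolve}$ in place of $\text{DistrRSolve}$), while the underlying linear \emph{operator} it realizes is identical. First I would unwind one pass of the main loop of Algorithm~\ref{Alg_ExactRHop} to exhibit it as a preconditioned Richardson iteration. Its first line computes $[\bm{u}^{(1)}_t]_k = [\bm{D}_0]_{kk}[\bm{y}_{t-1}]_k - \sum_{j:\bm{v}_j\in\mathbb{N}_1(\bm{v}_k)}[\bm{A}_0]_{kj}[\bm{y}_{t-1}]_j$, which is exactly the $k^{th}$ component of $\bm{M}_0\bm{y}_{t-1}$, because $\bm{M}_0 = \bm{D}_0 - \bm{A}_0$ and $\bm{A}_0$ has the $1$-hop sparsity pattern. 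The second line applies $\text{RDistRSolve}$, i.e. the operator $\bm{Z}'_0$, yielding $\bm{u}^{(2)}_t = \bm{Z}'_0\bm{M}_0\bm{y}_{t-1}$, and the third line sets $\bm{y}_t = \bm{y}_{t-1} - \bm{u}^{(2)}_t + \bm{\chi}$ with $\bm{\chi} = \bm{Z}'_0\bm{b}_0$. Collecting these gives
\begin{equation*}
\bm{y}_t = \left(\bm{I} - \bm{Z}'_0\bm{M}_0\right)\bm{y}_{t-1} + \bm{Z}'_0\bm{b}_0, \qquad \bm{y}_0 = \bm{0},
\end{equation*}
which is precisely the preconditioned Richardson scheme of Algorithm~\ref{Algo:Inv2} with preconditioner $\bm{Z}'_0$.

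Next I would invoke the guarantee already established for the R-hop crude solver. By Lemma~\ref{r_hop_rude_lemma}, the operator $\bm{Z}'_0$ defined by $\text{RDistRSolve}$ satisfies $\bm{Z}'_0 \approx_{\epsilon_d} \bm{M}_0^{-1}$. Moreover, for the specific chain $\mathcal{C} = \{\bm{A}_0,\bm{D}_0,\ldots,\bm{A}_d,\bm{D}_d\}$ used here all intermediate approximation parameters vanish ($\epsilon_0 = \cdots = \epsilon_{d-1} = 0$), so the chain's total error $\sum_{i=0}^d \epsilon_i$ collapses to $\epsilon_d$. Under the hypothesis $\epsilon_d < \tfrac{1}{3}\ln 2$ the accumulated error therefore satisfies the precondition $\sum_{i=1}^d \epsilon_i < \tfrac{1}{3}\ln 2$ required by Lemma~\ref{Exact_Alg_guarantee_lemma}.

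Finally, applying Lemma~\ref{Exact_Alg_guarantee_lemma} to the Richardson recursion with preconditioner $\bm{Z}'_0 \approx_{\epsilon_d}\bm{M}_0^{-1}$ shows that $q = \mathcal{O}\left(\log\tfrac{1}{\epsilon}\right)$ iterations suffice to drive $\bm{y}_q$ to an $\epsilon$-close solution of $\bm{M}_0\bm{x} = \bm{b}_0$; in particular node $\bm{v}_k$ returns the $k^{th}$ component $[\bm{y}_q]_k$ of that solution. The one point requiring care, rather than a genuine obstacle, is that $\text{RDistRSolve}$ realizes \emph{exactly} the same linear operator $\bm{Z}'_0$ as the full-communication crude solver, so that the spectral guarantee $\bm{Z}'_0\approx_{\epsilon_d}\bm{M}_0^{-1}$ transfers unchanged; this is precisely the content of Lemma~\ref{r_hop_rude_lemma}, whose proof confirms that restricting to $R$-hop communication alters only the distribution of the computation and not the matrix being applied. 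Consequently the iteration count is identical to the full-communication case, namely $\mathcal{O}\left(\log\tfrac{1}{\epsilon}\right)$.
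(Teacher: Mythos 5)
Your proposal is correct and follows essentially the same route as the paper's own proof: rewrite the loop of Algorithm~\ref{Alg_ExactRHop} as the preconditioned Richardson iteration $\bm{y}_t = (\bm{I}-\bm{Z}'_0\bm{M}_0)\bm{y}_{t-1}+\bm{Z}'_0\bm{b}_0$, invoke Lemma~\ref{r_hop_rude_lemma} for $\bm{Z}'_0\approx_{\epsilon_d}\bm{M}_0^{-1}$, and conclude via Lemma~\ref{Exact_Alg_guarantee_lemma}. Your added checks (that the first line computes $[\bm{M}_0\bm{y}_{t-1}]_k$ and that the chain's accumulated error collapses to $\epsilon_d$) are just a more careful spelling-out of the same argument.
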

\begin{proof}
Please note that the iterations of $\text{EDistRSolve}$ correspond to a distributed version of the  preconditioned Richardson iteration scheme
\begin{equation*}
\bm{y}_{t} = [\bm{I} - \bm{Z}^{\prime}_0\bm{M}_0]\bm{y}_{t-1} + \bm{Z}^{\prime}_0\bm{b}_0
\end{equation*}
with $\bm{y}_0 = 0$ and $\bm{Z}^{\prime}_0$ being the operator defined by $\text{RDistRSolve}$. From Lemma~\ref{r_hop_rude_lemma} it is clear that
$\bm{Z}'_0 \approx_{\epsilon_d}\bm{M}^{-1}_0$. 
Applying Lemma~\ref{Exact_Alg_guarantee_lemma}, provides that $\text{EDistRSolve}$ requires $ \mathcal{O}\left(\log\sfrac{1}{\epsilon}\right)$ iterations to return the $k^{th}$ component of the $\epsilon$ close approximation to $\bm{x}^{\star}$. Finally, since $\text{EDistRSolve}$ uses procedure $\text{RDistRSolve}$ as a subroutine, it follows that for each node $\bm{v}_k$ only communication between the R-hope neighbors is allowed. 
\end{proof} 

\begin{mylemma}\label{time_complexity_of_distresolve}
Let $\bm{M}_0 = \bm{D}_0 - \bm{A}_0$ be the standard splitting and let $\epsilon_d < \sfrac{1}{3}\ln 2 $, then $\text{EDistRSolve}$ requires 
$\mathcal{O}\left(\left(\sfrac{2^d}{R}\alpha + \alpha Rd_{max}\right)\log\left(\sfrac{1}{\epsilon}\right)\right)$ time steps. Moreover, for each node $\bm{v}_k$, $\text{EDistRSolve}$ only uses information from the R-hop neighbors.
\end{mylemma}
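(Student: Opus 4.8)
The plan is to obtain the total running time as the product of the per-iteration cost and the number of iterations, leaning entirely on the two results already established for the R-hop solver. First I would invoke Lemma~\ref{Dist_Exact_algorithm_guarantee_lemma}, which guarantees that under the hypothesis $\epsilon_d < \sfrac{1}{3}\ln 2$ the outer loop of $\text{EDistRSolve}$ runs for $q = \mathcal{O}\!\left(\log\sfrac{1}{\epsilon}\right)$ iterations. It therefore suffices to bound the cost of a single iteration, together with the one-time initialization, by $\mathcal{O}\!\left(\sfrac{2^d}{R}\alpha + \alpha R d_{max}\right)$ time steps.

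Next I would dissect one pass of the loop into its three lines. The update $[\bm{y}_t]_k = [\bm{y}_{t-1}]_k - [\bm{u}^{(2)}_t]_k + [\bm{\chi}]_k$ is purely local and costs $\mathcal{O}(1)$. The line $[\bm{u}^{(1)}_t]_k = [\bm{D}_0]_{kk}[\bm{y}_{t-1}]_k - \sum_{j:\bm{v}_j\in\mathbb{N}_1(\bm{v}_k)}[\bm{A}_0]_{kj}[\bm{y}_{t-1}]_j$ is exactly the $k^{th}$ component of $\bm{M}_0\bm{y}_{t-1}$; it requires only the values held by the $1$-hop neighbors of $\bm{v}_k$ and a sum of at most $d_{max}$ terms, hence $\mathcal{O}(d_{max})$ time. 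The dominant line is $[\bm{u}^{(2)}_t]_k = \text{RDistRSolve}(\cdot, [\bm{u}^{(1)}_t]_k, d, R)$, a single invocation of $\text{RDistRSolve}$, which by Lemma~\ref{r_hop_rude_lemma} costs $\mathcal{O}\!\left(\sfrac{2^d}{R}\alpha + \alpha R d_{max}\right)$ time steps. Since $\alpha, R \ge 1$, the matrix-vector term $\mathcal{O}(d_{max})$ is absorbed into $\mathcal{O}(\alpha R d_{max})$, so the per-iteration cost equals $\mathcal{O}\!\left(\sfrac{2^d}{R}\alpha + \alpha R d_{max}\right)$.

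Multiplying this per-iteration cost by $q = \mathcal{O}\!\left(\log\sfrac{1}{\epsilon}\right)$, and adding the single $\text{RDistRSolve}$ call performed in the initialization to form $[\bm{\chi}]_k$ (which is of the same order and thus absorbed), yields the claimed bound $\mathcal{O}\!\left(\left(\sfrac{2^d}{R}\alpha + \alpha R d_{max}\right)\log\sfrac{1}{\epsilon}\right)$. For the locality claim I would observe that every communication step of $\text{EDistRSolve}$ occurs either inside a call to $\text{RDistRSolve}$, which by Lemma~\ref{r_hop_rude_lemma} uses only R-hop communication, or in the $1$-hop exchange needed to form $[\bm{u}^{(1)}_t]_k$; since $1 \le R$, all such communication stays within the R-hop neighborhood of $\bm{v}_k$.

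I do not anticipate a genuine obstacle, as the substance is carried by Lemmas~\ref{r_hop_rude_lemma} and~\ref{Dist_Exact_algorithm_guarantee_lemma}. The only point requiring care is confirming that the auxiliary matrix-vector product and the local vector update are both dominated by the $\text{RDistRSolve}$ cost and introduce no communication beyond R-hop, so that the aggregate asymptotics and the locality guarantee are preserved.
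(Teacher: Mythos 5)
Your proposal is correct and follows essentially the same route as the paper: the per-iteration cost comes from Lemma~\ref{r_hop_rude_lemma}, the iteration count from Lemma~\ref{Dist_Exact_algorithm_guarantee_lemma}, and their product gives the bound, with R-hop locality inherited from the $\text{RDistRSolve}$ subroutine. Your additional bookkeeping for the $1$-hop matrix-vector product and the local update is a welcome (if implicit in the paper) verification that these steps are dominated.
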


\begin{proof}
Notice that at each iteration $\text{EDistRSolve}$ calls  $\text{RDistRSolve}$ as a subroutine, therefore, for each node $v_k$ only R-hop communication is allowed. Lemma \ref{r_hop_rude_lemma} gives that the time complexity of each iteration is $\mathcal{O}\left(\frac{2^d}{R}\alpha + \alpha Rd_{max}\right)$, and using Lemma~\ref{Dist_Exact_algorithm_guarantee_lemma} immediately gives that the time complexity of $\mathcal{O}\left(\left(\sfrac{2^d}{R}\alpha + \alpha Rd_{max}\right)\log\left(\sfrac{1}{\epsilon}\right)\right)$.
\end{proof}

\begin{mylemma}\label{eps_d_lemma_2}
Let $\bm{M}_0 = \bm{D}_0 - \bm{A}_0$ be the standard splitting and let $\kappa$ denote the condition number $\bm{M}_0$. Consider the inverse approximated chain $\mathcal{C} = \{\bm{A}_0,\bm{D}_0,\bm{A}_1, \bm{D}_1,\ldots, \bm{A}_d, \bm{D}_d\}$ with length $d =  \lceil \log \left(2\ln\left(\frac{\sqrt[3]{2}}{\sqrt[3]{2} - 1}\right)\kappa\right)\rceil$, then 
$\bm{D}_0\approx_{\epsilon_d} \bm{D}_0 - \bm{D}_0\left(\bm{D}^{-1}_0\bm{A}_0\right)^{2^d}$, 
with $\epsilon_d < \sfrac{1}{3}\ln2$.
\end{mylemma}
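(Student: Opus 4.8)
The plan is to mirror the argument already given in Section~\ref{Sec:Length}, reducing the statement to a geometrically decaying bound on the spectral radius of $\left(\bm{D}^{-1}_0\bm{A}_0\right)^{2^d}$ and then converting that spectral bound into a two-sided Loewner estimate. First I would record the two ingredients supplied by~\cite{Spielman}: the eigenvalue bound $|\lambda_i|\le 1-\frac{1}{\kappa}$ on the eigenvalues $\{\lambda_i\}$ of $\bm{D}^{-1}_0\bm{A}_0$ (Proposition~5.3 of~\cite{Spielman}), and the conversion statement that if the eigenvalues of $\bm{D}^{-1}\bm{A}$ lie in $[-\alpha,\beta]$ then $(1-\beta)\bm{D}\preceq\bm{D}-\bm{A}\preceq(1+\alpha)\bm{D}$ (Proposition~5.4 of~\cite{Spielman}).

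Next I would identify the relevant matrix. Since $\bm{D}_d=\bm{D}_0$ and $\bm{A}_d=\bm{D}_0\left(\bm{D}^{-1}_0\bm{A}_0\right)^{2^d}$, we have $\bm{D}^{-1}_d\bm{A}_d=\left(\bm{D}^{-1}_0\bm{A}_0\right)^{2^d}$, whose eigenvalues are exactly $\lambda_i^{2^d}$. By the eigenvalue bound, each of these lies in absolute value below $\gamma:=\left(1-\frac{1}{\kappa}\right)^{2^d}$, so the spectrum of $\bm{D}^{-1}_d\bm{A}_d$ is contained in $[-\gamma,\gamma]$. Applying the conversion statement with $\alpha=\beta=\gamma$ then yields
\begin{equation*}
(1-\gamma)\bm{D}_0\preceq \bm{D}_0-\bm{D}_0\left(\bm{D}^{-1}_0\bm{A}_0\right)^{2^d}\preceq (1+\gamma)\bm{D}_0 .
\end{equation*}
Comparing with the definition of $\approx_{\epsilon_d}$, it suffices to pick $\epsilon_d$ with $e^{-\epsilon_d}\le 1-\gamma$ and $e^{\epsilon_d}\ge 1+\gamma$; since $\ln\frac{1}{1-\gamma}\ge\ln(1+\gamma)$ for $\gamma\in(0,1)$, the binding requirement is $\epsilon_d\ge\ln\frac{1}{1-\gamma}$, and I would set $\epsilon_d=\ln\frac{1}{1-\gamma}$.

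The final and most delicate step is the quantitative choice of $d$ forcing $\epsilon_d<\frac{1}{3}\ln2$. Writing $d=\lceil\log(c\kappa)\rceil$ with $c=2\ln\left(\frac{\sqrt[3]{2}}{\sqrt[3]{2}-1}\right)$, the ceiling guarantees $2^d\ge c\kappa$, so that $\gamma=\left(1-\frac{1}{\kappa}\right)^{2^d}\le\left(1-\frac{1}{\kappa}\right)^{c\kappa}\le e^{-c}$ via the standard inequality $\left(1-\frac{1}{\kappa}\right)^{\kappa}\le e^{-1}$. Hence $\epsilon_d\le\ln\frac{e^c}{e^c-1}$, and I would close by verifying $\ln\frac{e^c}{e^c-1}<\frac{1}{3}\ln2=\ln\sqrt[3]{2}$, which is equivalent to $e^c>\frac{\sqrt[3]{2}}{\sqrt[3]{2}-1}$ and holds with room to spare for this $c$. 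The main obstacle is thus purely this closing numerical verification, together with the bookkeeping of the ceiling ensuring $2^d\ge c\kappa$; the structural part of the argument is routine once the two cited propositions are invoked.
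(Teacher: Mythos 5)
Your proposal is correct and follows essentially the same route as the paper's own argument: the paper proves this lemma by reference to the proof in Section~\ref{Sec:Length}, which likewise invokes the eigenvalue bound $|\lambda_i|\le 1-\frac{1}{\kappa}$ and the spectral-to-Loewner conversion (Propositions 5.3 and 5.4 of~\cite{Spielman}), sets $\epsilon_d=\ln\frac{1}{1-\gamma}$ with $\gamma=\left(1-\frac{1}{\kappa}\right)^{2^d}\le e^{-c}$, and closes with the same numerical check on $c=2\ln\left(\frac{\sqrt[3]{2}}{\sqrt[3]{2}-1}\right)$. No gaps.
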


\begin{proof}
The proof takes similar steps as in Lemma \ref{eps_d_lemma}.
\end{proof}

\end{document}